\crefname{figure}{Fig.}{Figs.}
\crefname{definition}{Def.}{Defs.}
\crefname{defn}{Def.}{Defs.}
\crefname{equation}{Eq.}{Eqs.}
\crefname{theorem}{Thm.}{Thms.}
\crefname{thm}{Thm.}{Thms.}
\crefname{lemma}{Lem.}{Lems.}
\crefname{lem}{Lem.}{Lems.}
\crefname{corollary}{Cor.}{Cors.}
\crefname{cor}{Cor.}{Cors.}
\crefname{proposition}{Prop.}{Props.}
\crefname{prop}{Prop.}{Props.}
\crefname{enumi}{Item}{Items}
\crefname{example}{Example}{E.g.}
\crefname{section}{Sec.}{Secs.}
\crefname{table}{Table}{Tables}
\crefname{appendix}{Appendix}{Apps.}
\crefname{lstlisting}{List.}{List.}
\Crefname{lstlisting}{List.}{List.}
\theoremstyle{plain}
\theoremstyle{definition}
\theoremstyle{remark}
\theoremstyle{plain}
\newtheorem*{thm*}{Theorem}
\newtheorem*{lem*}{Lemma}
\newtheorem*{prop*}{Proposition}
\newtheorem*{cor*}{Corollary}
\theoremstyle{definition}
\newtheorem*{defn*}{Definition}
\newtheorem*{conj*}{Conjecture}
\newtheorem*{exmp*}{Example}
\newenvironment{thmbiss}[1]{\begin{trivlist}
\item[\hskip \labelsep {\bfseries Theorem #1}]
\it}{\end{trivlist}}
\tikzstyle{block} = [rectangle, draw=black,
\tikzstyle{line} = [draw, thick, color=black!90, -latex']
\newcommand{\qedmaybe}{}
\newcommand{\subsectionmaybe}[1]{\paragraph{#1}}
\newcommand{\diam}[1]{\langle #1 \rangle}
\newcommand{\mx}{\textsf{max}~} %how about adding space for mx?
\newcommand{\mend}{\textsf{end}}
\newcommand{\succeed}{\texttt{success}}
\newcommand{\fa}{\textsf{f}\xspace}
\newcommand{\su}{\textsf{s}\xspace}
\newcommand{\re}{\textsf{r}\xspace}
\begin{document}

% \title{The Ugly Truth about Monitorability}
% \title{A Fi\$tful of Monitorability Framework\$ (or, Monitorability Beyond Hope) (or, The Fellowship of the Monitorable)}
% \title{The Fellowship of the Monitor}
\title{An Operational Guide to Monitorability}
\author{L. Aceto\inst{1}\inst{2}\and  A. Achilleos\inst{2} \and A. Francalanza\inst{3}\and A. Ing\'{o}lfsd\'{o}ttir\inst{2}\and K. Lehtinen\inst{4}}

\institute{Gran Sasso Science Institute, L'Aquila, Italy
\and
Reykjavik University, Reykjavik, Iceland
\and
University of Malta, Msida,  Malta
\and
University of Liverpool, Liverpool, UK}

\maketitle

\begin{abstract}
Monitorability
%underpins the runtime verification technique because it
delineates what properties can be verified at runtime.
Although many monitorability definitions exist, few are defined explicitly in terms of the guarantees provided by monitors, \ie the computational entities carrying out the verification.
%
%Thus, they are of limited use to tool development.
%
% , yet there is little clarity in the state of the art regarding wheneach definition of monitorability is a good choice for guiding tool development, and how this choice affects tool behaviour.
%
%We put monitors back into monitorability, and view monitorability as a spectrum:
We %consider monitors to be key to the study of monitorability, and
view monitorability as a spectrum:
the fewer monitor guarantees that are required, the more properties become monitorable.
%
 % Following this principle, w
 We present a monitorability hierarchy and provide operational and syntactic characterisations for its levels.
%We present a monitorability hierarchy and provide an operational characterisation for each of its levels, and syntactic characterisations in \UHML to all bar one.
%
Existing monitorability definitions are mapped into our hierarchy, providing a unified framework that makes the operational assumptions and guarantees of each definition explicit.
This  provides a rigorous foundation that can inform design choices and correctness claims for runtime verification tools.
%the design of RV tools to be based on a and their trade-offs for RV tools on a
%This hierarchy better serves
 %the tradeoffs that need to be taken during the design and construction of a runtime verification tool.
%
% OK K, breath in, breath out...
%
% We also show that the definitions of monitorability proposed by Falcone, Fernandez and Mournier do not, contrary to what the authors claim, coincide with safety and co-safety properties.
\end{abstract}

\section{Introduction}
\label{sec:introduction}
% !TEX root = main.tex

 %It is well established that
 Any sufficiently expressive specification language contains properties that cannot be monitored at runtime~\cite{MannaPnueli:91:TCS,PnueliZaks:06:FM,Diekert2014,CiniFrancalanza:15:TACAS,FraAI:17:FMSD,AceAFI:18:FOSSACS,AcetoAFIL19}.
For instance, the \emph{satisfaction} of a safety property (``bad things never happen") cannot, in general, be determined by observing the (finite) behaviour of a program up to the current execution point; its \emph{violation}, however, can.
\emph{Monitorability}~\cite{BartocciFFR:18:RVIntro} concerns itself with the delineation between properties that are monitorable and those that are not.
%
%Besides its importance from a foundational perspective,   %who cares about foundations, we need spaaaaaace!
Monitorability is paramount for a slew of  Runtime Verification (RV) tools, such as those described in
\cite{chen-rosu-2007-oopsla,jUnitRv:13,RegerCR15,Attard:16:RV,NeykovaBY17} (to name but a few), that synthesise monitors from specifications expressed in a variety of logics.
These  monitors are then executed with the system under scrutiny to produce verdicts concerning the satisfaction or violation of the specifications from which they were synthesised.

Monitorability is crucial for a principled approach because it disciplines the construction of RV tools.
%
% First of all,
It defines, either explicitly or implicitly, a notion of \emph{monitor correctness}, which guides the automated synthesis.
It also delimits the
% (typically simpler)
monitorable fragment of the specification logic on which the synthesis is defined: monitors need not be synthesised for non-monitorable specifications.
%
%A smaller (monitorable) fragment of the specification logic may, in turn, also \emph{simplify the implementation} of the synthesis algorithm since there are fewer operator combinations to contend with.
%
In some settings, a
% (maximally expressive)
syntactic characterisation of monitorable properties can be identified~\cite{FraAI:17:FMSD,AceAFI:17:FSTTCS,AcetoAFIL19}, and used as a \emph{core calculus} for studying %formulations and
optimisations of the synthesis algorithm.
%
%
%
%
%
%On top of this, convenient abstractions that enhance the usability and adoption of the tool---such as those enabling the concise expression of properties---can also be defined and shown to be expressible in the core calculus (via encodings).
% encodable in the core calculus via syntactic sugaring.
%
More broadly, monitorability boundaries may
% assist in the design of the monitoring set-up, and 
guide the design of \textit{hybrid} verification strategies, which combine RV with other verification techniques (see the work in \cite{AceAFI:18:FOSSACS} for an example of this approach).
%
% For instance, in \cite{AceAFI:18:FOSSACS}, the authors establish a correspondence between occurrences of certain logical operators in a specification
% %formula
% and the number of system runs necessary to obtain enough observations as to make the formula monitorable.
% %
% A monitoring set-up for such properties would then need to target more than one system execution, or assign part of the verification effort to more powerful (and costly) techniques.
% \footnote{Typically, monitorability setups (implicitly) target a \emph{single} system execution.}
%
%More generally, in settings where multiple verification techniques are used, monitorability can guide verification engineers in deciding \emph{when to use an RV tool} and when to use other verification tools such as model checkers (for non-monitorable properties).

In spite of its importance, there is \emph{no} generally accepted notion of monitorability to date.
The literature contains a number of definitions, such as the ones proposed in \cite{FalconeFernandezMounier:STTT:12,PnueliZaks:06:FM,bauer2011runtime,FraAI:17:FMSD,AcetoAFIL19}.
These differ in aspects such as
the adopted specification formalism,
%the specification logic used,
\eg LTL, Street automata, \UHML \etc, the operational model, \eg testers, automata, process calculi \etc,  and the semantic domain, \eg infinite traces, finite and infinite (finfinite) traces or labelled transition systems.
Even after these differences are normalised, many of these definitions are \emph{not} in agreement: there are properties that are monitorable according to some definitions but \emph{not} monitorable according to others.
% (\Cref{ex:intro}).
%
More alarmingly, as we will show, frequently cited definitions of monitorability
% by Falcone \etal~\cite{FalconeFernandezMounier:STTT:12}
contain serious errors.

This
%unfortunate
discrepancy between definitions raises the question of \emph{which one} to adopt to inform one's implementation of an RV tool, and \emph{what effect} this choice has on the behaviour of the resulting tool.
A difficulty in informing this choice is that few of those definitions make explicit the relationship between the operational model, \ie the behaviour of a monitor, and the monitored properties.
In other words, it is not clear what the guarantees provided by the various monitors mentioned in the literature are, and how they differ from each other.

%The exception to this is Aceto \etal~\cite{AcetoAFIL19} who use a notion of sound and satisfaction- or violation- complete monitor: these are monitors that correctly recognise either all violating or all satisfying traces of a property.  The classes of satisfaction or violation monitorable properties are then those for which such monitors exist. However, this does not tell the whole story, since it doesn't accound for what guarantees Pnueli and Zaks' monitors (there testers) provide.
%For example, while Pnueali and Zaks's testers and Kim and Viswanathan's $\omega$-automata with storage relate to different sets of monitorable properties... something

\begin{example}\label{ex:intro}
Consider the runtime verification of a system exhibiting (only) three events over finfinite traces: failure (\fa), success (\su) and recovery (\re).
One property we may require is that \emph{``failure never occurs and eventually success is reached''}, otherwise expressed in LTL fashion as
\begin{math}
  (\ltlG\,\neg\fa)\wedge(\ltlF\,\su)
\end{math}.
According to the definition of monitorability attributed to Pnueli and Zaks~\cite{PnueliZaks:06:FM}
%(see discussion in
(discussed in
\Cref{sec:pz}), this property is monitorable. However, it is not monitorable according to others, including Schneider~\cite{schneider2000enforceable},  Viswanathan and Kim~\cite{viswanathan2004foundations}, and Aceto \etal~\cite{AcetoAFIL19},  whose definition of monitorability coincides with some subset of \textit{safety properties}.  \exqed
\end{example}

\paragraph{Contributions.}
To our mind, this state of the art is unsatisfactory for tool construction.
An RV tool broadly
%consists the following components:
relies on the following ingredients:
$(i)$ the \emph{input} of the tool in terms of
%the logic syntax describing
the formalism used to describe the specification properties;
$(ii)$ the executable description of monitors that are the tool's \emph{output} and $(iii)$ the \emph{mapping} between the inputs and outputs, \ie the synthesis function of monitors from specifications.
Any account on monitorability should, in our view, shed light on those three aspects,  particularly on what it means for the synthesis function and the monitors it produces to be \textit{correct}.
In addition, if this account is flexible enough to incorporate a variety of relationships between specification properties and the expected behaviour of monitors, it can then be used by the tool implementors as a principled foundation to guide their design decisions.

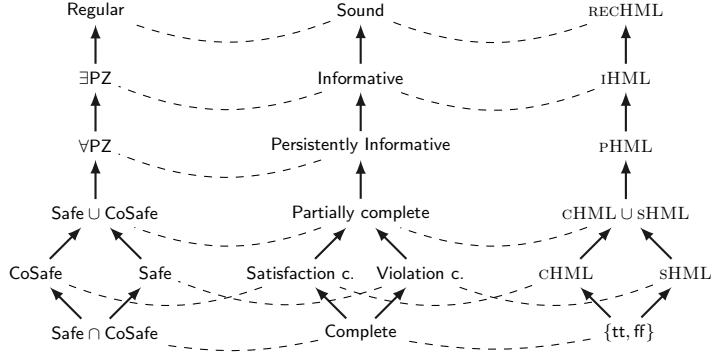
\begin{figure}[t]
  % \begin{wrapfigure}{l}
    %{0.5\textwidth}
  	% {0.8\textwidth}
      \vspace{-0.25cm}
    \centering
    \begin{tikzpicture}[>=latex,auto,scale=0.75, every node/.style={scale=0.75}]
     \begin{scope}[block, node distance=1.2cm]
       \node (fourFFM) {\clsReg}; %\clsAll};
       \node[below of = fourFFM] (epz)  {\clsEPZ};
       \node[below of = epz] (upz)  {\clsUPZ};
       \node[below of = upz] (threeFFM)  {$\quad\clsSafe\cup\clsCSafe$};
       \node[below left of = threeFFM, node distance=1.5cm] (twoFFMY)  {\clsCSafe};
       \node[below right of = threeFFM, node distance=1.5cm] (twoFFMN)  {\clsSafe};
       \node[below right of = twoFFMY, node distance=1.5cm] (Complete)  {$\quad\clsSafe\cap\clsCSafe$};
       \node[right of = fourFFM, xshift=3.5cm] (sound) {\lclsS};
       \node[below of = sound] (icmp)  {\lclsUC};
       \node[below of = icmp] (picmp)  {\lclsSUC};
       \node[below of = picmp] (pcmp)  {\lclsPC};
       \node[below left of = pcmp, node distance=1.5cm] (scmp)  {\lclsSC};
       \node[below right of = pcmp, node distance=1.5cm] (vcmp)  {\lclsVC};
       \node[below right of = scmp, node distance=1.5cm] (cmp)  {\lclsCC};
		\node[right of = sound, xshift=3.5cm] (rhml) {\UHML};
		\node[below of = rhml] (iml)  {\IHML};
		\node[below of = iml] (piml)  {\PIHML};
		\node[below of = piml] (mhml)  {$\CHML\cup\SHML$};
		\node[below left of = mhml, node distance=1.5cm] (chml)  {\CHML};
		\node[below right of = mhml, node distance=1.5cm] (shml)  {\SHML};
		\node[below right of = chml, node distance=1.5cm] (ttff)  {$~\{\hTru,\hFls \}$};
     \end{scope}

    \begin{scope}
      % [line]
     \path[line,<-] (sound) -- (icmp);
     \path[line,<-] (icmp) -- (picmp);
     \path[line,<-] (picmp) -- (pcmp);
     \path[line,<-] (pcmp) -- (scmp);
     \path[line,<-] (pcmp) -- (vcmp);
     \path[line,<-] (scmp) -- (cmp);
     \path[line,<-] (vcmp) -- (cmp);
      \path[line,<-] (fourFFM) -- (epz);
      \path[line,<-] (epz) -- (upz);
      \path[line,<-] (upz) -- (threeFFM);
      \path[line,<-] (threeFFM) -- (twoFFMY);
      \path[line,<-] (threeFFM) -- (twoFFMN);
      \path[line,<-] (twoFFMY) -- (Complete);
      \path[line,<-] (twoFFMN) -- (Complete);
	\path[line,<-] (rhml) -- (iml);
	\path[line,<-] (iml) -- (piml);
	\path[line,<-] (piml) -- (mhml);
	\path[line,<-] (mhml) -- (chml);
	\path[line,<-] (mhml) -- (shml);
	\path[line,<-] (chml) -- (ttff);
	\path[line,<-] (shml) -- (ttff);
    \end{scope}

    \path[dashed] (sound) edge [bend left=20]  (fourFFM)
    (icmp) edge [bend left=20]  (epz)
    (picmp) edge [bend left=20]  (upz)
    (pcmp) edge [bend left=20]  (threeFFM)
    (scmp) edge [bend left=20]  (twoFFMY)
    (vcmp) edge [bend left=20]  (twoFFMN)
    (cmp) edge [bend left=10]  (Complete);

    \path[dashed] (rhml) edge [bend left=20]  (sound)
    (iml) edge [bend left=20]  (icmp)
%    (piml) edge [bend left=20]  (upz)
    (mhml) edge [bend left=20]  (pcmp)
    (chml) edge [bend left=20]  (scmp)
    (shml) edge [bend left=20]  (vcmp)
    (ttff) edge [bend left=10]  (cmp);
    \end{tikzpicture}
  \caption{The Monitorability Hierarchy of Regular Properties\label{fig:hierarchy}}
  \vspace{-0.25cm}
% \end{wrapfigure}
\end{figure}

For these reasons, we take the view that monitorability comes on a spectrum.
There is a trade-off between the guarantees provided by monitors and the properties that can be monitored with those guarantees.
We argue that considering different requirements gives rise to a \emph{hierarchy of monitorability}---depicted in \Cref{fig:hierarchy} (middle)---which classifies properties according to what types of guarantees RV can give for them.
At one extreme,
%(\clsS)
anything can be monitored if the only requirement is for monitors to be \emph{sound} \ie they should not contradict the monitored specification.
However, monitors that are \emph{just} sound give no guarantees of ever giving a verdict.
More usefully, \emph{\usefully} monitorable properties %(\clsUC)
enjoy monitors that reach a verdict for \textit{some} finite execution; arguably, this is the minimum requirement for making monitoring potentially worthwhile.
More stringent requirements can demand this capability to be \emph{invariant} over monitor executions, \ie a monitor never reaches a state where it cannot provide a verdict; then we speak of persistently informative monitors.
% in future
%(\clsSUC).
%
%The smaller class of \emph{ \superusefully} monitorable properties further require all prefixes to be \usefully monitorable with respect to the property.
%
Adding completeness requirements of different strengths, such as the requirement that a monitor should be able to identify all failures and/or satisfactions, yields stronger definitions of monitorability: partial, %(\clsPC)
satisfaction or violation complete,
and complete.
%(\clsCC)
%monitorability.

In order not to favour a specific operational model, the hierarchy in \Cref{fig:hierarchy} (middle) is cast in terms of abstract behavioural requirements for monitors.
We then provide an instantiation that concretise those requirements into an \emph{operational} hierarchy, establishing
% concrete,
operational counterparts for each type of monitorability.
To this end, we use the operational framework developed in \cite{AcetoAFIL19}.
% in which partial and complete monitorability were already defined.
%
We show this framework to be, in a suitable technical sense, maximally general (\Cref{thm:determines-gives-verdict}). This shows that our work is equally applicable to other operational models.
 % for monitoring regular properties.

%Since (in our view, at least) monitorability relates the operational behaviour of monitors with the satisfaction or violation of properties via different degrees of completeness...

In order for a tool to synthesise monitors from specifications, it is useful to have \emph{syntactic characterisations} of the properties that are monitorable with the required guarantees: synthesis can then directly operate on the syntactic fragment.
We
%prove the
offer
monitorability characterisations as fragments of \UHML~\cite{Larsen:90:HMLRec,AceILS:2007} (a variant of the \UCalc~\cite{Koz:83:TCS}) interpreted over finfinite traces---see \Cref{fig:hierarchy} (right).
The logic is expressive enough to capture all regular properties---the focus of nearly all existing definitions of monitorability---and subsumes more user-friendly specification logics such as LTL.
%
% Indeed, thanks to Aceto \etal~\cite{AcetoAFIL19},
Partial and complete monitorability already enjoy monitor synthesis functions and neat syntactic characterisations in \UHML~\cite{AcetoAFIL19}; related synthesis functions based on syntactic characterisations for a branching-time setting~\cite{FrancalanzaAAACDMI:17:RV,FraAI:17:FMSD} have already been implemented in a tool~\cite{Attard:16:RV,Attard:17:Book}.
Here, we provide the missing syntactic characterisation for \useful monitorability, and for a fragment of \superuseful monitorability.
%
% The resulting syntactic hierarchy is depicted on the right-hand side of \Cref{fig:hierarchy}.
% %

Finally, we show that the proposed hierarchy accounts for existing notions of monitorability. See \Cref{fig:hierarchy} (left).
%We show that existing definitions of monitorability embed into our hierarchy:
Safety, co-safety and their union correspond to partial monitorability and its two components, satisfaction- and violation-monitorability; %(\clsSC,\clsVC);
Pnueli and Zaks's definition of monitorability can be interpretated in two ways, of which one (\epz) maps to \useful monitorability, and the other (\upz) to \superuseful monitorability.
%
%Unfortunately, % us proving stuff is not unfortunate. Them being wrong is unfortunate, but that sounds judgemental.
We also show that the definitions of monitorability proposed by Falcone \etal~\cite{FalconeFernandezMounier:STTT:12}, contrary to their claim, do \emph{not} coincide with safety and co-safety properties.
%
% In brief,
To summarise, our principal contributions are:
\begin{enumerate}
\item
% $(i)$
% We give
A unified operational perspective on existing notions of monitorability, clarifying what operational guarantees each provides, see \Cref{thm:mon-hierarchy,thm:safety-oper,thm:epz};
\item
% and $(ii)$
An extension to the
% the picture of
syntactic characterisations of monitorable classes from  \cite{AcetoAFIL19}, mapping all but one of these classes to
% a neat  syntactic characterisation
fragments in \UHML, which can be viewed as a target byte-code for higher-level logics, see \Cref{thm:useful-is-ihml,thm:supuseful-regular-is-pihml}.
\end{enumerate}
Proofs omitted from the body of this paper can be found in \Cref{sec:proofs-appendix}.

\section{Preliminaries}
\label{sec:preliminaries}
% !TEX root = main.tex
%\input{general}

% \paragraph*{Traces.}
\noindent\emph{Traces.}
We assume a finite set of actions, $\acta,\actb,\ldots {\in} \Act$.
% , together with a distinguished \emph{internal} action $\tau$, where $\tau\not\in \Act$.
%
% We refer to the actions in $\Act$ as \emph{external} actions, as opposed to the
%  % silent
% action $\tau$, and use ${\actu \in \Act\cup\sset{\tau}}$ to refer to either.
%
% The metavariables $\ASet,\BSet,\ldots \subseteq \Act$ range over sets of
% % (external)
% actions, where the convenient notation $\coASet$ is occasionally used to denote $\Act\setminus \ASet$.
% %
% Singleton sets $\sset{\acta}$ are occasionally denoted as \acta whenever the context allows us to do so unambiguously; similarly, $\overline{\{\acta \}}$ is also occasionally denoted as  $\coact{\acta}$.
%
The metavariables $\tV,\tVV{\in}\Act^\omega$ range over \emph{infinite} sequences of
% external
actions.
% , whereas  the metavariable $\TSet \subseteq \Trc$ ranges over \emph{sets} of traces.
%
\emph{Finite traces}, denoted as $\ftV,\ftVV \in \Act^\ast$, represent \emph{finite} prefixes of system runs.
Collectively, finite and infinite traces $\fTrc = \Act^\omega {\cup} \Act^\ast$ are called \emph{finfinite} traces. We use $\fftV,\fftVV \in \fTrc$ (\resp $\FSet \subseteq \fTrc$)
to range over  finfinite traces (\resp sets of finfinite traces).
% to range over (\resp sets of) finfinite traces.
%
A (finfinite) trace  with action \acta at its head is denoted as $\acta\fftV$.
Similarly, a (finfinite) trace  with a prefix \ftV\ and continuation \fftV\  is denoted as $\ftV\fftV$.
We write $\ftV \preceq \fftV$ to denote that the finite trace \ftV\ is a prefix of \fftV,
\ie
$\exists \fftVV \cdot \fftV = \ftV\fftVV$.

\medskip
% \paragraph*{Properties.}
\noindent\emph{Properties.}
A \emph{property} over finfinite
% (\resp finite,
(\resp infinite) traces is a subset of
%$\Act^* \cup \Act^\omega$
$\fTrc$
% (\resp of $\Act^*$,
(\resp of $\Act^\omega$). Simply \emph{property} refers to a finfinite property, unless stated otherwise.
We say that a finite trace $\ftV$ \emph{positively (\resp negatively) determines} a property $\propV \subseteq \fTrc$
when
 $\ftV\fftV \in \propV$ (\resp $\ftV\fftV \notin \propV$), for every $\fftV \in \fTrc$. The same terms apply similarly when $\propV \subseteq \Act^\omega$.
%
%A
%%(finfinite)
% property \propV is \emph{upward-closed} whenever $\ftV \in \propV$ implies $\forall \fftV \in \fTrc \cdot \ftV\fftV \in \propV$.
% and similarly for a property over the infinite domain.
We
%say that
call
a
%finfinite
property
%is
regular if it is the union of a regular finite property $\propV_f {\in} \Act^*$ and 
an $\omega$-regular infinite property $\propV_i{\in} \Act^{\omega}$.

%\paragraph*{Monitors.}
%In the sequel, monitors, $\mV,\mVV,\ldots$ associate a verdict to traces \fftV,\fftVV, \ldots;  for some verdict \vV, this association is denoted as $\vV(\mV,\fftV)$.
%
% \begin{definition}
	% Given a monitoring system $(M,\accrel,\rejrel)$, we say that
 % For a given set of verdicts $V = \sset{\vV, \vVV, \ldots}$,  two monitors \mV and \mVV are verdict equivalent over $V$, denoted as $\mV \mveq^V \mVV$, when for every $\vV \in V$ and $\fftV \in \fTrc$, $\vV(\mV,\fftV)$ iff $\vV(\mVV,\fftV)$.
  % , and $\rej{\mV,\fftV}$ iff $\rej{\mVV,\fftV}$.
% 	\qedd
% \end{definition}

\section{A Monitor-Oriented Hierarchy}
\label{sec:mon-hierarchy}
% !TEX root = main.tex

%\newcommand{\qedmaybe}{\qed}

From a tool-construction perspective, it is important to give concrete, implementable definitions
%to
of
monitors; we do
so
%this
in \Cref{sec:instantiation-regular}. To understand the guarantees that these monitors will provide, we first discuss the
%more
general notion of monitor and monitoring system. Already in this general setting, we are able to identify the various requirements that give rise to the hierarchy of monitorability, depicted
%on the left-hand side
in the middle part
of \Cref{fig:hierarchy}. \Cref{sec:instantiation-regular} will then provide operational semantics to this hierarchy, in the setting of regular properties.

%The goal of an RV tool is to synthesise monitors from syntactic descriptions of properties.
 % and therefore we consider \emph{monitors} to be central to any definition of monitorability.
%As such, it is important to agree upfront on what properties are common to any reasonable monitoring framework.
%
We consider a monitor to be an entity that analyses finite traces and (at the very least) identifies a set of finfinite traces that it \emph{accepts} and a set of finfinite traces that it \emph{rejects}. We consider two postulates. Firstly, an acceptance or rejection verdict has to be based on a finite prefix of a trace, \Cref{def:monitoring-system}.1; secondly, verdicts must be \emph{irrevocable}, \Cref{def:monitoring-system}.2. These postulates make explicit two features shared by most monitorability definitions in the literature.
% (see \Cref{sec:safety,sec:pz}).
% we discuss approaches to monitorability that allow for revocable verdicts in \Cref{sec:other}.

%In the sequel, we consider monitors that analyse \emph{finfinite traces} and provide \emph{irrevocable} acceptance and rejection verdicts.
% with the following characteristics:
% \begin{enumerate}
% 	\item if a monitor gives a verdict for a trace \fftV, then this is done after analysing a \emph{finite prefix}\ftV of \fftV; and
% 	\item if a monitor gives a verdict for a finite trace \ftV, then this verdict \emph{does not change} after analysing further actions \ie it is \emph{irrevocable}.
% \end{enumerate}
% Monitor verdicts are therefore
% %, \emph{of finite } and
% \emph{irrevocable}.
% \begin{antonis}
% 	should we call them judgements at this point? or something else?
% \end{antonis}

\begin{definition}\label{def:monitoring-system}
	A \emph{monitoring system} is a triple $(M,\accrel,\rejrel)$, where $M$ is a nonempty set of monitors,  $\accrel,\rejrel \subseteq M {\times} \fTrc$, and
%	satisfy the following conditions
	for every $\mV \in M$:
	\begin{enumerate}
		\item $ \bigl(\,\acc{\mV,\fftV} \text{ implies } \exists \ftV \cdot \bigl(\ftV \preceq \fftV \text{ and } \acc{\mV,\ftV}\bigr)\,\bigr)$  and
		% that there exists a prefix \ftV\ of \fftV\ such that  \acc{\mV,\ftV};
		% \item
		$\bigl(\,\rej{\mV,\fftV} \text{ implies } \exists \ftV \cdot \bigl(\ftV \preceq \fftV \text{ and } \rej{\mV,\ftV}\bigr)\,\bigr)$;
		% if $(\mV,\fftV) \in \rejrel$, then there is a prefix $\ftV \in \Act^*$ of $\fftV$, such that  $(\mV,\ftV) \in \rejrel$;
		\item $\bigl(\acc{\mV,\ftV} \text{ implies }\forall \fftV {\cdot} \acc{\mV,\ftV\fftV}\bigr)$ and
		% \item
		$\bigl(\rej{\mV,\ftV} \text{ implies } \forall \fftV {\cdot} \rej{\mV,\ftV\fftV}\bigr)$.
		\qedd
	\end{enumerate}
\end{definition}

\begin{remark}
	Finite automata do not satisfy the requirements of
	% a monitoring system
	\Cref{def:monitoring-system}
	since their judgement can be revoked.
	% On the other hand,
	% automata on infinite words, such as
	Standard B\"{u}chi automata
	% (\ie on infinite traces)
	are
	not
%	\emph{not}
	good candidates either, since they need to read the entire infinite trace to accept or reject.
	% We can define a monitoring system of Turing machines in the following way.
	% For a fixed TM $T$, let $T_{pref}$ be the TM that incrementally simulates $T$ on all the prefixes of its input and accepts if at any time $T$ accepts --- but never rejects.
	% The class of the suffix-closure of all recursive sets of finite traces is partially monitorable
	% (for satisfaction)
	% by the monitoring system that consists of all the $T_{pref}$ machines.
	\qedd
\end{remark}

% We will use notations $\acc{\mV,\fftV}$ and $\rej{\mV,\fftV}$, instead of $(\mV,\fftV) \in \accrel$ and  $(\mV,\fftV) \in \rejrel$ from now on.

%Monitors are used to verify specification \emph{properties},
%\propV,
%typically described in a finite manner using an appropriate formal language.
% , \langV, interpreted over \fTrc, \ie $\propV {=} \hSem{l} \text{ for some } l \in \langV$ where $\hSem{-}: \langV \rightarrow \fTrc$.
%
% In the sequel, f
%Fix

%A
%reasonable requirement for a
%monitoring system
%has sufficient distinguishing power
%is, in some sense,
%powerful,
%when it contains,
%then that it should contain,
%for all properties under scrutiny, monitors that are maximal in the sense that they identify all traces that are determined by a prefix.

%We define a notion of maximal monitoring system: this requires it to contain, for all properties under scrutiny, monitors that identify all traces that are determined by a prefix.
We define a notion of maximal monitoring system for a collection of properties; for each property $\propV$ in that set, such a system must contain a monitor that reaches a verdict for all traces that have some prefix that determines $\propV$.

%We identify a monitoring system that is in some sense \emph{maximal}: it contains a dedicated (sound) monitor, $\mV_\propV$ for \emph{every}  $\propV \subseteq \fTrc$ that accepts (\resp rejects) every trace that is positively (\resp negatively) determined by some prefix.

\begin{definition}\label{def:abstract-monitoring-system}
	% We define the abstract
	A monitoring system $(M,\accrel,\rejrel)$ is \emph{maximal} for a collection of properties $C \subseteq 2^{\fTrc}$ if for every $\propV \in C$
%	\subseteq\fTrc$
	there is a monitor $\mV_\propV \in M$ such that
	$(i)$ $\acc{\mV_\propV,\fftV}$  iff trace $\fftV$ has a prefix that positively  determines  $\propV$;   $(ii)$ $\rej{\mV_\propV,\fftV}$ iff trace $\fftV$ has a prefix that negatively determines  $\propV$.
	% and iff $\fftV$  has a prefix that  determines $\propV$.
	\qedd
\end{definition}

In \Cref{sec:instantiation-regular}, we present an instance of such a maximal monitoring system for regular properties.
%for regular properties.
This shows that, for regular properties at least, the maximality of a monitoring system is a reasonable requirement.
Unless otherwise stated, we assume a fixed maximal
 monitoring system $(M,\accrel,\rejrel)$ throughout the rest of the paper.
%
%\begin{lemma}\label{thm:abstract-is-sound} For every $\propV \subseteq \fTrc$:
%%	and every sound monitor $\mV$ for $\propV$:
%%	and  $\mV_\propV$ in a maximal monitoring system $(M,\accrel,\rejrel)$:
%	\begin{enumerate}
%		\item $\mV_\propV$ is sound for $\propV$; and
%		\item
%		if
%%		there exists a sound
%\mV is a sound monitor for \propV,
%		and
%%		in a monitoring system
%%		$(M',\accrel',\rejrel')$, and
%%		$\accrel'(\mV,\fftV)$ (\resp $\rejrel'(\mV,\fftV)$), then
%		$\acc{\mV,\fftV}$ (\resp $\rej{\mV,\fftV}$), then
%		$\acc{\mV_\propV,\fftV)}$ (\resp $\rej{\mV_\propV,\fftV)}$). \qed
%	\end{enumerate}
%\end{lemma}
%
% %%% The explanation raises more questions than it answers imo.
%Although this section's definitions are parameterized by the monitoring system and do not require anything from
%the
%it,
%monitoring system,
%to
%reveal the relationships between the various monitorability classes, we
%
%To
%ensure that our monitors are powerful enough to
%exhibit
%the differences between
%compare
%the various monitorability definitions, we assume that the fixed monitoring system is maximal.
%
%We note that
%
For $\mV {\in} M$ to monitor for a
% certain
property \propV, it needs to satisfy some requirements.
The most important such requirement is \emph{soundness}.
% : if the monitor accepts (\resp rejects) a trace, then the trace satisfies (\resp violates) the property.

\begin{definition}[Soundness]\label{def:soundness}
	Monitor $\mV$ is \emph{sound} for property \propV when
	% $\acc{\mV,\ftV}$ implies $\forall \fftV \cdot\,  \ftV \fftV \in \propV$,
	$\acc{\mV,\fftV}$ implies $\fftV \in \propV$,
	and
	% $\rej{\mV,\ftV}$ implies $\forall \fftV \cdot\, \ftV \fftV \notin \propV$.
	$\rej{\mV,\fftV}$ implies $\fftV \notin \propV$.
\exqed
\end{definition}

\begin{lemma}\label{lem:verd-determines}
% Let $\propV \subseteq \fTrc$, $\ftV \in \Act^*$, and $\mV \in M$, where $(M,\accrel,\rejrel)$ is a monitoring system.
If $\mV$ is sound for $\propV$ and $\acc{\mV,\ftV}$ (\resp $\rej{\mV,\ftV}$), then $\ftV$ positively (\resp negatively) determines $\propV$.
\qedmaybe
\end{lemma}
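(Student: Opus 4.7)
The plan is to combine two ingredients already in hand: the irrevocability postulate \Cref{def:monitoring-system}.2 for the fixed monitoring system, and the soundness hypothesis from \Cref{def:soundness}. Together these immediately force that every extension of an accepted (respectively rejected) finite prefix lies in (respectively outside of) \propV, which is exactly the definition of positive (respectively negative) determination.

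More concretely, for the acceptance case I would argue as follows. Assume $\mV$ is sound for \propV\ and $\acc{\mV,\ftV}$. Pick an arbitrary continuation $\fftV \in \fTrc$. By irrevocability (\Cref{def:monitoring-system}.2), $\acc{\mV,\ftV}$ entails $\acc{\mV,\ftV\fftV}$. Soundness (\Cref{def:soundness}) then yields $\ftV\fftV \in \propV$. Since $\fftV$ was arbitrary, this establishes $\ftV\fftV \in \propV$ for every $\fftV \in \fTrc$, i.e.\ that $\ftV$ positively determines \propV. The rejection case is entirely symmetric, using the rejection clauses of both \Cref{def:monitoring-system}.2 and \Cref{def:soundness} to conclude $\ftV\fftV \notin \propV$ for all $\fftV$.

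There is no real obstacle here: the lemma is essentially a one-step consequence of the two postulates, and its role is likely pedagogical, extracting a reusable fact for later parts of \Cref{sec:mon-hierarchy}. The only thing worth flagging is that the statement uses a single parenthetical \emph{resp.} to cover two parallel claims, and since the two halves are proved by the same schema, I would write out the acceptance case in full and dispatch the rejection case with a single sentence noting that the argument is obtained by swapping $\accrel$ with $\rejrel$ and ``$\in \propV$'' with ``$\notin \propV$'' throughout.
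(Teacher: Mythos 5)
Your proof is correct and is precisely the intended argument: the paper omits a proof of this lemma as immediate, and the one-step combination of the irrevocability postulate (\Cref{def:monitoring-system}.2) with soundness (\Cref{def:soundness}) is exactly what is needed. Nothing to add.
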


\begin{lemma}\label{thm:abstract-is-sound} For every $\propV \subseteq \fTrc$:
	%	and every sound monitor $\mV$ for $\propV$:
	%	and  $\mV_\propV$ in a maximal monitoring system $(M,\accrel,\rejrel)$:
	% \begin{enumerate}
		 % \item
		$(i)$
		$\mV_\propV$ is sound for $\propV$; and
		% \item
		 $(ii)$
		if
		%		there exists a sound
		\mV is a sound monitor for \propV
		and
		%		in a monitoring system
		%		$(M',\accrel',\rejrel')$, and
		%		$\accrel'(\mV,\fftV)$ (\resp $\rejrel'(\mV,\fftV)$), then
		$\acc{\mV,\fftV}$ (\resp $\rej{\mV,\fftV}$), then it is also the case that
		$\acc{\mV_\propV,\fftV)}$ (\resp $\rej{\mV_\propV,\fftV)}$).
		\qedmaybe
	% \end{enumerate}
\end{lemma}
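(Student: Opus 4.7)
The plan is to treat the two parts separately. Part (i) should reduce to a direct unfolding of the maximality condition of \Cref{def:abstract-monitoring-system}, while part (ii) should follow from a short chain of three steps: the ``verdicts are reached on a finite prefix'' postulate \Cref{def:monitoring-system}.1, soundness via \Cref{lem:verd-determines}, and the characterisation of $\mV_\propV$ in \Cref{def:abstract-monitoring-system}.

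For (i), I would argue as follows. Suppose $\acc{\mV_\propV,\fftV}$. By \Cref{def:abstract-monitoring-system} the trace $\fftV$ has a finite prefix $\ftV$ that positively determines $\propV$, and by the definition of positive determination every finfinite extension of $\ftV$ lies in $\propV$. Since $\fftV = \ftV\fftVV$ for some $\fftVV \in \fTrc$, the trace $\fftV$ itself is such an extension, so $\fftV \in \propV$. The rejection case is symmetric, replacing ``positively'' by ``negatively'' and $\in$ by $\notin$.

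For (ii), suppose that $\mV$ is sound for $\propV$ and $\acc{\mV,\fftV}$; the rejection case is analogous. By \Cref{def:monitoring-system}.1 there exists a finite prefix $\ftV \preceq \fftV$ with $\acc{\mV,\ftV}$. Since $\mV$ is sound for $\propV$, \Cref{lem:verd-determines} yields that $\ftV$ positively determines $\propV$. Hence $\fftV$ has a prefix that positively determines $\propV$, and the maximality characterisation of \Cref{def:abstract-monitoring-system} then delivers $\acc{\mV_\propV,\fftV}$.

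The argument is essentially a matter of unwinding the definitions, so no serious technical obstacle arises. The one step that requires a little care is funnelling through \Cref{def:monitoring-system}.1 before invoking \Cref{lem:verd-determines}, whose hypothesis is formulated for a \emph{finite} trace while in (ii) the verdict $\acc{\mV,\fftV}$ might be reached on a genuinely infinite $\fftV$; once that reduction to a finite prefix is made, both parts fall straight out of the definitions.
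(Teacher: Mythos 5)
Your proof is correct; the paper omits a proof of this lemma entirely (treating it as immediate from \Cref{def:abstract-monitoring-system,def:monitoring-system} and \Cref{lem:verd-determines}), and your argument is exactly the expected unwinding of those definitions, including the one genuinely delicate step of passing to a finite prefix via \Cref{def:monitoring-system}.1 before invoking \Cref{lem:verd-determines}.
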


The dual requirement to soundness, \ie \emph{completeness}, entails that the monitor detects \emph{all} violating and satisfying traces.
% of the property.
%
%Unfortunately, this is usually not possible for non-trivial properties  \cite{FraAI:17:FMSD,AcetoAFIL19}.
%
Unfortunately, this is only possible for trivial properties in the finfinite\footnote{In the infinite domain more properties \emph{are} completely monitorable, see \Cref{sec:other}.}  domain---see \Cref{prop:no-complete}.
Instead, monitors may be required to accept all satisfying traces, or reject all violating traces.

\begin{definition}[Completeness]\label{def:completeness}
	Monitor $\mV$ is \emph{satisfaction-complete} for \propV if
	$\fftV {\in} \propV$ implies
	$\acc{\mV,\fftV}$
	% $\exists \ftV\preceq\fftV \cdot \acc{\mV,\ftV}$,
	and
	\emph{violation-complete} for \propV if
	$\fftV {\notin} P$ implies
	% $\exists \ftV\preceq\fftV \cdot \rej{\mV,\ftV}$.
	$\rej{\mV,\fftV}$.
	%
	% We simply say that $\mV$
	It is \emph{complete} for \propV if it is \emph{both} satisfaction- and violation-complete for \propV and \emph{partially-complete} if it is \emph{either} satisfaction- \emph{or} violation-complete.
	\qedd
\end{definition}

\begin{proposition}\label{prop:no-complete}
	% If \propV is completely monitorable then $\propV {=} \fTrc$ or $\propV {=} \emptyset$.
	If \mV is sound and complete for  \propV then
	% either
	$\propV
	 {=}
	% =
	\fTrc$ or $\propV
	{=}
	% =
	\emptyset$.
\end{proposition}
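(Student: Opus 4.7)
The plan is to apply completeness at the empty trace $\varepsilon$, which forces a verdict that, by soundness, globally determines $\propV$.

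First I would observe that $\varepsilon \in \Act^* \subseteq \fTrc$, so completeness applies to it. By the definition of completeness, exactly one of two cases holds: either $\varepsilon \in \propV$, in which case $\acc{\mV,\varepsilon}$ by satisfaction-completeness; or $\varepsilon \notin \propV$, in which case $\rej{\mV,\varepsilon}$ by violation-completeness.

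Next, I would invoke \Cref{lem:verd-determines} (noting that $\varepsilon$ is already finite, so no intermediate prefix-extraction via \Cref{def:monitoring-system}.1 is needed). In the first case, $\acc{\mV,\varepsilon}$ together with soundness implies that $\varepsilon$ positively determines $\propV$; unfolding the definition, $\varepsilon\fftV = \fftV \in \propV$ for every $\fftV \in \fTrc$, hence $\propV = \fTrc$. In the second case, $\rej{\mV,\varepsilon}$ analogously yields that $\varepsilon$ negatively determines $\propV$, so $\fftV \notin \propV$ for all $\fftV \in \fTrc$, hence $\propV = \emptyset$.

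There is no real obstacle here; the whole argument hinges on recognising that the empty trace is a legitimate finfinite trace on which completeness must fire, and that any verdict on $\varepsilon$ collapses $\propV$ to a trivial property via \Cref{lem:verd-determines}. The proof is only a few lines, so I would present it inline without additional case analysis or auxiliary constructions.
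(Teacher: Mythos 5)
Your proof is correct and follows essentially the same route as the paper's: completeness forces a verdict on $\varepsilon$, which then determines $\propV$ globally. The only cosmetic difference is that you package the final step via \Cref{lem:verd-determines}, whereas the paper unfolds \Cref{def:monitoring-system}.2 and soundness directly; the logical content is identical.
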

 \begin{proof}
% 	Let $\mV$ be a sound and complete monitor for $P$.
% 	There are two cases:
%Either
If $\varepsilon \in P$, then
%or $\varepsilon \notin P$.
% 	In either case, since $\mV$ is complete for $P$,
% %	it must be the case that
% 	either
%Therefore,
 	$\acc{\mV,\varepsilon}$,
 	so
% 	or $\rej{\mV,\varepsilon}$, and
	from
% 	the requirements of
 	\Cref{def:monitoring-system},
% 	we have that
 	$\forall \fftV \in \fTrc.~\acc{\mV,\fftV}$.
% 	or $\forall \fftV \in \fTrc.~\rej{\mV,\fftV}$.
 	Due to
 	the soundness of $\mV$, $P = \fTrc$.
 	Similarly,  $P = \emptyset$ when $\varepsilon \notin P$.
 \end{proof}

%Given a monitoring system,
We define monitorability in terms of the guarantees that the monitors are expected to give. Soundness is
%typically
not negotiable.
Given the
%ramifications
consequences
of requiring completeness, as evidenced by \Cref{prop:no-complete}, we consider weaker forms of completeness.
The weaker the completeness guarantee, the more properties can be monitored.
%for.
%Note that the following definitions are implicitly parameterised by the underlying monitoring system.

\begin{definition}[Complete Monitorability]\label{def:oper-monitorability-popl}
Property \propV\ is completely monitorable when there is a monitor that is sound and complete for \propV.
It is
\emph{monitorable for satisfactions} (\resp \emph{violations})
when
there is a monitor $\mV$ that
% the monitor
%$\mV_\propV$ from the maximal
% monitoring
%system is
is sound and
satisfaction- (\resp and violation-) complete for \propV.
%\footnote{From \Cref{thm:abstract-is-sound}, we know that $\mV_\propV$ must be sound.}
%
It is \emph{partially} monitorable
%(\resp \emph{completely}) monitorable
when it is monitorable for satisfactions \emph{or}
%(\resp \emph{and})
violations.

A class of properties $\clsV \subseteq 2^\fTrc$ is satisfaction-, violation-, partially, or completely monitorable,
%denoted as \clsSC (\resp \clsVC)
when \emph{every} property $\propV {\in} \clsV$ is, respectively, satisfaction-, violation, partially or completely monitorable.
%
%Similarly, a class $\clsV \subseteq 2^\fTrc$ is	partially (\resp completely) monitorable
	% by a monitoring system	$(M,\accrel,\rejrel)$
%when all properties in \clsV are partially (\resp completely) monitorable for it.
We denote
the  class of all satisfaction, violation, partially, and completely monitorable properties by maximal monitoring systems
%is denoted
as
\clsSC, \clsVC,  \clsPC, and  \clsCC, respectively.
% are \resp denoted as \clsPC and \clsCC.
 \qedd
\end{definition}

Since even partial monitorability, the weakest form in \Cref{def:oper-monitorability-popl}, renders a substantial number of properties unmonitorable \cite{AcetoAFIL19}, one may consider even weaker forms of completeness that only flag a
% (non-empty)
\emph{subset} of satisfying (or violating) traces.  \clsS denotes monitorability \emph{without} completeness requirements.
Arguably, however, the weakest guarantee for a sound monitor of a property \propV to be of use is if it flags at least \emph{one} trace.
One may then further strengthen this requirement and demand that this guarantee is invariant throughout the analysis of a monitor.

% As it has been demonstrated in \cite{FraAI:17:FMSD,AceAFI:17:FSTTCS,AcetoAFIL19}, even the requirement of  can limit the kinds of properties that a monitor can handle.
% On the other hand, the requirement that a property has a sound monitor can be trivially satisfied, by providing a monitor that accepts and rejects no traces.
% To exclude such
% trivial monitors, we define that a monitor $\mV$ is \emph{\useful} when $\acc{\mV,\ftV}$ or $\rej{\mV,\ftV}$ for some $\ftV$.

\begin{definition}[\Useful Monitors\footnote{These are not related to the \emph{informative prefixes} from~\cite{KV:2001} nor \textit{persistence} from~\cite{rosu2007safety}}] \label{def:informative-monitors}
	Monitor \mV is satisfaction- (\resp violation-) \emph{\useful}
	% for \propV
	if $\exists \fftV
	% \in \propV
	\cdot \acc{\mV,\fftV}$ (\resp \rej{\mV,\fftV}).
	It is satisfaction- (\resp violation-)  \emph{\superuseful} if $\forall \ftV
	% \in \Act^*
	\exists \fftV \cdot \acc{\mV,\ftV\fftV}$ (\resp \rej{\mV,\ftV\fftV}).
  We simply say that \mV is \useful (\resp \superuseful) when we do not distinguish between satisfactions or violations. \qedd
\end{definition}

%%% this looks a bit bland:
%\begin{remark}
%Our informative monitors are not related to Kupferman and Vardi's \emph{informative prefixes}~\cite{KV:2001}.
%\qedd
%\end{remark}
% Again, o
%One can define monitorability classes for these kinds of monitors.
%; we here identify two.
%\clsS denotes monitorability \emph{without} completeness requirements.

\begin{definition}[\Useful Monitorability] \label{def:informative-monitorability}
	Property \propV is \usefully (\resp \superusefully) monitorable if there is an \useful (\resp a \superuseful) monitor that is sound for \propV .
	% \footnote{From \Cref{thm:abstract-is-sound}, we know that $\mV_\propV$ must be sound.}
	% that is
	% sound for \propV.
%
	A class of properties $\clsV {\subseteq} 2^\fTrc$ is \usefully (\resp \superusefully) monitorable,
%	denoted as \clsUC (\resp \clsSUC)
	when all its properties are \usefully (\resp \superusefully) monitorable.
	The  class of all \usefully (\resp \superusefully) monitorable properties by maximal monitoring systems is denoted as \clsUC (\resp \clsSUC).
A property \propV is \superusefully monitorable for satisfaction (\resp for violation) if there is a  satisfaction- (\resp violation-) \superuseful monitor that is sound for \propV.
We revisit this definition in \Cref{sec:instantiation-regular}.
%
%
%	A property \propV is \usefully (\resp \superusefully) monitorable if $\mV_\propV$ in the maximal system is a \useful (\resp \superusefully) monitor.
	% \footnote{From \Cref{thm:abstract-is-sound}, we know that $\mV_\propV$ must be sound.}
	% that is
	% sound for \propV.
%
%	A class of properties $\clsV \subseteq 2^\fTrc$ is \usefully (\resp \superusefully) monitorable, denoted as \clsUC (\resp \clsSUC)  when all properties in \clsV are \usefully (\resp \superusefully) monitorable.
	\qedd
\end{definition}

\begin{example} \label{ex:partial-monitorability-limits}
%	Given any maximal monitoring system, t
	The  property \emph{``\fa never occurs and eventually \su is reached''} (\Cref{ex:intro}) is \emph{not} partially monitorable but \emph{is} \superusefully monitorable.
	% \exqed
% \end{example}

% \begin{example} \label{ex:informative-monitorability-limits}
	The property
	% over $\Act {=} \sset{\acta,\actb}$
	requiring that \emph{``\re only appears a finite number of times"} is \emph{not} \usefully monitorable.
	% for \emph{any} monitoring system.
%
	For if it were, the respective sound \useful monitor \mV
%	in the maximal system
	should at least accept or reject one trace.
	If it accepts a trace $\fftV$, by \Cref{def:monitoring-system}, it must accept some prefix $\ftV\preceq\fftV$.
Again, by \Cref{def:monitoring-system}, all continuations, including $\ftV\re^\omega$, must be accepted by \mV. This makes it unsound, which is a contradiction.
%
% A dual argument can also be made
Similarly, if \mV rejects some $\fftV$, it must reject some finite $\ftV\preceq\fftV$ that necessarily contains a finite number of \re actions,
% and then reject $\ftV \actb^\omega$,
making it unsound. \qedd
\end{example}

\begin{theorem}[Monitorability Hierarchy] \label{thm:mon-hierarchy}
%	In any maximal monitoring system, the
	The
	monitorability classes given in
	\Cref{def:oper-monitorability-popl,def:informative-monitorability} form the inclusion hierarchy depicted in \Cref{fig:hierarchy}.
\end{theorem}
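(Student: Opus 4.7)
The plan is to verify separately each inclusion drawn in the middle column of \Cref{fig:hierarchy}, and then to argue that each is proper with $\clsSC$ and $\clsVC$ incomparable. The inclusions $\clsCC \subseteq \clsSC \cap \clsVC$, $\clsSC \cup \clsVC \subseteq \clsPC$, $\clsSUC \subseteq \clsUC$ and $\clsUC \subseteq \clsS$ are immediate from \Cref{def:oper-monitorability-popl,def:informative-monitorability}: a complete monitor is both satisfaction- and violation-complete; a satisfaction- or violation-complete monitor is partially complete; a \superuseful monitor is \useful by instantiating the universal quantifier with $\ftV = \varepsilon$; and a \useful monitor is in particular sound.

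The only substantive step is $\clsPC \subseteq \clsSUC$. Given a partially monitorable $\propV$, by symmetry it suffices to handle the case where a sound satisfaction-complete monitor $\mV$ exists. The first move is to upgrade to the canonical monitor: by \Cref{thm:abstract-is-sound}(ii) every acceptance of $\mV$ is inherited by $\mV_\propV$, so $\mV_\propV$ is sound (\Cref{thm:abstract-is-sound}(i)) and is itself satisfaction-complete for $\propV$. I then claim $\mV_\propV$ is \superuseful. Fix any finite trace $\ftV$ and split on whether $\ftV$ has an extension inside $\propV$. If some $\ftV\fftV' \in \propV$, then satisfaction-completeness of $\mV_\propV$ immediately gives $\acc{\mV_\propV,\ftV\fftV'}$. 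Otherwise no extension of $\ftV$ lies in $\propV$, which means $\ftV$ itself negatively determines $\propV$; maximality of the monitoring system then yields $\rej{\mV_\propV,\ftV}$, and by \Cref{def:monitoring-system}.2 every continuation $\ftV\fftV$ is rejected as well. In both cases $\ftV$ admits a continuation on which $\mV_\propV$ produces a verdict, so $\mV_\propV$ witnesses $\propV \in \clsSUC$.

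For strictness, I would reuse \Cref{ex:partial-monitorability-limits}: the property $(\ltlG\neg\fa) \wedge (\ltlF\su)$ separates $\clsSUC$ from $\clsPC$, and ``\re occurs only finitely often'' separates $\clsS$ from $\clsUC$. To separate $\clsUC$ from $\clsSUC$, I would exhibit a property such as ``the trace is infinite and begins with $\fa$'': no finite prefix positively determines it, so any sound monitor can only reject traces beginning with a different action, which fails \superusefulness at $\ftV = \fa$ while leaving the monitor \useful (it rejects, say, $\re$-prefixed traces). Standard safety and co-safety examples separate $\clsSC$ from $\clsVC$ in each direction, and \Cref{prop:no-complete} separates $\clsCC$ from both $\clsSC$ and $\clsVC$.

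The main obstacle is the $\clsPC \subseteq \clsSUC$ step: the hypothesis supplies only some sound partially complete $\mV$, not a monitor that is obviously \superuseful. The key insight is that \Cref{thm:abstract-is-sound}(ii) allows us to transfer completeness from $\mV$ to $\mV_\propV$, after which the whole argument proceeds at the level of prefix-determination of $\propV$ and the otherwise-branch of the case split is forced to coincide with $\ftV$ negatively determining $\propV$.
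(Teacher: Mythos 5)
Your proposal is correct and follows essentially the same route as the paper: the only non-trivial inclusion, $\clsPC\subseteq\clsSUC$, is established in both by passing to the maximal monitor $\mV_\propV$ and, for each finite $\ftV$, case-splitting on whether $\ftV$ has an extension witnessing (non-)membership in $\propV$ — using completeness in one branch and maximality on the determining prefix in the other (you treat the satisfaction case, the paper the dual violation case, and you spell out the transfer of completeness to $\mV_\propV$ via \Cref{thm:abstract-is-sound}(ii), which the paper leaves implicit). Your additional separating examples for strictness go beyond what the paper's proof establishes (it only proves the inclusions), and they check out.
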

\begin{proof}
	The hardest inclusion to show
	% from the definitions
	is $\clsPC = \clsSC{\cup}\clsVC \subseteq \clsSUC$.
%
%	(changed to \subseteq because we only
%						show the inclusion)
  %
	Pick a property $\propV \in \clsVC$.
%
%	\Wlog pick a prefix \ftV.
	Let $\ftV \in \Act^*$.
	If $\exists \fftV \cdot \ftV\fftV \notin \propV$ then by \Cref{def:completeness} we have $\rej{\mV_\propV,\ftV\fftV}$.
	Otherwise, $\forall \fftV \cdot \ftV\fftV \in \propV$, meaning that \ftV\ positively determines \propV, and by \Cref{def:abstract-monitoring-system} we have  $\acc{\mV_\propV,\ftV\fftV}$.
	By \Cref{def:informative-monitors}, we deduce that $\mV_\propV$ is \superuseful since $\forall \ftV \exists \fftV \cdot \acc{\mV_\propV,\ftV\fftV} \text{ or } \rej{\mV_\propV,\ftV\fftV})$. Thus, by
	\Cref{def:informative-monitorability}, it follows that
	 $\propV \in \clsSUC$.
	 The case for $\propV \in \clsSC$ is dual.
\end{proof}

% \Useful monitorability can be seen as a baseline requirement for monitorability. After all, if a monitor never accepts or rejects anything, then it is not very useful. On the other hand, as we will see, partial (operational) monitorability is one of the strongest notions of monitorability.

%In a monitoring system, there may be two monitors that reach the same conclusions for the same traces. As it is the judgments of the monitors that  we are mainly interested in, we consider such monitors to be equivalent.

\section{An Instantiation for Regular Properties}
\label{sec:instantiation-regular}
% !TEX root = main.tex

% \subsectionmaybe{A Monitoring System for \UHML}
%\subsectionmaybe{The Bestest Monitoring System}
%actual title:
%\subsectionmaybe{A Logic and a Monitoring System}

% \subsectionmaybe{A Logic over Finfinite Traces}

We provide a concrete maximal monitoring system
%\Cref{sec:mon-hierarchy}
for regular properties
. This monitoring system
that gives an operational interpretation to the levels of the monitorability hierarchy, and enables us to find syntactic characterisations for them in \UHML~\cite{Larsen:90:HMLRec,AcetoAFIL19}. Since this logic is a reformulation of the \UCalc~\cite{Koz:83:TCS},
it is expressive enough to describe all regular properties and to embed specification formalisms such as LTL, {($\omega$-)regular} expressions, B\"{u}chi automata, and Street automata, used in the state of the art on monitorability.

\subsectionmaybe{The Logic.}

% As we will later on see, the monitoring system that we will present is, in many ways, the most powerful monitoring system one can use for \UHML properties.
% The other reason is that although we use an instance of a monitoring system and logic, our definitions above can be applied to more powerful formalisms and systems (such as \cite{FerrereHS:18:LICS}), as well.

\begin{figure}[!t]
  % \textbf{Syntax}
  \begin{align*}
  \hV,\hVV \in \UHML &\bnfdef  \hTru  %(\text{truth})
  &
  &\bnfsepp  \hFls &%& (\text{falsehood})\\
  & \bnfsepp \hOr{\hV\,}{\,\hVV}  &%& (\text{disjunction}) &
  & \bnfsepp \hAnd{\hV\,}{\,\hVV}  %&& (\text{conjunction})
  \\
  &~~~\bnfsepp \hSuf{\acta}{\hV} &%& (\text{possibility}) &
  &\bnfsepp \hNec{\acta}{\hV} &%& (\text{necessity}) \\
  & \bnfsepp \hMinX{\hV} &%& (\text{min. fixpoint}) &
  & \bnfsepp \hMaxX{\hV} &%& (\text{max. fixpoint}) \\
  & \bnfsepp\; X %&& (\text{rec. variable})
  \end{align*}
  %   \\
  % \textbf{Linear-Time Semantics}
     \[\begin{array}{rlrl}
      \hSem{\hTru,\sigma}  & \deftxt   \fTrc
      &
      % \\
      \hSem{\hFls,\sigma}  & \deftxt   \emptyset
      \\
      \hSem{\hOr{\hV_1}{\hV_2},\sigma} & \deftxt   \hSem{\hV_1,\sigma} \cup \hSem{\hV_2,\sigma}
      \qquad\qquad
       &
      % \\
      \hSem{\hAnd{\hV_1}{\hV_2},\sigma} & \deftxt   \hSem{\hV_1,\sigma} \cap \hSem{\hV_2,\sigma}
      \\
%      \hSem{\hSuf{\acta}{\hV},\sigma}  &
%      \multicolumn{3}{l}{
%          \deftxt \sset{\fftV \;|\;
%           \exists \fftVV
%           % \cdot \exists\acta\in \ASet
%           \cdot\fftV=\acta\fftVV
%          \;\text{ and }\; \fftVV \in \hSem{\hV,\sigma}
%          }
%      }
    \hSem{\hNec{\acta}{\hV},\sigma}  &
%    \multicolumn{3}{l}{
\deftxt \sset{\fftV \;|\;
	\fftV=\acta\fftVV
	\;\text{ implies }\;\fftVV \in \hSem{\hV,\sigma}
	%       }
}
&
 \hSem{\hSuf{\acta}{\hV},\sigma}  &
%\multicolumn{3}{l}{
	\deftxt \sset{\acta\fftV \;|\; \fftV \in \hSem{\hV,\sigma}
%	}
}
%    \\
%
    \\
    \hSem{\hMin{\!\hVarX}{\hV},\sigma} & \deftxt
    \bigcap \sset{\FSet \;|\;  \hSem{\hV,\sigma[\hVarX\mapsto \FSet]} \subseteq \FSet\ }

    \\
    \hSem{\hMax{\!\hVarX}{\hV},\sigma} & \deftxt
    \bigcup \sset{\FSet \;|\;  \FSet \subseteq \hSem{\hV,\sigma[\hVarX\mapsto \FSet]}\ }
    \quad\;
    &
    % \\
    \hSem{\hVarX,\sigma} & \deftxt   \sigma(\hVarX)
    \\
    \\
    \end{array}
    \]
  \caption{\UHML Syntax and (finfinite) Linear-Time Semantics}
  \label{fig:recHML}
\end{figure}

The syntax or \UHML is defined by the grammar in \Cref{fig:recHML}, which assumes a countable set of logical variables $\hVarX,\hVarY \in \LVars$.
Apart from the standard constructs for truth, falsehood, conjunction and disjunction, the logic is equipped with
% possibility and necessity
existential (\hSuf{\acta}{\hV}) and universal  (\hNec{\acta}{\hV}) modal operators, and
\emph{two} recursion operators expressing least and greatest fixpoints (\resp \hMinX{\hV} and  \hMaxX{\hV}).
The semantics is given by the function $\hSem{-}$ defined  in \Cref{fig:recHML}.
It maps a (possibly open) formula to a set of (finfinite) traces~\cite{AcetoAFIL19} by induction on the formula structure, using valuations that map logical variables to sets of traces,
$\sigma: \LVars \to \powset{\fTrc}$, where $\sigma(\hVarX)$ is the set of traces assumed to satisfy \hVarX.
An existential modality \hSuf{\acta}{\hV} denotes all traces with a prefix action $\acta$ and a continuation that satisfies \hV whereas a universal modality  \hNec{\acta}{\hV} denotes all traces that are either \emph{not} prefixed by  \acta  or have a continuation \fftVV\ satisfying \hV.
The sets of traces satisfying the least and greatest fixpoint formulae, \hMinX{\hV} and  \hMaxX{\hV}, are the least and the greatest fixpoints, respectively, of the function induced by the formula \hV.
%
% The semantic interpretation over either traces or finite traces is analogous; see \cite{AcetoAFIL19} for details.
%
For closed formulae, we use \hSem{\hV}  in lieu of \hSem{\hV,\sigma} (for some $\sigma$). % since the semantics is independent of the valuation.
Formulae are generally assumed to be closed and guarded~\cite{kupferman00}.
In the discussions we occasionally treat formulae, $\hV$,  as the properties they denote, $\hSemF{\hV}$.
% : we use $\hV$
% %instead
% to refer to the property $\hSemF{\hV}$, when this causes no confusion.
%to refer to the property that is specified from $\hV$.

\begin{example}
  % [Expressiveness]
  \label{ex:rechml-ltl}
The characteristic LTL operators
% ~\cite{Clarke1999Book}
can be encoded in \UHML as:
\[
	\begin{array}{rlrlrlr}
	\ltlX\,\hV
  & {\deftxt} \bigvee_{\acta\in\Act}\hSuf{\acta}{\hV}
  \quad\quad
	&
	\hV\,\ltlU\,\hVV
  & {\deftxt} \hMinY{\bigl(\hOr{\hVV\,}{\,(\hAnd{\,\hV\,}{\,\ltlX\ \hVarY})}\bigr)}
  \quad\quad
  & \ltlF\, \hV
  & {\deftxt} \hTru \,\ltlU\, \hV
	\\

	\hV\,\ltlR\,\hVV &
  \multicolumn{3}{l}{
  {\deftxt} \hMaxY{\bigl(\hOr{(\hAnd{\,\hVV\,}{\,\hV\,})\,}{\,(\hAnd{\,\hVV\,}{\,\ltlX\ \hVarY})}\bigr)}
   }
  & \ltlG\, \hV
  & {\deftxt} \hFls \,\ltlR\, \hV
  % \\
  % \acta
  % & {\deftxt} \hSuf{\acta}\hTru & \neg \acta & {\deftxt} \hNec{\acta}\hFls & & &
  % \qedd
%  \hfill \;\quad\qquad\exqed
%   \tag*{
%   	\begin{tabular}{r}
%%   	\\
%%   	\hline
%   	\\
%%   	\hline
%   	\\
%\qedd
%   	\end{tabular}
%   }
\end{array}
\]
In examples, atomic propositions \acta and $\neg\acta$ \resp denote $\hSuf{\acta}\hTru$ and $\hNec{\acta}\hFls$.
\exqed
%An atomic proposition $\acta$ of LTL is encoded in \UHML as $\hSuf{a}\hTru$.
\end{example}

For
% the sake of
better readability,
% better readability and familiarity,
% we often use LTL in our examples.
examples use LTL.
Since we operate in the finfinite domain, $\ltlX$ should be read as a \emph{strong} next operator,
% (the trace should not terminate there),
in line with
% the encoding in
\Cref{ex:rechml-ltl}.
% of $\ltlX$ into \UHML provided.

%
%	these don't seem particularly useful:
%
%  For a finite set $I$ of indices, the (standard) notation $\bigwedge_{i \in I} \hV_i$  denotes $\hTru$ when $I = \emptyset$, and
%  a conjunction of the formulae in $\{ \hV_i \mid i \in I \}$ when $I \neq \emptyset$.
%  %
%  Similarly $\bigvee_{i \in I} \hV_i$ denotes $\hFls$ when $I = \emptyset$, and
%  % a disjunction of the formulae in
%  $\{ \hV_i \mid i \in I \}$ when $I \neq \emptyset$.
%  %
%  These notations are justified by the fact that $\vee$ and $\wedge$ are commutative and associative with respect to all the semantics considered in the paper.
%  %
%  For finite \ASet, we occasionally use the shorthand notation $\hNec{\ASet}{\hV}$ to denote $\bigwedge_{\act \in \ASet}\hNec{\act}\hV$, and $\hSuf{\ASet}{\hV}$ to denote $\bigvee_{\act \in \ASet}\hSuf{\act}\hV$.

% \af{Not sure whether this example will stay.  It may be replaced by another.}
% \begin{exmp}[Expressiveness] \label{ex:rechml-expressivity}
%   For arbitrary formulae $\hV,\hVV \in \ltmu$  and finite \Act, we can encode the following
%   % characteristic
%   LTL operators \cite{Clarke1999Book} as:
%   \begin{displaymath}
%     \begin{array}{rlrl}
%       \textsf{X}\,\hV & {\deftxt} \hSuf{\Act}{\hV}
%       &
%       \qquad
%       \hV\,\textsf{U}\,\hVV & {\deftxt} \hMinY{\bigl(\hOr{\hVV}{(\hAnd{\hV}{\hSuf{\Act}\hVarY})}\bigr)}
%       \\
%        \hV\,\textsf{R}\,\hVV & \multicolumn{3}{l}{{\deftxt} \hMaxY{\bigl(\hOr{(\hAnd{\hVV}{\hV})}{(\hAnd{\hVV}{\hSuf{\Act}\hVarY})}\bigr)}}
%        % \tag*{\qedd}
%     \end{array}
%   \end{displaymath}
% \end{exmp}

\begin{figure}[!t]
	% \centering
	% \textbf{Syntax}
	\begin{align*}
	\mV,\mVV\in\Mon\ &\bnfdef\  \vV
	&& \bnfsep \prf{\acta}{\mV}
	&& \bnfsep \ch{\mV}{\mVV}
  && \bnfsep \mV \paralC \mVV
  && \bnfsep \mV \paralD \mVV
	&& \bnfsep \rec{x}{\mV}
	&&\bnfsep x \\
	% & \bnfsepp \yes && (\text{satisfaction}) && \bnfsepp \no && (\text{violation})\\
	\vV,\vVV \in \Verd & \bnfdef\ \stp
	&&  \bnfsep \no
	&& \bnfsep \yes
	\end{align*}
	% \textbf{Dynamics}
	\begin{mathpar}
		\inference[\rtit{mAct}]{
			% \acta\in\ASet
		}{\prf{\acta}{\mV}  \traSS{\acta} \mV}
    \and
    \inference[\rtit{mVer}]{
      % \verd{\mV}
    }{\vV \traSS{\acta} \vV}
		\and
		%
		% \inference[\rtit{mRec}]{\mV\subS{\rec{x}{\mV}}{x} \traSS{\acta} \mV'}{\rec{x}{\mV} \traSS{\acta} \mV'}
		\inference[\rtit{mRec}]{\mV\subS{\rec{x}{\mV}}{x} \traSS{\acta} \mVV}{\rec{x}{\mV} \traSS{\acta} \mVV}
		\\\\\\
		% \and
		\inference[\rtit{mSelL}]{\mV \traSS{\acta} \mV'}{\ch{\mV}{\mVV}  \traSS{\acta} \mV'}
		%
		% \and
		% %
		% \inference[\rtit{mSelR}]{\mVV \traSS{\acta} \mVV'}{\ch{\mV}{\mVV}  \traSS{\acta} \mVV'}
		%
    \and
    \inference[\rtit{mPar}]{\mV\traSS{\acta}\mV' & \mVV\traSS{\acta}\mVV'}
    {\mV\paralG\mVV\traSS{\acta}\mV'\paralG\mVV'}
    \\\\\\
    \inference[\rtit{mTauL}]{\mV\traSS{\tau}\mV'}
    {\mV\paralG\mVV\traSS{\tau}\mV'\paralG\mVV}
    \quad
    % \inference[\rtit{mTauR}]{\mVV\traSS{\tau}\mVV'}
    % {\mV\paralG\mVV\traSS{\tau}\mV\paralG\mVV'}
    % \\\\\\
     \inference[\rtit{mVrE}]{}{\stp\paralG\stp \traS{\tau} \stp}
    % \quad
     % \\\\\\
     \quad
    \inference[\rtit{mVrC1}]{}{{{\yes\paralC\mV} \traS{\tau} {\mV}}}
    \\\\\\
    % \and
    \inference[\rtit{mVrC2}]{}{{{\no\paralC\mV} \traS{\tau} {\no}}}
    % \\\\\\
   % \quad
   \and
    \inference[\rtit{mVrD1}]{}{{{\no\paralD\mV} \traS{\tau} {\mV}}}
    % \\\\\\
    \and
    \inference[\rtit{mVrD2}]{}{{{\yes\paralD\mV} \traS{\tau} {\yes}}}
	\end{mathpar}
	% \textbf{Instrumentation}
	% % \begin{mathpar}
	% %   \inference[\rtit{iMon}]{p \traSS{\acta} p' & \mV \traSS{\acta} \mV'}{\sys{\mV}{p} \traSS{\acta} \sys{\mV'}{p'}} \and
	% %   \inference[\rtit{iTer}]{p \traSS{\acta} p' & \mV \traSSN{\acta}  % \mV' & \neg\verd{\mV}
	% %   }{\sys{\mV}{p} \traSS{\acta} \sys{\nil}{p'}}
	% % \end{mathpar}
	% \begin{mathpar}
	% 	\inference[\rtit{iMon}]{p \traSS{\acta} p' & \mV \traSS{\acta} \mV'}{\sys{\mV}{p} \traSS{\acta} \sys{\mV'}{p'}}
	% 	\and
	% 	\inference[\rtit{iTer}]{p \traSS{\acta} p' & \mV \traSSN{\acta}  & \mV \traSSN{\tau} }{\sys{\mV}{p} \traSS{\acta} \sys{\stp}{p'}}
	% 	\\\\\\
	% 	\inference[\rtit{iAsyP}]{p \traSS{\tau} p'}{\sys{\mV}{p} \traSS{\tau} \sys{\mV}{p'}}
	% 	\and
	% 	\inference[\rtit{iAsyM}]{\mV \traSS{\tau} \mV'}{\sys{\mV}{p} \traSS{\tau} \sys{\mV'}{p}}
	% \end{mathpar}
	\caption{Monitor Syntax and Labelled-Transition Semantics}
	\label{fig:monit-instr}
\end{figure}

\subsectionmaybe{The Monitors.}

We consider the operational monitoring system of \cite{FraAI:17:FMSD,AcetoAFIL19}, summarised in \Cref{fig:monit-instr} (symmetric rules for binary operators are omitted).
The full system is given in \Cref{sec:regular-defs-appendix}.
Monitors are states of a transition system where \ch{\mV}{\mVV} denotes an (external) choice and $\mV \paralG \mVV$ denotes a composite monitor where $\paralG \in \sset{\paralD,\paralC}$.
There are three distinct \emph{verdict} states, \yes, \no, and \mend, although only the first two are relevant to monitorability.
This semantics gives an operational account of how a monitor in state \mV incrementally analyses a sequence of actions $\ftV=\acta_{1}\ldots\acta_{k}$ to reach a new state \mVV;
%
% A monitor
the monitor \mV accepts (\resp rejects) a trace \fftV, \acc{\mV,\fftV} (\resp \rej{\mV,\fftV}), when it can transition to the verdict state \yes\ (\resp \no) while analysing a prefix $\ftV\preceq\fftV$.
Since verdicts are  irrevocable (rule \rtit{mVer} in \Cref{fig:monit-instr}),
it is not hard to see that this operational framework satisfies the conditions for a monitoring system of \Cref{def:monitoring-system}.
%
% The full system can be found in   \Cref{sec:regular-defs-appendix}.
%
The monitoring system of \Cref{fig:monit-instr} is also maximal for regular properties, according to \Cref{def:abstract-monitoring-system}.
This concrete instance thus demonstrates the realisability of the abstract definitions in \Cref{sec:mon-hierarchy}.

\begin{theorem}\label{thm:determines-gives-verdict}
	For all $\hV {\in} \UHML$, there is a monitor  $\mV{\in}\Mon$ that is sound for $\hV$ and accepts
%	(\resp rejects)
	all finite traces
%	\ftV\
	that positively determine $\hV$
	and  rejects
	all finite traces
	%	\ftV\
	that negatively determine $\hV$.
	\qedmaybe
  % and rejects all negative traces that negatively determine $\hV$.
\end{theorem}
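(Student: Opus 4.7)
The plan is to exploit the fact that \UHML captures exactly the regular finfinite properties, so that each $\hV$ denotes a regular $P = \hSem{\hV}$, and to build a monitor that simulates a deterministic automaton tracking the decisiveness of the prefix read so far. First, observe that the sets
\[
P_+ = \{\ftV \in \Act^* \mid \forall\fftV \in \fTrc.~\ftV\fftV \in P\} \text{ and } P_- = \{\ftV \in \Act^* \mid \forall\fftV \in \fTrc.~\ftV\fftV \notin P\}
\]
of positively- and negatively-determining finite prefixes are themselves regular languages over $\Act$. Writing $P = P_f \cup P_i$ with $P_f \subseteq \Act^*$ regular and $P_i \subseteq \Act^\omega$ $\omega$-regular, membership of $\ftV$ in $P_+$ amounts to ``every $\ftV$-extension in $\Act^*$ lies in $P_f$ and every $\ftV$-extension in $\Act^\omega$ lies in $P_i$''; both conjuncts are regular conditions on $\ftV$ (the first via a DFA for $P_f$, the second via a standard $\omega$-automata construction), and the argument for $P_-$ is dual.

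Since $P_+$ and $P_-$ are regular, there is a finite deterministic automaton $\mathcal{A}$ over $\Act$ whose states are the classes of the Myhill--Nerode congruence refining the partition of $\Act^*$ into $P_+$, $P_-$, and the remaining undetermined prefixes, with each state classified as \emph{accept}, \emph{reject}, or \emph{continue}. Translate $\mathcal{A}$ into a monitor by associating with each state $q$ a monitor $\mV_q$: accept states become $\yes$, reject states become $\no$, and each continue state $q$ becomes an iterated external choice $\ch{\prf{\acta_1}{\mV_{q_1}}}{\cdots}$ ranging over $\Act$, where $q_i = \delta(q,\acta_i)$. The resulting mutually recursive system is folded into a closed, guarded term by iterated application of Bekic's lemma, yielding a monitor $\mV \in \Mon$. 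A straightforward induction on $|\ftV|$ then shows that $\mV$ reaches $\yes$ after reading $\ftV$ iff $\ftV \in P_+$, and reaches $\no$ iff $\ftV \in P_-$.

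Soundness for $\hV$ is then immediate: by \Cref{def:monitoring-system}, $\acc{\mV,\fftV}$ entails that some prefix $\ftV \preceq \fftV$ already satisfies $\acc{\mV,\ftV}$, whence $\ftV \in P_+$ and so $\fftV \in P$; the argument for rejection is dual. Acceptance of every positively determining finite prefix and rejection of every negatively determining one hold directly by the simulation property.

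The main obstacle will be two-fold: establishing regularity of $P_+$ and $P_-$ uniformly in the finfinite setting (care is needed because a single regular $P$ mixes finite and infinite components, and the definitions quantify over all continuations in $\fTrc$), and carrying out the Bekic-style elimination to reduce the DFA's mutual recursion to the single-variable $\rec{x}{\mV}$ operator of $\Mon$ while preserving guardedness. A direct structural induction on $\hV$ seems less fruitful, because positive and negative determination fail to be compositional with respect to boolean connectives---for instance, $\varepsilon$ positively determines $\hOr{\hSuf{\acta}{\hTru}}{\hNec{\acta}{\hFls}}$ but determines neither disjunct---so the automata-theoretic detour is preferable.
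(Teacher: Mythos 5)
Your proposal is correct and follows essentially the same route as the paper: the paper likewise first shows that the sets $D^+_{\hV}$ and $D^-_{\hV}$ of positively and negatively determining prefixes are regular (via a DFA for the finite part and a deterministic $\omega$-automaton for the infinite part, exactly as in your first paragraph), notes that they are suffix-closed and disjoint, and then invokes a known result (\Cref{thm:regular-to-regular}) that converts such a pair of languages into a regular monitor. The only difference is that you spell out the automaton-to-monitor translation (Myhill--Nerode classes, Beki\'c elimination) that the paper imports as a black box from prior work.
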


As a corollary of \Cref{thm:determines-gives-verdict}, from \Cref{lem:verd-determines} we deduce that for any arbitrary monitoring system $(M,\accrel,\rejrel)$,
% be a monitoring system.
if $\mV \in M$ is sound for some $\hV\in\UHML$, then there is a monitor $\mVV\in\Mon$ from \Cref{fig:monit-instr} that accepts (\resp rejects) all traces \fftV\ that $\mV$ accepts (\resp rejects).
% Therefore, when discussing the monitorability of \UHML properties,
In the sequel, we thus assume that the fixed monitoring system is $(\Mon,\accrel,\rejrel)$ of \Cref{fig:monit-instr}, as it subsumes all others.

% \begin{corollary}\label{thm:anything-goes}
% 	Let  $(M,\accrel,\rejrel)$ be a monitoring system.
% 	If $\mV \in M$ is sound for $\hV\in\UHML$, then there is a $\mVV\in\Mon$, that accepts (\resp rejects) all traces \fftV\ that $\mV$ accepts (\resp rejects). \qed
% \end{corollary}
% \begin{proof}
% 	Immediate from \Cref{lem:verd-determines,thm:determines-gives-verdict}.
% \end{proof}

\section{A Syntactic Characterisation of Monitorability}
\label{sec:syntactic}
% !TEX root = main.tex

We present syntactic  characterizations for the various monitorability classes as fragments of \UHML.

\subsectionmaybe{Partial Monitorability, syntactically.}

In~\cite{AcetoAFIL19} Aceto \etal
% show that only \hTru and \hFls are completely monitorable (\Cref{prop:no-complete}) and
identify a maximal partially monitorable syntactic fragment of \UHML.
%not all \UHML properties are partially monitorable. They also provide maximality results for the partially monitorable fragment.

%the authors confirm \Cref{prop:no-complete}. Moreover, they show
% We distinguish the following two fragments of \UHML.

% \begin{definition}[Safety and Cosafety Fragments of \recHML]
% 	\begin{align*}
% 	\hV,\hVV \in \SHML &::= \hTru ~\mid~ \hFls ~\mid~ [\ASet] \hV ~\mid~ \hV \land \hVV ~\mid~ \max X.\hV ~\mid~ X \\
% 	\hV,\hVV \in \CHML &::= \hTru ~\mid~ \hFls ~\mid~ \hSuf{\ASet} \hV ~\mid~ \hV \lor \hVV ~\mid~ \min X.\hV ~\mid~ X.
% 	\tag*{{\qedd}}
% 	\end{align*}
% \end{definition}

\begin{theorem}[Partial Monitorability \cite{AcetoAFIL19}]\label{thm:partial-monitorability} \label{def:shml-chml} Consider the
  % syntactic
  fragments:
  \begin{align*}
	\hV,\hVV \in \SHML &\bnfdef \hTru \bnfsepp \hFls \bnfsepp \hNec{\acta}{\hV} \bnfsepp \hAnd{\hV}{\hVV} \bnfsepp \hMaxX{\hV} \bnfsepp \hVarX \text{ and }\\
	\hV,\hVV \in \CHML &\bnfdef \hTru \bnfsepp \hFls \bnfsepp \hSuf{\acta} \hV \bnfsepp \hOr{\hV}{\hVV} \bnfsepp \hMinX{\hV} \bnfsepp \hVarX .
	% \tag*{{\qedd}}
	\end{align*}
	The fragment \SHML is monitorable for violation whereas \CHML is monitorable for satisfaction.
	Furthermore, if $\hV \in \UHML$ is monitorable
	for satisfaction (\resp for violation) by some $\mV{\in}\Mon$, then it is expressible in
%	the fragment
	$\CHML$(\resp \SHML) , \ie   $\exists \hVV {\in} \CHML$ 	(\resp $\hVV {\in} \SHML$), such that $\hSemF{\hV} {=} \hSemF{\hVV}$.
	\qedmaybe
\end{theorem}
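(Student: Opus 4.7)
The plan is to address the two halves of the statement separately. First, I would construct violation-complete monitors for \SHML formulae (and dually satisfaction-complete monitors for \CHML), and second, show that any \UHML formula monitorable for violation by some $\mV \in \Mon$ is semantically equivalent to some \SHML formula (and dually for satisfaction and \CHML).

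For the first part (expressiveness), I would define a compositional monitor synthesis $\mathcal{S}\colon \SHML \to \Mon$ by structural induction on the formula, mapping $\hFls$ to $\no$, each universal modality $\hNec{\acta}{\hV}$ to a prefix monitor $\prf{\acta}{\mathcal{S}(\hV)}$ (which correctly mirrors the universal modality's vacuous satisfaction on traces not headed by $\acta$), each conjunction to the conjunctive parallel composition $\paralC$, and each greatest fixed point $\hMaxX{\hV}$ to the recursive monitor $\rec{x}{\mathcal{S}(\hV)}$. The core technical step is a structural induction establishing that $\rej{\mathcal{S}(\hV), \fftV}$ iff $\fftV \notin \hSemF{\hV}$; this immediately yields both soundness and violation-completeness of $\mathcal{S}(\hV)$ for $\hV$. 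The \CHML case is completely dual, using existential modalities, $\paralD$, and least fixed points.

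For the second part (maximality of the syntactic fragment), I would proceed semantically. By \Cref{lem:verd-determines} together with \Cref{thm:determines-gives-verdict}, monitorability of $\hV$ for violation amounts to the safety-style condition that every $\fftV \notin \hSemF{\hV}$ has a finite prefix $\ftV$ that negatively determines $\hV$. Since $\hV \in \UHML$ denotes a regular property, the set of such bad prefixes is a regular language over $\Act$, recognisable by some deterministic finite automaton $\mathcal{A}$. I would then construct an \SHML formula from $\mathcal{A}$ by assigning a greatest fixed point variable $X_q$ to each state $q$, a conjunct $\hNec{\acta}{X_{q'}}$ for each transition $q \xrightarrow{\acta} q'$, and binding $X_q$ to $\hFls$ at rejecting states.

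The main obstacle is the correctness argument in the second part: showing that the \SHML formula built from the bad-prefix automaton denotes exactly $\hSemF{\hV}$ in the finfinite semantics. Finite-trace membership follows routinely from how $\hMaxX{\hV}$ unrolls along the trace. Infinite traces require a coinductive argument: a trace belongs to the greatest-fixed-point denotation precisely when it never forces a transition into an $\hFls$-labelled state, which must then be tied back to $\mathcal{A}$ never reaching a rejecting state. Ensuring that this coinductive characterisation aligns exactly with the semantic condition---namely, that a trace has no bad prefix iff it lies in $\hSemF{\hV}$---is the technically delicate step of the proof.
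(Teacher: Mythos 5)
The paper itself does not prove this theorem: it is quoted from Aceto \etal~\cite{AcetoAFIL19}, so the only in-paper artefact against which to check your synthesis direction is the synthesis function printed immediately after the statement --- and there your proposal has a concrete bug. You synthesise $\hNec{\acta}{\hV}$ as the bare prefix monitor $\prf{\acta}{\mathcal{S}(\hV)}$, which is fine in isolation, but the parallel rule \rtit{mPar} of \Cref{fig:monit-instr} is \emph{synchronous}: $\mV\paralC\mVV$ can perform $\acta$ only if both components can. Take $\Act=\sset{\acta,\actb}$ and $\hV = \hAnd{\hNec{\acta}{\hFls}}{\hNec{\actb}{\hFls}}$, whose denotation is $\{\varepsilon\}$. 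Your monitor is $\prf{\acta}{\no}\paralC\prf{\actb}{\no}$, which is stuck on every action (each component blocks the other, and no $\tau$-rule applies), so it rejects nothing even though every nonempty trace violates $\hV$; your claimed invariant ``$\rej{\mathcal{S}(\hV),\fftV}$ iff $\fftV\notin\hSemF{\hV}$'' therefore fails in the completeness direction. This is exactly why the paper's clause is $\hSyn{\hNec{\acta}{\hV}} = \ch{\prf{\acta}{\hSyn{\hV}}}{\sum_{\actb\in\Act\setminus\sset{\acta}}\prf{\actb}{\yes}}$: the extra summands keep every synthesised monitor reactive on all of $\Act$ so that conjunctions never deadlock, and the spurious $\yes$ verdicts are harmless because they are absorbed by \rtit{mVrC1}/\rtit{mVrC2}.

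Your maximality direction is sound in outline and is essentially the construction the appendix deploys for the neighbouring theorems: monitorability for violation forces every violating trace to have a negatively determining prefix (\Cref{lem:part-mon-gives-goodbad}); the set $D^-_{\hV}$ of such prefixes is regular and suffix-closed (\Cref{lem:regularMuD}); and the ``delicate'' coincidence you flag --- that the greatest-fixed-point formula read off a deterministic automaton for $D^-_{\hV}$ denotes exactly the complement of the suffix-closure of $D^-_{\hV}$, which in turn equals $\fTrc\setminus\hSemF{\hV}$ --- is precisely \Cref{lem:AAL1} combined with the correctness of the monitor-to-formula synthesis $\mSyn{\mV}$ of \Cref{sec:regular-defs-appendix}. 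So that half needs no new ideas, only an appeal to lemmas already available; the synthesis half needs the repaired clause for the universal modality before the induction goes through.
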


% From now on, monitors refer to regular monitors, and all properties are expressed in \UHML.

%\Cref{thm:partial-max} demonstrates that the maximal monitoring system of \Cref{def:abstract-monitoring-system} is equivalent to \Mon of \Cref{fig:monit-instr} when it comes to partial monitorability for \UHML.

%\begin{theorem}\label{thm:partial-max}
%	If $\hV \in \UHML$ is  monitorable for satisfaction or violation by a monitoring system $(M,\accrel,\rejrel)$, then there is $\mV\in\Mon$ that is satisfaction-complete, or \resp violation-complete for $\hV$.
%\end{theorem}
%%\begin{proof}
%%	See our wonderful appendix.
%%\end{proof}

% \begin{corollary}\label{cor:partial-max}
As a corollary of \Cref{thm:partial-monitorability}, any $\hV \in \UHML$ that is monitorable for satisfaction (\resp for violation)
%by any monitoring system
can also be expressed as some $\hVV \in \CHML$ (\resp $\hVV \in \SHML$) where $\hSemF{\hV} = \hSemF{\hVV}$. For this fragment, the following automated synthesis function, which is readily implementable, is given in~\cite{AcetoAFIL19}.
\[\begin{array}{rlrlrl}
  \hSyn{\hFls} & \defeq \no \quad&
  \hSyn{\hAndF} &\defeq \hSyn{\hV_1} \paralC \hSyn{\hV_2} \quad&
  \hSyn{\hMaxXF} & \defeq  \rec{x}{\hSyn{\hV}}
  \\
  \hSyn{\hTru} & \defeq \yes &
  \hSyn{\hOrF} &\defeq \hSyn{\hV_1} \paralD \hSyn{\hV_2}
  &
  \hSyn{\hMinXF} & \defeq \rec{x}{\hSyn{\hV}}
  \\

    \hSyn{\hNec{\acta}{\hV}} &
    \multicolumn{3}{l}{
    \textstyle
    \defeq
     \ch{\prf{\acta}{\hSyn{\hV}}}{\sum_{\actb\in\Act\setminus\sset{\acta}}\prf{\actb}{\yes}}}
   &
   \hSyn{\hVarX} &\defeq x
   \\
	\hSyn{\hSuf{\acta} \hV} &
  \multicolumn{3}{l}{
  \textstyle
  \defeq \ch{\prf{\acta}{\hSyn{\hV}}}{\sum_{\actb\in\Act\setminus\sset{\acta}}\prf{\actb}{\no}}
  }
\end{array}\]

% \end{corollary}

\subsectionmaybe{\Useful Monitorability, syntactically.}

We proceed to identify syntactic fragments of \UHML that correspond to \useful monitorability.
% should we say something about this?
%For this, we need to give the following definitions.

%We are now ready to define the
%fragments of \UHML that correspond to \epz-monitorable \UHML properties.

\begin{definition}\label{def:informative-fragments}
	The \useful  fragment is  $\IHML = \SIHML \cup \CIHML$ where
	\begin{align*}
	\SIHML &= \{ \hV_1 \land \hV_2 \in \UHML \mid
	\hV_1 \in \SHML
	\text{ and }
	\hFls \text{ appears in } \hV_1
	\},\\
	\CIHML &= \{ \hV_1 \lor \hV_2 \in \UHML \mid
	\hV_1 \in \CHML
	\text{ and }
	\hTru \text{ appears in } \hV_1
	\}
  % \text{ and}\\
	% \IHML &= \SIHML \cup \CIHML.
%	\\
%	\SPIHML &= \left\{ \hV_1 \land \hV_2 \in \UHML \Bigm|
%	\begin{tabular}{l}
%	$\hV_1 \in \SHML\cap \eHML$ %, is explicit,
%	and every
%	%		 $\hFls$
%	\\
%	subformula of $\hV_1$ can refute
%	%		 appears in  $\hV_1$ and in
%	%		 the scope \\
%	%		  of every recursion operator in  $\hV_1$
%	\end{tabular}
%	\right\}\\
%	\CPIHML &= \left\{ \hV_1 \lor \hV_2 \in \UHML \Bigm|
%	\begin{tabular}{l}
%	$\hV_1 \in \CHML \cap \eHML$ %, is explicit,
%	and every \\
%	subformula of $\hV_1$ can verify
%	%		 $\hTru$ \\ appears in  $\hV_1$ and in
%	%		 the scope \\
%	%		  of every recursion operator in  $\hV_1$
%	\end{tabular}
%	\right\}\\
%	\PIHML &= \SPIHML \cup \CPIHML
\tag*{\qedd}
	\end{align*}
\end{definition}

\begin{theorem}\label{thm:useful-is-ihml}
	For $\hV \in \UHML$,
	$\hV$ is \usefully monitorable if and only if
	there is some $\hVV \in \IHML$, such that $\hSemF{\hVV} = \hSemF{\hV}$.
	\qedmaybe
\end{theorem}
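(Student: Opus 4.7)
The plan is to prove both directions separately, using \Cref{thm:partial-monitorability}, \Cref{thm:determines-gives-verdict} and \Cref{lem:verd-determines} as the main tools. Suppose first that $\hVV \in \IHML$; focus on the case $\hVV = \hV_1 \lor \hV_2 \in \CIHML$, the $\SIHML$ case being entirely dual. The crucial auxiliary claim is that whenever $\hV_1 \in \CHML$ is closed, guarded, and contains $\hTru$, then $\hSemF{\hV_1} \neq \emptyset$. I would prove this by structural induction on possibly open $\CHML$ formulae under the valuation $\sigma_0$ mapping every variable to $\emptyset$: the $\hTru$ base case is immediate, $\lor$ and $\hSuf{\acta}{-}$ clearly preserve non-emptiness, and for $\hMin{X}{\hV'}$ monotonicity of the semantic functional gives $\hSem{\hMin{X}{\hV'},\sigma_0} \supseteq \hSem{\hV',\sigma_0[X \mapsto \emptyset]}$, non-empty by induction hypothesis. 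Having secured $\hSemF{\hV_1} \neq \emptyset$, pick any $\fftV \in \hSemF{\hV_1}$; by \Cref{thm:partial-monitorability} the sound satisfaction-complete monitor for $\hV_1$ accepts $\fftV$, hence by \Cref{def:monitoring-system}.1 accepts some finite prefix $\ftV = \acta_1\cdots\acta_n$, and \Cref{lem:verd-determines} ensures $\ftV$ positively determines $\hV_1$. The prefix monitor $\mV' = \prf{\acta_1}{\prf{\acta_2}{\cdots\prf{\acta_n}{\yes}}}$ produces $\yes$ exactly on continuations of $\ftV$, all of which lie in $\hSemF{\hV_1} \subseteq \hSemF{\hV_1 \lor \hV_2}$; it never rejects. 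Hence $\mV'$ is sound and \useful for $\hVV$.

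Conversely, let $\hV \in \UHML$ be \usefully monitorable by a sound monitor $\mV$. Assuming first that $\mV$ accepts some trace, \Cref{def:monitoring-system}.1 yields a finite $\ftV = \acta_1\cdots\acta_n$ with $\acc{\mV,\ftV}$, and \Cref{lem:verd-determines} shows that $\ftV$ positively determines $\hV$. I would then take
\[
  \hVV \;\deftxt\; \hSuf{\acta_1}{\hSuf{\acta_2}{\cdots\hSuf{\acta_n}{\hTru}}} \;\lor\; \hV.
\]
The left disjunct lies in $\CHML$ and contains $\hTru$, placing $\hVV$ in $\CIHML$. Equivalence $\hSemF{\hVV} = \hSemF{\hV}$ is immediate: any trace satisfying the left disjunct has $\ftV$ as a prefix, which by positive determination puts it inside $\hSemF{\hV}$. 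The case where $\mV$ rejects some trace is symmetric, using $\hVV = \hNec{\acta_1}{\cdots\hNec{\acta_n}{\hFls}} \land \hV \in \SIHML$, whose left conjunct is satisfied exactly by traces not starting with $\ftV$, a superset of $\hSemF{\hV}$ by negative determination.

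The most delicate step is the non-emptiness auxiliary in the first direction (and its $\SHML$ dual, asserting non-universality of $\hV_1$ when it contains $\hFls$): one must verify that an occurrence of $\hTru$ genuinely forces $\hSemF{\hV_1} \neq \emptyset$ through nested fixpoints. Guardedness is essential here, since ill-guarded formulae like $\hMin{X}{X \land \hTru}$ would vacuate the $\hTru$ occurrence of semantic content. The structural induction on open formulae, combined with monotonicity at the $\hMin{X}{-}$ step, sidesteps this cleanly and should constitute the bulk of the technical work.
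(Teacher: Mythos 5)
Your proof is correct, and the backward direction is essentially the paper's own argument: extract a determining prefix $\ftV$ from \Cref{lem:verd-determines} and \Cref{def:monitoring-system}, then disjoin (resp.\ conjoin) $\hV$ with the diamond- (resp.\ box-) prefix formula ending in $\hTru$ (resp.\ $\hFls$), exactly as in the paper's Lemma on this direction.

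The forward direction, however, takes a genuinely different route. The paper argues purely syntactically: it defines a depth measure $d_{\hFls}$ on $\SHML$ formulae, shows that fixpoint unfolding does not increase it, and inducts on this measure to directly construct a finite trace that negatively determines $\hV_1$; usefulness of the maximal monitor then follows from \Cref{thm:determines-gives-verdict}. You instead prove the semantic statement that an occurrence of $\hTru$ in a $\CHML$ formula forces $\hSemF{\hV_1}\neq\emptyset$ (dually, $\hFls$ in $\SHML$ forces non-universality), via a structural induction on open formulae using $f(\emptyset)\subseteq\mu f$ at the least-fixpoint step, and then recover a determining prefix by routing through the satisfaction-complete monitor of \Cref{thm:partial-monitorability} together with \Cref{lem:verd-determines}. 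Both arguments are sound; yours buys a shorter, more semantic proof that reuses \Cref{thm:partial-monitorability}, while the paper's is self-contained and yields the determining trace constructively without invoking completeness of the $\CHML$/$\SHML$ monitors. One small inaccuracy in your commentary: guardedness is not actually what saves the fixpoint case. Your own inequality $\hSem{\hV',\sigma_0[X\mapsto\emptyset]}\subseteq\hSem{\hMin{X}{\hV'},\sigma_0}$ holds for arbitrary monotone functionals, guarded or not, and your problematic example $\hMin{X}{(X\land\hTru)}$ is not a $\CHML$ formula in the first place, since $\CHML$ has no conjunction; it is the absence of $\land$ from $\CHML$ (and of $\lor$ from $\SHML$), not guardedness, that prevents an occurrence of $\hTru$ from being semantically vacuous.
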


\begin{example}\label{ex:useful}
 $\ltlG\,\neg \fa \land \ltlF\, \su$
	from \Cref{ex:partial-monitorability-limits} (expressed here in LTL) is a $\SIHML$ property, as $\ltlG \neg \fa$ can be written in \SHML as
  $\hMaxX{\hAnd{\hNec{\fa}\hFls}{\hAnd{\hNec{\su}{\hVarX}}{\hNec{\re}{\hVarX}}}}$.
	In contrast, $\ltlF\ltlG \neg \re$ cannot be written in \IHML, as it is not \usefully monitorable.
	\qedd
\end{example}

\begin{remark}
	In $\SIHML$ and $\CIHML$, $\hV_1$ describes an \useful part of the formula, that is, a formula with at least one path to \hTru (or \hFls), which indicates that the corresponding finite trace determines the property.
	Monitor synthesis from these fragments can use this part of the formula to synthesize a monitor
%	(using a method such as the one given in \cite{FraAI:17:FMSD})
	that detects the finite traces that satisfy (violate) $\hV_1$.
	The value of the synthesised monitor then depends on $\hV_1$.
	It is therefore important to have techniques to extract some $\hV_1$ that will retain as much monitoring information as possible. This extraction is outside the scope of this paper and left as future work.
	\qedd
\end{remark}

\subsectionmaybe{Persistently \Useful Monitorability, syntactically.}

We also give a syntactic characterization of the \UHML properties that are \superusefully monitorable for satisfaction or violation.
%The fragments that we provide for this case are similar to the ones for \useful monitorability.
As the requirements for \superuseful monitors are
more subtle than for \useful monitors, the fragments we present are equally more involved than those for \useful monitorability.

\begin{definition}\label{def:active-reactive-formulae}
	We define $\eHML$, the explicit fragment of \UHML:
	%	For $\hV \in \UHML$, we define $\Act(\hV)$ recursively: $\Act(\hTru) = \Act(\hFls) = \Act(X) = \Act(\max X.\hV) = \emptyset$; $\Act(\hNec{\act}\hV) = \{\act\}$; and $\Act(\hV \land \hV') = \Act(\hV \lor \hV') =  \Act(\hV) \cup \Act(\hV')$.
	%	Formula $\hV$ is called \emph{top-explicit} when $\hV = \hFls$ or
	%	$\Act(\hV) = \Act$.
	%	It is called \emph{internally explicit} when for every subformula $\max X.\hVV$
	%	The \emph{explicit} formulae of $\UHML$ are the active formulae that are formed by allowing fixed points to only be applied on top-explicit formulae,  and modalities to only be applied on recursion variables,  top-explicit formulae, or .
	\begin{align*}
	\hV \in \eHML &\bnfdef  \hTru&
	&\bnfsepp  \hFls
	&  %\\
	& \bnfsepp \hMinX{\hV}  &
	& \bnfsepp \hMaxX{\hV}  %\\
	& \bnfsepp\; X
	\\
	& \bnfsepp \hOr{\hV\,}{\,\hVV}
	&& \bnfsepp \hAnd{\hV\,}{\,\hVV}
	&&\bnfsepp \bigvee_{\act \in \Act}\hSuf{\act}{\hV_\act}
	&&
	\bnfsepp \bigwedge_{\act \in \Act}\hNec{\act}{\hV_\act}
	. \tag*{\begin{tabular}{r}
		\\
		\qedd
		\end{tabular}}
	\end{align*}
	%	\qedd
\end{definition}

\begin{example}
	Formula $\hNec{\fa}{\hNec{\su}\hFls}$ is not explicit, but it can be rewritten as the explicit formula
  $\hNec{\fa}{(\hNec{\su}{\hFls} \land \hNec{\fa}{\hTru} \land \hNec{\re}{\hTru})} \land \hNec{\su}\hTru \land \hNec{\re}\hTru$.
	\qedd
\end{example}

Roughly, the following definition captures whether \hTru and \hFls are reacheable from subformulae (where the binding of a variable is reachable from the variable).

\begin{definition}\label{def:refute-formulae}
	Given a closed $\SHML$ (\resp \CHML) formula $\hV$, we   define for a subformula $\hVV$ that it can refute (\resp verify) in 0 unfoldings, when $\hFls$ (\resp $\hTru$) appears in $\hVV$, and that it can refute (\resp verify) in $k+1$ unfoldings, when it can refute (\resp verify) in $k$ unfoldings, or $X$ appears in $\hVV$ and $\hVV$ is in the scope of a subformula $\max X.\hVV'$ (\resp  $\min X.\hVV'$) that can refute (\resp verify) in $k$ unfoldings.
	We simply say that $\hVV$ can refute (\resp verify) when it can refute (\resp verify) in $k$ unfoldings, for some $k \geq 0$.
	\qedd
\end{definition}

\begin{example}
	For formula $\hMaxX{\hNec{\su}{X} \land \hNec{\fa} \hFls \land \hNec{\re} \hFls}$, subformula $\hNec{\su}{X} \land \hNec{\fa} \hFls \land \hNec{\re} \hFls$ can refute in $0$ unfoldings. In contrast,
	$\hNec{\su}{X}$ cannot refute in $0$ unfoldings, but it can refute in 1, because $X$ appears in it and $\hMaxX{\hNec{\su}{X} \land \hNec{\fa} \hFls \land \hNec{\re} \hFls}$ can refute in $0$ unfoldings. Therefore, all subformulae of
  $\hMaxX{\hNec{\su}{X} \land \hNec{\fa} \hFls \land \hNec{\re} \hFls}$ can refute.
	\qedd
\end{example}

We now define the
fragments of \UHML corresponding to \UHML properties that are \superusefully monitorable for satisfaction or violation.

\begin{definition}\label{def:p-informative-fragments}
	We define the fragment
	$\PIHML = \SPIHML \cup \CPIHML$ where:
	\begin{align*}
%	\SIHML &= \{ \hV_1 \land \hV_2 \in \UHML \mid
%	\hV_1 \in \SHML Syntactically
%	\text{ and }
%	\hFls \text{ appears in } \hV_1
%	\}\\
%	\CIHML &= \{ \hV_1 \lor \hV_2 \in \UHML \mid
%	\hV_1 \in \CHML
%	\text{ and }
%	\hTru \text{ appears in } \hV_1
%	\}\\
%	\IHML &= \SIHML \cup \CIHML
%	\\
	\SPIHML &= \left\{ \hV_1 \land \hV_2 \in \UHML \Bigm|
	\begin{tabular}{l}
	$\hV_1 \in \SHML\cap \eHML$ %, is explicit,
	and every
	%		 $\hFls$
	\\
	subformula of $\hV_1$ can refute
	%		 appears in  $\hV_1$ and in
	%		 the scope \\
	%		  of every recursion operator in  $\hV_1$
	\end{tabular}
	\right\}\\
	\CPIHML &= \left\{ \hV_1 \lor \hV_2 \in \UHML \Bigm|
	\begin{tabular}{l}
	$\hV_1 \in \CHML \cap \eHML$ %, is explicit,
	and every \\
	subformula of $\hV_1$ can verify
	%		 $\hTru$ \\ appears in  $\hV_1$ and in
	%		 the scope \\
	%		  of every recursion operator in  $\hV_1$
	\end{tabular}
	\right\}
  \tag*{
  $\begin{array}{r}
     \\[0.73em]
    \qedd
  \end{array}$
  }
  % \\
	% \PIHML &= \SPIHML \cup \CPIHML
	\end{align*}
\end{definition}

\begin{theorem}\label{thm:supuseful-regular-is-pihml}
	For $\hV \in \UHML$,
	$\hV$ is \superusefully monitorable for violation (\resp for satisfaction) if and only if there is some $\hVV \in \SPIHML$ (\resp $\hVV \in \CPIHML$), such that $\hSemF{\hVV} = \hSemF{\hV}$.
	\qedmaybe
\end{theorem}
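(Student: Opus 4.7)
The plan is to prove the two directions separately, focusing on the violation case since the satisfaction case is fully dual (interchange $\hFls$/$\hTru$, $\hNec{\act}$/$\hSuf{\act}$, $\SHML$/$\CHML$, $\hMaxX{\cdot}$/$\hMinX{\cdot}$, and $\no$/$\yes$ throughout).

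For the ``if'' direction, suppose $\hV$ is semantically equivalent to some $\hVV = \hV_1 \land \hV_2 \in \SPIHML$, with $\hV_1 \in \SHML \cap \eHML$ and every subformula of $\hV_1$ able to refute. I pick the monitor $\mV = \hSyn{\hV_1}$ given by the synthesis function of \Cref{thm:partial-monitorability}; by that theorem $\mV$ is sound and violation-complete for $\hV_1$. Since $\hSemF{\hV_1 \land \hV_2} \subseteq \hSemF{\hV_1}$, $\mV$ remains sound for $\hV$. To establish that $\mV$ is \superuseful for violation, I show by induction on $|\ftV|$ that after reading any $\ftV$, $\mV$ reaches a state which, modulo $\tau$-steps and fixpoint unfoldings, equals $\hSyn{\hVV'}$ for some subformula $\hVV'$ of $\hV_1$. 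The $\eHML$ shape of $\hV_1$ ensures that each action is matched by an explicit $\hNec{\act}{\cdot}$-branch, so $\mV$ never degenerates to $\yes$ on a ``wrong'' action. Since $\hVV'$ can refute in $k$ unfoldings, there is a finite sequence of $\hNec{\act_i}{\cdot}$-descents from $\hVV'$ to $\hFls$ crossing at most $k$ fixpoint back-edges; following these actions drives $\mV$ to $\no$, yielding the required $\fftV$ with $\rej{\mV,\ftV\fftV}$.

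For the ``only if'' direction, suppose $\hV$ is \superusefully monitorable for violation by some sound monitor. By \Cref{thm:determines-gives-verdict} and the remark following it, I may take this monitor to be some $\mV \in \Mon$. I build an $\SHML$ formula $\hV_1$ whose models are precisely the traces that have no prefix rejected by $\mV$, via the standard translation of a finite-state monitor into $\SHML$: quotient the reachable monitor states of $\mV$ by $\tau$-transitions, introduce a recursion variable $X_s$ per equivalence class, set $X_s = \hFls$ when $s$ can $\tau$-reach $\no$ and $X_s = \bigwedge_{\act \in \Act} \hNec{\act}{X_{s'}}$ (with $s'$ the class reached after action $\act$) otherwise, then solve the resulting guarded system with nested greatest fixpoints. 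The formula is in $\SHML \cap \eHML$ by construction. Soundness of $\mV$ for $\hV$ yields $\hSemF{\hV} \subseteq \hSemF{\hV_1}$, hence $\hSemF{\hV} = \hSemF{\hV_1 \land \hV}$, so that $\hV \equiv \hV_1 \land \hV_2$ with $\hV_2 := \hV$. That every subformula of $\hV_1$ can refute is exactly the \superuseful hypothesis recast syntactically: the subformula associated with $X_s$ characterises the traces surviving from $s$, and the guarantee that every $\ftV$ reaching $s$ admits a continuation rejected by $\mV$ supplies a finite monitor path from $s$ to $\no$, which unfolds in the fixpoint system into a finite derivation from $X_s$ to $\hFls$.

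The main obstacle is in the ``only if'' direction: ensuring that the correspondence between ``the monitor state $s$ can reach $\no$ in finitely many transitions'' and ``the subformula at $X_s$ can refute in $k$ unfoldings for some $k$'' holds \emph{literally}, and not merely up to semantic equivalence. This forces a careful layering of $\hMaxX{\cdot}$ operators, essentially one per strongly connected component of the monitor's state graph, so that the unfolding counter of \Cref{def:refute-formulae} matches the number of times a monitor run traverses a back-edge before rejecting. A smaller but related issue in the ``if'' direction is that the synthesis from \Cref{thm:partial-monitorability} inserts $\yes$-branches for ``other'' actions at each $\hNec{\acta}{\cdot}$; one must check that when $\hV_1$ is already in $\eHML$, these auxiliary branches are semantically redundant and do not sever any refutation path. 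Once these bookkeeping points are settled, the induction and the syntactic translation close cleanly.
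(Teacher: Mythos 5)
Your overall strategy matches the paper's: both directions are handled separately, the ``if'' direction by exhibiting a monitor driven by the refutable $\eHML\cap\SHML$ component $\hV_1$, and the ``only if'' direction by extracting from a (deterministic, explicit) monitor an $\SHML\cap\eHML$ formula all of whose subformulae can refute and conjoining it with $\hV$ itself (your choice $\hV_2 := \hV$ is exactly the paper's $\mSyn{\mV}\land\hV$ step in \Cref{thm:to-supuseful-regular-to-pihml}). The differences are presentational: the paper's ``if'' direction (\Cref{thm:pihml-to-supuseful-regular}) stays denotational---it shows by induction on $\ftV$, via the derivative \Cref{lem:move-violation-forward}, that every finite trace has an extension negatively determining $\hV$, and then invokes the maximal monitor of \Cref{thm:determines-gives-verdict}---whereas you run the induction operationally on $\hSyn{\hV_1}$; and the paper's ``only if'' direction routes through \Cref{thm:determinization} and the explicitness result \Cref{lem:explicit-deterministic} before applying the existing $\mSyn{\cdot}$ translation, whereas you rebuild that translation as an equation system over the monitor's state graph.

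Two points in your sketch need repair. First, the inference ``$\hSyn{\hV_1}$ is sound for $\hV_1$ and $\hSemF{\hV}\subseteq\hSemF{\hV_1}$, hence sound for $\hV$'' is valid only for rejections; for acceptances the inclusion points the wrong way ($\acc{\mV,\fftV}$ yields $\fftV\in\hSemF{\hV_1}$, not $\fftV\in\hSemF{\hV}$). The claim is rescued by noting that $\hTru$ cannot refute and hence does not occur in $\hV_1$, and that the auxiliary $\actb.\yes$ summands inserted by the synthesis are absorbed by the $\paralC$ rules, so $\hSyn{\hV_1}$ in fact never accepts---but this must be argued; the paper sidesteps the issue entirely by using the maximal monitor, which is sound by construction. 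Second, your equations $X_s = \bigwedge_{\act\in\Act}\hNec{\act}{X_{s'}}$ presuppose that every reachable, not-yet-rejecting state has an $\act$-successor for \emph{every} $\act$; a missing transition would force a $\hTru$-like component that cannot refute. That totality is not automatic: it is precisely the content of \Cref{lem:explicit-deterministic,cor:explicit-deterministic}, derived from persistence of usefulness, and your construction also silently assumes the monitor has been determinised (\Cref{thm:determinization}) so that $s'$ is well defined. With these points made explicit, your argument closes and coincides in substance with the paper's.
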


\begin{remark}
	To the best of our efforts, a syntactic characterisation of \superuseful monitorability would involve pairs of equivalent formulae with parts from $\SHML$ and $\CHML$ that together become, in some sense, explicit.
	We leave such a characterization as future work.
	\qedd
\end{remark}

%\subsectionmaybe{tentative conclusion paragraph}
%
%In the sequel we show that this operational hierarchy gives a concrete, uniform operational interpretation for various notions of monitorability that appear in the literature. The classical definitions that we will consider have a denotational flavour: they don't refer to monitors and to how monitor verdicts might relate to properties; rather, they consider directly the characteristics of a set of traces.

%The following,
% \Cref{thm:anything-goes},
% takes this observation one step further.

%\subsectionmaybe{Sound, \Useful, Optimal Monitors}
%
%\af{Need to very briefly go over the operational setup (We might skip the monitor syntax but present only the intrumentation relation composing a system LTS and a monitor LTS.  We can also explain abstractly what constitutes a monitor LTS \eg verdicts are sinks.)}
%
%\af{We should also concisely describe our definition/s of monitorability.   The more space we employ on this section, the more we give the impression that the contribution is incremental.}

\section{Safety and Co-safety}
\label{sec:safety}
% !TEX root = main.tex

The classic (and perhaps the most intuitive)
% intuitive, and persistent
definition of monitorability consists of (some variation of) \emph{safety} properties~\cite{alpern1985defining,schneider2000enforceable,viswanathan2004foundations,FalconeFernandezMounier:STTT:12,AcetoAFIL19}.
Nevertheless, there are subtleties associated with how exactly safety properties are defined---particularly over the finfinite domain---and how decidable they need to be to qualify as truly monitorable.
For example, Kim and Viswanathan~\cite{viswanathan2004foundations} argued that only recursively enumerable safety properties are monitorable (they restrict themselves to infinite, rather than finfinite traces).
By and large, however,
% like Aceto \etal~\cite{AcetoAFIL19} and Falcone \etal~\cite{FalconeFernandezMounier:STTT:12}
most works on monitorability restrict themselves to regular properties, as we
% did earlier
do in \Cref{sec:instantiation-regular}.

We adopt the  definition of safety that is intuitive for the context of RV: a property can be considered monitorable if its failures can be identified by a finite prefix.
This is equivalent to Falcone \etal's formal definition of safety properties\cite[Def.~4]{FalconeFernandezMounier:STTT:12} and
% to Kim and Viswanathan's~\cite{viswanathan2004foundations} and
work such as~\cite{alpern1985defining,ChangMannaPnueli:92:ALP} when restricted to infinite traces.

\begin{definition}[Safety]\label{def:safety}
A property $\propV \subseteq \fTrc$ is a \emph{safety property} if every $\fftV \notin \propV$ has a prefix, \ftV\ that determines $\propV$ negatively. The class of safety properties is denoted as \clsSafe in \Cref{fig:hierarchy}. \qedd
\end{definition}

 Pnueli and Zaks, and Falcone \etal (among others) argue that it makes sense to monitor both for violation and satisfaction.
 Hence, if safety is monitorable for violations, then the dual class, co-safety (\aka guarantee~\cite{FalconeFernandezMounier:STTT:12}, reachability~\cite{berard2013systems}), is monitorable for satisfaction. That is, every trace that satisfies a co-safety property can be positively determined by a finite prefix.

\begin{definition}[Co-safety]\label{def:co-safety}
A property $\propV \subseteq \fTrc$ is a \emph{co-safety property} if every $\fftV \in \propV$ has prefix, \ftV, that determines $\propV$ positively. The class of safety properties is denoted as \clsCSafe, also represented in \Cref{fig:hierarchy}. \qedd
\end{definition}

\begin{example}\label{ex:safety-cosafety} \emph{``Eventually $\su$ is reached''}, \ie  $\ltlF\,\su$, is a co-safety property whereas \emph{``$\fa$ never occurs''}, \ie $\ltlG\, \neg\fa$, is a safety property.
The property \emph{``$\su$ occurs infinitely often''}, \ie $\ltlG\,\ltlF\,\su$, is neither safety nor co-safety. The property only holds over infinite traces so it cannot be positively determined by a finite trace.  Dually, there is \emph{no} finite trace that determines that there cannot be an infinite number of \su occurrences in a continuation of the trace.
\exqed
\end{example}

% \subsubsection
\paragraph{Safety and Co-safety, operationally.}
%
% Given how prevalent and intuitive the class of safety properties is as a monitorable class of properties,
It should come as no surprise that safety and co-safety coincide with an equally natural operational definition.
Here, we establish the correspondence with the denotational definition of safety (co-safety), completing three correspondences amongst the monitorability classes of \Cref{fig:hierarchy}

\begin{theorem}\label{thm:safety-oper} $\clsVC=\clsSafe$ and  $\clsSC=\clsCSafe$.
%	\qed
	% A property $\propV \subseteq \fTrc$ is monitorable for violation
	% in a monitoring system
	% %	/
	% %	by a regular monitor
	% if and only if it is a safety property. Dually, a property $\propV \subseteq \fTrc$ is monitorable for satisfaction
	% in a monitoring system if and only if it is a co-safety property.
\end{theorem}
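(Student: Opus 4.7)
The plan is to prove the two equalities by showing both inclusions, leveraging the maximal monitoring system fixed in \Cref{sec:mon-hierarchy} together with the two postulates in \Cref{def:monitoring-system} (prefix-based verdicts and irrevocability). The argument for $\clsSC = \clsCSafe$ is entirely dual to the one for $\clsVC = \clsSafe$, so I would write the latter in detail and simply note the duality.

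For the inclusion $\clsVC \subseteq \clsSafe$, I would take $\propV \in \clsVC$ and some sound, violation-complete monitor $\mV$ for $\propV$. Given any $\fftV \notin \propV$, violation-completeness yields $\rej{\mV,\fftV}$, so by \Cref{def:monitoring-system}.1 there exists $\ftV \preceq \fftV$ with $\rej{\mV,\ftV}$. Soundness plus \Cref{lem:verd-determines} then tells us that $\ftV$ negatively determines $\propV$, exactly as required by \Cref{def:safety}.

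For the converse $\clsSafe \subseteq \clsVC$, I would take a safety property $\propV$ and use the distinguished monitor $\mV_\propV$ supplied by the assumed maximal monitoring system (\Cref{def:abstract-monitoring-system}). By \Cref{thm:abstract-is-sound}(i), $\mV_\propV$ is sound for $\propV$. For violation-completeness, take any $\fftV \notin \propV$; by \Cref{def:safety} there is a prefix $\ftV \preceq \fftV$ that negatively determines $\propV$, so by \Cref{def:abstract-monitoring-system}(ii) we have $\rej{\mV_\propV,\ftV}$. Writing $\fftV = \ftV\fftVV$ and applying the irrevocability clause \Cref{def:monitoring-system}.2 then yields $\rej{\mV_\propV,\fftV}$, which completes the violation-completeness check and hence $\propV \in \clsVC$.

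The case $\clsSC = \clsCSafe$ is obtained by swapping acceptance for rejection, positive for negative determination, and using $\clsCSafe$ (\Cref{def:co-safety}) in place of $\clsSafe$; the structure of the proof, and the appeals to \Cref{lem:verd-determines}, \Cref{def:monitoring-system}, and \Cref{thm:abstract-is-sound}, is unchanged. There is no real obstacle here since maximality was designed precisely to bridge trace-based determination and monitor verdicts; the only thing to be careful about is to invoke both halves of \Cref{def:monitoring-system} (prefix-detection for one inclusion, irrevocability for the other) rather than conflating them.
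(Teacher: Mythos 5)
Your proposal is correct and follows essentially the same route as the paper: the direction $\clsSafe \subseteq \clsVC$ uses the maximal monitor $\mV_\propV$ together with \Cref{thm:abstract-is-sound} and \Cref{def:abstract-monitoring-system}, while the converse uses the prefix and irrevocability postulates of \Cref{def:monitoring-system} plus soundness. The only cosmetic difference is that you inline (via \Cref{lem:verd-determines}) the argument that the paper delegates to \Cref{lem:part-mon-gives-goodbad}.
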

\begin{proof}
	We treat the case for safety, as the case for co-safety is similar.
	If $\propV$ is a safety property, then for every $\fftV \in \fTrc \setminus \propV$, there is some finite prefix $\ftV$ of $\fftV$ that negatively determines $\propV$.
	Therefore, $\mV_\propV$ is sound (\Cref{thm:abstract-is-sound}) and violation-complete (\Cref{def:abstract-monitoring-system}) for $\propV$.
	The other direction follows from \Cref{lem:part-mon-gives-goodbad}.
\end{proof}

Aceto \etal~\cite{AcetoAFIL19} already show the correspondence between violation (dually, satisfaction) monitorability over finfinite traces and properties expressible in \SHML (dually, \CHML).
As a corollary of \Cref{thm:safety-oper}, we obtain a syntactic characterisation for the \clsSafe and \clsCSafe monitorability classes.

\begin{remark}\label{rem:wtf-falcone}
Falcone \etal~\cite[Def. 17, Thm. 3]{FalconeFernandezMounier:STTT:12} propose definitions of monitorability over finfinite traces that are claimed to coincide with the classes \clsSafe, \clsCSafe and their union. However, this claim is incorrect. The properties, ``the trace is finite"  and $\ltlG\,\ltlF\,\su$ from \Cref{ex:safety-cosafety} are neither safety nor co-safety properties.  On the other hand, they are monitorable according to the alternative monitorability definition given in \cite[Def. 17]{FalconeFernandezMounier:STTT:12}. If the results claimed in \cite[Thm. 3]{FalconeFernandezMounier:STTT:12} held true, this would contradict the fact that those properties are neither safety nor co-safety properties. See \Cref{sec:falcone-appendix} for further details.
\exqed
\end{remark}

% \subsubsection{Safety and Co-safety, Syntactically}
% % Regular Properties}
%
% As a consequence of \Cref{thm:safety-oper,cor:partial-max}, we can characterize safety and co-safety regular properties, with respect to regular monitors.
%
% \begin{corollary}\label{cor:safety-regular}
% 	A formula $\hV \in \UHML$ is monitorable for violation
% 	in a monitoring system
% 	%	/
% 		by a regular monitor
% 	if and only if $\hSemF{\hV}$ is a safety property. Dually, 	a formula $\hV \in \UHML$ is monitorable for satisfaction
% 	in a monitoring system
% 	%	/
% 		by a regular monitor
% 	if and only if $\hSemF{\hV}$ is a co-safety property.
% \end{corollary}
%
% Therefore, safety and co-safety properties have a clear syntactic characterization within \UHML:
%
% \begin{corollary}\label{cor:safety-syntactic}
% 	 For a formula $\hV \in \UHML$, $\hSemF{\hV}$ is a safety property if and only if there is some $\hVV \in \SHML$, such that $\hSemF{\hVV} = \hSemF{\hV}$.
% 	 Dually, $\hSemF{\hV}$ is a co-safety property if and only if there is some $\hVV \in \SHML$, such that $\hSemF{\hVV} = \hSemF{\hV}$.
% \end{corollary}

\section{Pnueli and Zaks}
\label{sec:pz}
% !TEX root = main.tex

% \subsection{Pnueali and Zaks, denotationally}

The work on monitorability due to Pnueli and Zaks~\cite{PnueliZaks:06:FM} is often cited by the RV community~\cite{BartocciFFR:18:RVIntro}.
The often overlooked particularity of their definitions is that they only define monitorability of a property \textit{with respect to a (finite) sequence}.
%
% A property $\propV$ is $\ftV$-monitorable for a sequence $\ftV$ if $\ftV$ has a finite continuation
% whose extensions all
% % of which all continuations
%  agree on $\propV$.
%
%In other words, upon seeing $\ftV$, one might not yet know whether $\propV$ will hold or not, but it is possible to continue $\ftV$ in a way that will, within a finite number of actions, determine $\propV$.

\begin{definition}[\cite{PnueliZaks:06:FM}]\label{def:s-monitorable}
%The p
Property $\propV$ is $\ftV$-monitorable,
%for
where
$\ftV\in \Act^*$, if there is some $\ftVV\in \Act^*$ such that $\propV$ is positively or negatively determined by $\ftV\ftVV$. \qedd
\end{definition}

%%indeed, with implication the claims are wrong
\begin{example}
The property $\bigl(\fa
\land
%\rightarrow
%(
\ltlF\, \re
%)
\bigr) \vee \bigl(\ltlF\, \ltlG\, \su\bigr)$ is $\ftV$-monitorable for any finite trace that begins with \fa, \ie $\fa\ftV$, since it is determined by the extension $\fa\ftV\re$.
It is \emph{not} $\ftV$-monitorable for finite traces that
%does not
begin with an action other than $\fa$.
\exqed
\end{example}

Monitorability over properties---rather than over property--sequence pairs---can then be defined by either quantifying \textit{universally} or \textit{existentially} over 	finite traces: a property is monitorable either if it is \ftV-monitorable for all \ftV, or for some \ftV.
 We address both definitions, which we call \upz - and \epz -monitorability respectively.
 \upz -monitorability is the more standard interpretation: it appears for example in~\cite{FalconeFernandezMounier:STTT:12,BauerLeuckerSchallhart:10:LandC} where it is attributed to Pnueli and Zaks.
 However, the original intent seems to align more with \epz -monitorability: in~\cite{PnueliZaks:06:FM}, Pnueli and Zaks refer to a property as non-monitorable if it is not monitorable for \textit{any} sequence.
 This interpretation coincides with \textit{weak monitorability} used in~\cite{Chen18weak}.
% now.

\begin{definition}[\upz -monitorability]\label{def:upz-monitorable}
	A property $\propV$ is (universally Pnueli--Zaks) \upz -monitorable if it is \ftV-monitorable for \emph{all} finite
%	sequences
	traces
	\ftV.  The class of all \upz -monitorable properties is denoted \clsUPZ. % in \Cref{fig:hierarchy}.
	\qedd
\end{definition}

\begin{definition}[\epz -monitorability]\label{def:epz-monitorable}
	A property is (existentially Pnueli--Zaks) \epz -monitorable if it is \ftV-monitorable for \emph{some} finite
	trace
%	sequence
	\ftV, \ie if it is $\varepsilon$-monitorable. The class of \epz -monitorable properties is written \clsEPZ. % in \Cref{fig:hierarchy}.
	\qedd
\end{definition}

%\begin{remark} In \Cref{def:epz-monitorable}, we could have said, equivalently, that $\propV$ is \epz -monitorable if it is $\varepsilon$-monitorable from \Cref{def:s-monitorable}.
%\qedd
%\end{remark}

The apparently innocuous choice between existential and universal quantification leads to different monitorability classes \clsUPZ and \clsEPZ .% as the following example shows.

\begin{example}\label{ex:epz-upz}
%
%``Either \textit{success} occurs before \textit{failure}, or \textit{failure} is always eventually followed by \textit{recovery}"
%
Consider the property \emph{``Either \su occurs before \fa, or  \re happens infinitely often"}, expressed in LTL fashion as $\bigl((\neg \fa) \,\ltlU\, \su\bigr) \vee \bigl(\ltlG\, \ltlF\,\re\bigr)$.
This property is \epz -monitorable because
every finite trace $\ftV \fa$
%the finite sequences on which \fa\ has not occurred can be continued with the singleton trace \su, which then
positively determines the property.
However, it is \emph{not} \upz -monitorable
%because a sequence on which action \fa\ has occurred before \su is neither positively nor negatively determined by any continuation:
%
%this clearly negatively determines $(\neg \fa) \,\ltlU\, \su$ meaning that \ftV-monitorability relies on $\bigl(\ltlG\, \ltlF\,\re\bigr)$ but, as we argued in the earlier \Cref{ex:safety-cosafety}, this cannot be positively or negatively determined by any finite trace.
%
because no extension of the trace \fa\ positively or negatively determines that property. Indeed, all extensions of \fa\ violate the first disjunct and, as we argued in \Cref{ex:safety-cosafety}, there is no finite trace that determines the second conjunct  positively or negatively.
%
%We illustrate this via the example property \emph{``Eventually \su\ and never \faNS"},  \ie $(\ltlF\,\su) \wedge (\ltlG\, \neg \fa)$, discussed earlier in \Cref{ex:intro}.
%%
%This property is \upz -monitorable, because \emph{every} finite trace \ftV\ can be extended to $\ftV \fa$, which negatively determines the property.
%%
%For the same reason, the property is also \epz -monitorable.
%%
%% \begin{karoliina}
%% In $\mu$HML: 	 ($\mn X.(\diam{\acta}\true \lor \diam{\gamma} X) \land \mx Y (\hNec{\actb}\false \land \hNec{\acta} Y \land \hNec{\gamma}Y)$)
%% \end{karoliina}
%%
\exqed
\end{example}

From \Cref{def:upz-monitorable,def:epz-monitorable}, it
follows immediately
%is however not hard to show
that $\clsUPZ \subset\clsEPZ$.

\begin{proposition}\label{prop:safety-to-upz}
All properties in $\clsSafe \cup \clsCSafe$ are \upz
%- and \epz
-monitorable.
%If $\propV \neq \fTrc$ is a safety property, then $\propV$ is \supz-monitorable;
%if
%$\propV \neq \emptyset$ is a co-safety property, then $\propV$ is \cupz-monitorable.
\end{proposition}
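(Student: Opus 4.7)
The plan is to fix an arbitrary $\propV \in \clsSafe \cup \clsCSafe$ and an arbitrary finite trace $\ftV \in \Act^*$, and then exhibit an extension $\ftVV$ such that $\ftV\ftVV$ positively or negatively determines $\propV$, which is exactly what \Cref{def:upz-monitorable} requires. I would split on whether $\propV$ is safety or co-safety; the two cases are completely symmetric, swapping positive and negative determination, so I would write out only the safety case in detail.

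For the safety case, the key dichotomy is whether every finfinite continuation of $\ftV$ lies in $\propV$. If it does, then $\ftV$ itself positively determines $\propV$ and we may take $\ftVV = \varepsilon$. Otherwise, pick any $\fftV \in \fTrc$ with $\ftV\fftV \notin \propV$; \Cref{def:safety} then provides a finite prefix $\ftV^\prime$ of $\ftV\fftV$ that negatively determines $\propV$. Since $\ftV$ and $\ftV^\prime$ are both finite prefixes of the same (finfinite) trace $\ftV\fftV$, they are comparable under $\preceq$. If $\ftV^\prime \preceq \ftV$, then $\ftV$ inherits negative determination (every continuation of $\ftV$ is a continuation of $\ftV^\prime$), so $\ftVV = \varepsilon$ works; otherwise $\ftV \prec \ftV^\prime$, so $\ftV^\prime = \ftV\ftVV$ for some $\ftVV$, and this $\ftVV$ is the required witness.

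The co-safety case is dual: replace $\fftV \notin \propV$ by $\fftV \in \propV$ and invoke \Cref{def:co-safety} to obtain a positively determining prefix, arguing identically.

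The only mildly nontrivial step is the observation that any two finite prefixes of a common finfinite trace are comparable under $\preceq$, which is immediate from $\preceq$'s definition. I do not anticipate any real obstacle; the argument is a direct unpacking of \Cref{def:safety,def:co-safety,def:upz-monitorable}, and by \Cref{def:oper-monitorability-popl} a class membership reduces to monitorability of every element, so no closure property of the class needs to be checked.
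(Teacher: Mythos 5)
Your proof is correct and follows essentially the same route as the paper's: the same dichotomy on whether every continuation of $\ftV$ satisfies $\propV$, followed by an appeal to \Cref{def:safety} in the negative case and duality for co-safety. Your explicit handling of the case where the negatively determining prefix $\ftV'$ is shorter than $\ftV$ (so that $\ftVV = \varepsilon$ suffices) is a small point the paper's proof glosses over, but otherwise the arguments coincide.
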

\begin{proof}
	Let $\propV \in \clsSafe$ and
	pick a finite trace \ftV.
	If there is an \fftV\ such that $\ftV\fftV \notin \propV$ then, by \Cref{def:safety},
	there exists $\ftVV \preceq \ftV\fftV$ that negatively determines $\propV$, meaning that \ftV\ has an extension that negatively determines $\propV$.
	Alternatively, if there is \emph{no} $\fftV$ such that $\ftV\fftV \notin \propV$, $\ftV$ itself positively determines $\propV$.
	Hence $\propV$ is  \ftV-monitorable, for \emph{every} \ftV, according to \Cref{def:s-monitorable}.
	The case for $\propV \in \clsCSafe$ is dual.
	%
	% Let $\propV$ be a safety property (the case for co-safety is similar), and let $\ftV \in \Act^*$.
	% If there is some $\fftV \in \fTrc$, such that $\ftV\fftV \notin \propV$, then $\ftV\fftV$ has a finite prefix $\ftVV$ that negatively determines $\propV$. Therefore, $\ftV$ has an extension that negatively determines $\propV$.
	% If there is no such $\fftV$, $\ftV$ positively determines $\propV$.
%Assume, towards a contradiction, a safety or co-safety $\propV$ formula not to be \upz -monitorable: there is some prefix $\ftV$ that has no extension that determines $\propV$ positively nor negatively. $\ftV$ itself in particular does not determine the property: it has an extention $\tV \in \propV$ and $\tVV\notin \propV$. If $\propV$ is a safety property, then $\tVV$ has a prefix $\ftVV$ that negatively determines $\propV$; if $\propV$ is a co-safety property, then $\tV$ has a prefix that positively determines $\propV$. Either way, since all extensions of a prefix that determines $\propV$ also determines $\propV$, this contradicts $\ftV$ having no extentions that determine $\propV$.
% Hence $\propV$ is \upz -monitorable, and therefore also \epz -monitorable.
\end{proof}

\paragraph*{Pnueli and Zaks, operationally.}
%
% We now proceed to indentify operational characterisations of these
%two
% notions of monitorability.
% From the operational perspective,
% ---and without great surprise---
\epz -monitorability coincides with
% the natural notion of
informative monitorability: \epz -monitorable properties are those for which some monitor can reach a verdict on some finite trace.
For similar reasons, \upz -monitorability coincides with
% the arguably more contrived notion of
\superuseful monitorability. See \Cref{fig:hierarchy}.

\begin{theorem}\label{thm:epz} \label{thm:upz-iff-superuseful} \label{thm:epz-iff-useful}
$\clsEPZ = \clsUC$ and $\clsUPZ = \clsSUC$.
%
% A property $\propV$ is \epz -monitorable if and only if $\propV$ is \usefully monitorable.
\end{theorem}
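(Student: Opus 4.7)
The plan is to prove the two equalities separately, each by two inclusions, exploiting the maximal monitoring system together with Lemmas~\ref{lem:verd-determines} and~\ref{thm:abstract-is-sound}. The key bridge, in both directions, is that a monitor verdict on a finite trace corresponds exactly to that trace (positively or negatively) determining the property.

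\textbf{The equality $\clsEPZ = \clsUC$.} For $\clsEPZ \subseteq \clsUC$, suppose $\propV$ is \epz-monitorable. By \Cref{def:epz-monitorable}, there is a finite trace $\ftV$ that positively or negatively determines $\propV$. By the maximality of $(M,\accrel,\rejrel)$ (\Cref{def:abstract-monitoring-system}), the canonical monitor $\mV_\propV$ then accepts or rejects $\ftV$, so $\mV_\propV$ is \useful in the sense of \Cref{def:informative-monitors}; soundness follows from \Cref{thm:abstract-is-sound}. For the converse $\clsUC \subseteq \clsEPZ$, suppose $\mV$ is sound and \useful for $\propV$. Then $\acc{\mV,\fftV}$ or $\rej{\mV,\fftV}$ for some $\fftV$. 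Invoking \Cref{def:monitoring-system}.1 we obtain a finite prefix $\ftV \preceq \fftV$ at which $\mV$ already delivers the verdict; by \Cref{lem:verd-determines}, $\ftV$ determines $\propV$, so $\propV$ is $\varepsilon$-monitorable, i.e.\ \epz-monitorable.

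\textbf{The equality $\clsUPZ = \clsSUC$.} For $\clsUPZ \subseteq \clsSUC$, if $\propV$ is \upz-monitorable, then for every finite trace $\ftV$ there exists $\ftVV$ such that $\ftV\ftVV$ determines $\propV$; by maximality, $\mV_\propV$ accepts or rejects $\ftV\ftVV$, so taking $\fftV := \ftVV$ in \Cref{def:informative-monitors} we see that $\mV_\propV$ is \superuseful, and soundness again follows from \Cref{thm:abstract-is-sound}.

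\textbf{Main obstacle: the direction $\clsSUC \subseteq \clsUPZ$.} The subtlety here is that the witness provided by \superuseful{}ness need not land in the ``extension'' part of the given trace. Suppose $\mV$ is sound and \superuseful for $\propV$, and fix an arbitrary $\ftV \in \Act^*$. Then there exists $\fftV$ with, say, $\acc{\mV,\ftV\fftV}$ (the rejection case is symmetric). By \Cref{def:monitoring-system}.1, some finite prefix $\ftV'' \preceq \ftV\fftV$ already satisfies $\acc{\mV,\ftV''}$; by \Cref{lem:verd-determines}, $\ftV''$ positively determines $\propV$. Since $\ftV$ and $\ftV''$ are both finite prefixes of the same sequence $\ftV\fftV$, they are prefix-comparable, giving two cases: if $\ftV \preceq \ftV''$, write $\ftV'' = \ftV\ftVV$, and this $\ftVV$ is the witness of $\ftV$-monitorability demanded by \Cref{def:s-monitorable}; if instead $\ftV'' \preceq \ftV$, then since positive determination is preserved under extensions, $\ftV$ itself determines $\propV$, and $\ftVV = \varepsilon$ works. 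In both subcases $\propV$ is $\ftV$-monitorable, and since $\ftV$ was arbitrary, $\propV \in \clsUPZ$. This prefix case-split is the only real obstacle; everything else is immediate from the cited lemmas.
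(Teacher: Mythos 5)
Your proof is correct and follows essentially the same route as the paper: both directions of each equality are bridged via the maximal monitoring system, \Cref{lem:verd-determines}, and the finite-prefix postulate of \Cref{def:monitoring-system}. Your explicit prefix-comparability case split in the direction $\clsSUC \subseteq \clsUPZ$ (handling the case where the accepting prefix is shorter than the given $\ftV$) is a point the paper's proof glosses over by implicitly appealing to verdict irrevocability, so your treatment is, if anything, slightly more careful.
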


\begin{proof}
Since the proofs of the two claims are analogous, we simply outline the one
%Since both proofs are analogous we outline the proof
for $\clsUPZ = \clsSUC$.
%
% UPZ
%
Let $\propV\in\clsUPZ$ and pick a finite trace $\ftV\in\Act^*$.
By \Cref{thm:abstract-is-sound}, $\mV_\propV$ is sound for $\propV$.
By \Cref{def:informative-monitors} we need to show that there exists an \fftV\ such that $\acc{\mV_\propV,\ftV\fftV}$ or  $\rej{\mV_\propV,\ftV\fftV}$.
From \Cref{def:upz-monitorable,def:s-monitorable} we know that there is a finite \ftVV\ such that $\ftV\ftVV$ positively or negatively determines \propV.
By \Cref{def:abstract-monitoring-system} we know that $\acc{\mV_\propV,\ftV\ftVV}$ or  $\rej{\mV_\propV,\ftV\ftVV}$.
%which is
%the result
%what we
%required.
Thus $\propV\in\clsSUC$, which is the required result.

Conversely, assume $\propV\in\clsSUC$, and pick a
%finite
$\ftV \in \Act^*$.
By \Cref{def:upz-monitorable,def:s-monitorable}, we need to show that there is an extension of \ftV\ that
positively or negatively determines \propV.
From \Cref{def:informative-monitors,def:informative-monitorability}, there exists a \fftV\ such that $\acc{\mV_\propV,\ftV\fftV}$ or $\rej{\mV_\propV,\ftV\fftV}$.
By \Cref{def:monitoring-system}, there is a finite extension of \ftV, say $\ftV\ftVV$, that is a prefix of $\ftV\fftV$ such that  $\acc{\mV_\propV,\ftV\ftVV}$ or $\rej{\mV_\propV,\ftV\ftVV}$.
 By \Cref{def:abstract-monitoring-system}, we know that $\ftV\ftVV$ either positively or negatively determines \propV.
 Thus $\propV \in \clsUPZ$.
%
% EPZ
%
% Let $\propV\in\clsEPZ$ and pick a \ftV\ that positively or negatively determines $\propV$.
% %
% By \Cref{def:abstract-monitoring-system}, $\mV_\propV$  (which is sound for \propV by \Cref{thm:abstract-is-sound}) will \resp either accept or reject $\ftV$, so it is informative.  Thus $\propV\in\clsUC$.
%
% Conversely, $\propV\in\clsUC$, $\mV_\propV$ is sound for $\propV$ and informative on some finite $\ftV$.
% %
% This means that $\propV\in\clsEPZ$ by the witness extension $\ftV\varepsilon$, which must either positively or negatively determine $\propV$ by the soundness of  $\mV_\propV$.
%
%
%
%
%
% and it is sound for $\propV$, by \Cref{thm:abstract-is-sound}.
%Assume that $\ftV$ determines $\propV$ positively; then the monitor that accepts $\ftV$ and its extentions is both informative and sound for $\propV$. Similarly, if $\ftV$ determines $\propV$ negatively, then the monitor that rejects $\ftV$ and its extensions is both informative and sound for $\propV$.
%
\end{proof}

\section{Monitorability in other settings}
\label{sec:other}
% !TEX root = main.tex

% So far, we
We have shown how classical definitions of monitorability %risky statement; have we forgotten anyone?
fit into our hierarchy and provided the corresponding operational interpretations and syntactic characterisations,
% .
%
% We have focused
focussing on regular finfinite properties over a finite alphabet and monitors with irrevocable verdicts.
Here we discuss how different parameters, both within our setting and beyond, affect what is monitorable.

\paragraph{Monitorability \wrt the alphabet.}
	The monitorability of a property can depend on $\Act$. For instance, if $\Act$ has at least two elements \sset{\acta,\actb,\ldots}, property $\{\acta^\omega \}$, which can be represented as $\mx X.\diam{\acta}X$, is $s$-monitorable for every sequence $s$, as $s$ can be extended to $s\actb$, which negatively determines the property.
	On the other hand,
	assume that $\Act = \{\acta\}$. In this case, $\{\acta^\omega \}$ is neither \epz - nor \upz -monitorable. Indeed, no string $s=a^k$, $k \geq 0$, determines $\{\acta^\omega \}$ positively or negatively as $s$ does not satisfy $p$ but its extension $a^\omega$ does.
	%	Note:
	On the other hand, when restricted to infinite traces, $p$ is again \epz -monitorable.

So far, we only considered finite alphabets; how an infinite alphabet, which may encode integer data for example, affects monitorability is left as future work.

\paragraph{Monitoring with revocable verdicts.} Early on, we postulated that verdicts are irrevocable. Although this is a typical (implicit) assumption in most work on monitorability, some authors have considered monitors that give revocable judgements when an irrevocable one is not appropriate. This approach is taken by Bauer \etal when they define a finite-trace semantics for LTL, called RV-LTL~\cite{BauerLeuckerSchallhart:10:LandC}.
 Falcone \etal~\cite{FalconeFernandezMounier:STTT:12} also have a definition of monitorability based on this idea (in addition to those discussed in \Cref{rem:wtf-falcone}). It uses the four-valued domain $\{\yes, \no, \yes_c, \no_c\}$ ($c$ for \emph{currently}). Finite traces that do not determine a property yield
a (revocable) verdict $\yes_c$ or $\no_c$ that indicates whether the trace observed so far  satisfies the property; \yes\ and \no\ are still irrevocable. This definition allows \emph{all} finfinite properties to be monitored since it does not require verdicts to be irrevocable.

This type of monitoring does not give any guarantees beyond soundness: there are properties that are monitorable according to this definition for which no sound monitor ever reaches an irrevocable verdict: $\ltlF ~\ltlG~ \su$ for the system from \Cref{ex:intro} has no sound informative monitor, yet can be monitored according to Falcone \etal 's four-valued monitoring. This type of monitorability is complete, in the sense of providing at least a \emph{revocable} verdict for all traces.

\paragraph{Monitorability in the infinite and finite.}
% Like Falcone \etal,
Bauer \etal use \upz -monitorability in their study of runtime verification for LTL
% and TLTL
~\cite{Bauer:2011} and attribute it to Pnueli and Zaks. However, unlike Falcone \etal, Pnueli and Zaks~\cite{PnueliZaks:06:FM} and ourselves, they focus on properties over \emph{infinite}
% , rather than finfinite 
traces. There are some striking differences that arise if there is no risk of an execution ending.
%First of all,
Aceto \etal show that, unlike in the finfinite domain, a set of non-trivial properties becomes completely monitorable: HML \cite{HennessyM:1985:JACM} (\aka modal logic) is both satisfaction- and violation-monitorable over infinite traces~\cite{AcetoAFIL19}. Furthermore, some properties, like $\{\acta^\omega\}$ over $\Act = \{\acta\}$, that were not \epz - or \upz -monitorable on the finfinite domain, are \epz- or even \upz -monitorable on the infinite domain. The full analysis of how the hierarchy in \Cref{fig:hierarchy} changes for the infinite domain is left for future work.

Barringer \etal~\cite{barringer2008rule}
% on the other hand
consider monitoring of properties over  \emph{finite} traces.
% domain.
In this domain, all properties are monitorable if, as is the case in~\cite{barringer2008rule}, the end of a trace is \emph{observable}; in this setting the question of monitorability is less relevant.

\section{Conclusion}
\label{sec:conclusion}
% !TEX root = main.tex

We have proposed a unified, operational view on monitorability.
% which puts monitors at the heart of its definitions. As a result we can now
This allows us to clearly state the implicit operational guarantees of existing definitions of monitorability. For instance, recall \Cref{ex:intro} from the introduction:  since $(\ltlG\,\neg\fa)\wedge(\ltlF\,\su)$  is \epz - and \upz -monitorable but it is not a safety nor co-safety property, we know
% . We can now rephrase this statement in terms of the guarantees monitors provide:
there is a monitor which can recognise some violations and satisfactions of this property, but there is no monitor that can recognise \emph{all} satisfactions nor \emph{all} violations.

%We have also shown that different levels of monitorability have different syntactic characterisations in \UHML. This is only the first step towards the automated synthesis of monitors. We leave as future work deciding how monitorable a specification is and the synthesis of monitors that satisfy as many of the guarantees as possible for a given specification.
%
%

Although we
% have considered in detail
focussed on the setting of regular, finfinite properties, the definitions of monitorability in \Cref{sec:mon-hierarchy}, and, more fundamentally, the methodology that systematically puts the relationship between monitor behaviour and specification centre stage, are equally applicable to other settings. We have already mentioned the infinite domain and richer alphabets in \Cref{sec:other}. Another
% particularly
interesting direction would be to lift the restriction to regular properties and finite-state monitors. Indeed, while Barringer \etal consider a specification logic that allows for context-free properties~\cite{barringer2008rule}, in~\cite{FerrereHS:18:LICS}, Ferrier \etal consider monitors with registers (\ie infinite state monitors) to verify safety properties that are not regular. It seems likely that monitorability beyond safety may also be worth studying in these extended settings.

\section*{Acknowledgements}
  This research was partially supported by the  projects  ``TheoFoMon: Theoretical Foundations for Monitorability'' (grant number: 163406-051) and ``Epistemic Logic for Distributed Runtime Monitoring'' (grant number: {184940-051}) of the Icelandic Research Fund, by the BMBF project ``Aramis II'' (project number:{01IS160253}) and the EPSRC project ``Solving parity games in theory and practice'' (project number:{EP/P020909/1}).
%Further work: decidability of all of these properties.

%\section{A Temp}
%\input{temp}
%copied everything, I think

% \bibliographystyle{alpha}
\bibliographystyle{plain}
\bibliography{refs2}

\newpage

\appendix

% !TEX root = main.tex

\Cref{sec:falcone-appendix} discusses in more details Falcone \etal 's claim that their definitions of monitorability coincide with safety and co-safety properties and provides counter-examples showing this to be incorrect.
\Cref{sec:regular-defs-appendix} gives further technical details regarding the monitoring system presented in \Cref{sec:instantiation-regular}. \Cref{sec:proofs-appendix} contains all the omitted proofs.

\section{Monitorability \`a la Falcone \etal}
\label{sec:falcone-appendix}

Falcone \etal~\cite{FalconeFernandezMounier:STTT:12} propose three definitions of monitorability (Definitions 16 and 17 in~\cite{FalconeFernandezMounier:STTT:12}) which they claim to coincide with safety, co-safety, and the union of safety and co-safety properties (Theorem 3 in \cite{FalconeFernandezMounier:STTT:12}). 
We discuss this claim in more detail here, and argue that it does not hold. In brief, their definition deems all properties that are uniform over finite traces, such as ``$\succeed$ infinitely often", or ``the trace is finite" to be monitorable, not just safety and co-safety properties. In this appendix we recall Falcone \etal 's definitions and show that their definitions of monitorability include more than just safety and co-safety properties.

\begin{remark} Falcone \etal present finfinite properties as a pair consisting of a set of finite traces and a set of infinite traces. Here we will speak of just one set, containing both finite and infinite traces.
\end{remark}

The definition of monitorability 
proposed by Falcone \etal in \cite{FalconeFernandezMounier:STTT:12}
is parameterised by a truth domain, and a mapping of formulas into this domain. They then give a uniform condition that defines monitorability with respect to any truth-domain and its associated mapping. Here we focus on their monitorability with respect to the truth-domains $\{\hTru,?\}$, $\{\hFls,?\}$ and $\{\hTru,\hFls,?\}$, which they claim correspond to co-safety, safety and their union, respectively.

\begin{definition}[Property evaluation with respect to a truth-domain~\cite{FalconeFernandezMounier:STTT:12}]
For each of three different verdict-domains and finfinite properties $\propV$ (``$r$-properties'' in their terminology), Falcone, Fernandez and Mournier define the following evaluation functions:
\begin{description}
\item[ For $\tdom=\{\hFls, ?\}$ and $\ftV\in \Act^*$:]~\\
$\hSem{\propV}_\tdom (\ftV)= \hFls $ if $\forall \fftV \in \fTrc.~ \ftV\fftV\notin \propV$
\\
$\hSem{\propV}_\tdom (\ftV)= ? $ otherwise.
\item[ For $\tdom=\{\hTru, ?\}$ and $\ftV\in \Act^*$:]~\\
$\hSem{\propV}_\tdom (\ftV)= \hTru $ if $\forall \fftV\in \fTrc.~ \ftV\fftV\in \propV$
\\
$\hSem{\propV}_\tdom (\ftV)= ? $ otherwise.

\item[For $\tdom= \{\hTru, \hFls, ?\}$ and $\ftV\in \Act^*$:]~\\
$\hSem{\propV}_\tdom (\ftV)= \hTru $ if $\ftV\in \propV$ and $\forall \fftV \in \fTrc.~ \ftV\fftV\in \propV$
\\
$\hSem{\propV}_\tdom (\ftV)= \hFls $ if $\ftV\notin \propV$ and $\forall \fftV\in \fTrc.~ \ftV\fftV\notin \propV$
\\
$\hSem{\propV}_\tdom (s)= ? $ otherwise.
\qedd 
\end{description}
\end{definition}
\begin{definition}[\ffm -monitorability~Definition 17, \cite{FalconeFernandezMounier:STTT:12}]
A property $\propV$ is $\tdom$-monitorable over a truth domain $\tdom$ if for all $\ftV,\ftVV\in \Act^*$, if $\ftV\in \propV$ and $\ftVV\notin \propV$, then
$\hSem{\propV}_\tdom (\ftV)\neq \hSem{\propV}_\tdom(\ftVV)$.
\qedd 
\end{definition}

From this definition, it easily follows that any property $\propV$ for which $\propV \cap \Act^* = \emptyset$ or $\Act^*\subseteq\propV$ is vacuously monitorable for any truth-domain, and evaluation function. However, not all such properties are safety or co-safety properties: ``always eventually $\succeed$'' for instance is neither a safety nor a co-safety property.

We believe the critical points are Lemma 3 and Theorem 3 in \cite{FalconeFernandezMounier:STTT:12}, which do not hold. The proof of Lemma 3 in particular (Appendix 2.3) falsely claims that  $\propV \cap \Act^*=\emptyset$ or $\Act^*\subseteq \propV$ implies that $\propV$ is a safety or co-safety properties.

\section{An Operational Monitoring System: Regular Monitors}
\label{sec:regular-defs-appendix}
%Material omitted in the main text.

The monitoring system 
%of regular monitors 
$(\Mon,\accrel,\rejrel)$
is given by a \emph{Labelled Transition System (LTS)} based on $\Act$,  which is comprised of the monitor states, or monitors, and a transition relation.
%
%In \citeMac{linearpaper}, we introduce two kinds of monitors, parallel and regular monitors. These are presented in this section.
%
%
%
The set of monitor states,
$\Mon$, and the monitor transition relation, $\reduc \subseteq (\Mon\times(\Act\cup\sset{\tau})\times \Mon)$, are defined in \Cref{fig:monit-instr}.
%There and elsewhere, $\paralG$ ranges over both parallel operators $\paralD$ and $\paralC$.
%When discussing a monitor with free variables (an \emph{open monitor}) $\mV$, we assume
%it is part
%of a larger monitor $\mV'$ without free variables (a \emph{closed monitor}), where every variable $x$ appears at most once in a recursive operator.
%Therefore, we assume a mapping from each monitor variable $x$ to a unique monitor $p_x$, of the form $\rec x \mV$ that is a submonitor of $\mV'$.
%, which we denote with $p_x$.
%The \emph{closure} of $\mV$ is the smallest closed submonitor $\mV''$ of $\mV'$ in which $\mV$ appears.
%We restrict the monitor syntax so that for $x$, we allow only this $\rec x \mV$ monitor to be formed.
%\begin{antonis}
%	do these make sense?
%%	also, notice that closure is different that what one may expect
%\end{antonis}
%
%
%The triple $\langle \Mon,(\Act\cup\sset{\tau}),\reduc\rangle$ forms a Labeled Transition System (LTS).
The suggestive notation $\mV \traS{\acttt} \mVV$ denotes $(\mV,\acttt,\mVV) \in \reduc$; we also write $\mV \traSN{\acttt}$ to denote $\neg(\exists \mVV. \;\mV\traS{\acttt}\mVV)$.
We employ the usual notation for weak transitions and write $\mV \wreduc \mVV$ in lieu of $\mV (\traS{\tau})^{\ast} \mVV$ and $\mV \wtraS{\acttt} \mVV$ for
$\mV \wreduc\cdot\traS{\acttt}\cdot\wreduc \mVV$.
%, referring to $\mVV$ as a \acttt-derivative of $\mV$.
We
% let $\tr,\trr \in \Act^\ast$ range over sequences of visible actions and
write  sequences of transitions
$\mV\wtra{\act_1}\cdots\wtra{\act_k} \mVV$ as $\mV \wtraS{\ftr} \mVV$,
where $\ftr = \act_1\cdots\act_k$.
A monitor that does not use any parallel operator is called a \emph{regular monitor}.
The full monitoring system and regular monitors were defined and used
 in \cite{AceAFI:17:FSTTCS,FraAI:17:FMSD,AcetoAFIL19}.

%A transition $\mV \traS{\acta} \mVV$ denotes that the monitor in state \mV can \emph{analyse} the (external) action \acta and transition to state \mVV.
%
% ;this is lifted to sequences of actions $\mV \wtraS{\ftV} \mVV$ in the obvious way.
%
%Monitors may reach any one o
%
%The remaining constructs and transitions  are standard.
%

\begin{definition}[Acceptance and Rejection]  \label{def:acc-n-rej}
	For a monitor $\mV \in \Mon$, we define $\rej{\mV,\ftV}$ (\resp $\acc{\mV,\ftV}$)
%	We say that \emph{\mV rejects} (\resp \emph{accepts}) \ftV\
	and say that
	\emph{\mV rejects} (\resp \emph{accepts})
	when
	$\mV \wtraS{\ftV} \no$
	(\resp
	%	when
	$\mV \wtraS{\ftV} \yes$).
	Similarly, 
	for $\tV \in \Act^\omega$, we write 
	$\rej{\mV,\tV}$ (\resp $\acc{\mV,\tV}$)
%	We similarly say that $\mV$ rejects (\resp accepts) $\tV$
	if there exist $ \ftV \in \Act^* $ and $\tVV\in\Act^\omega$  such that
	$ \tV = \ftV\tVV$ and  \mV rejects (\resp accepts) \ftV.
	%
	% $\rej{\mV,p,\ftr}$ (\resp $\acc{\mV,p,\ftr}$).
	% $\exists p \cdot \rej{\mV,p,\ftr}$, and similarly, \emph{\mV accepts \ftr} is a shorthand for $\exists p \cdot \acc{\mV,p,\ftr}$.
	\qedd
\end{definition}

For a finite nonempty set of indices $I$, we use
%notation
$\sum_{i \in I}\mV_i$ to denote any combination of the monitors in
$\{\mV_i \mid i \in I\}$
%$\setof{\mV_i}{i \in I}$
% sorry F, but it did looked weird
using the operator $+$.
%
%The notation is justified, because $+$ is commutative and associative with respect to the transitions that a resulting monitor can exhibit.
%
%In that case, f
For each $j \in I$, $\sum_{i \in I}\mV_i$ is called a sum of $\mV_j$, and $\mV_j$ is called a summand of $\sum_{i \in I}\mV_i$.
The following \Cref{lem:ver-persistence} assures us that regular monitors satisfy the conditions to be a monitoring system, given in \Cref{def:monitoring-system}.

\begin{lemma}[Verdict Persistence, \cite{FraAI:17:FMSD,AcetoAFIL19}]\label{lem:ver-persistence}
	\begin{math}
	\vV \wtraS{\ftV} \mV \text{ implies } \mV=\vV.
	\end{math}\qedmaybe
\end{lemma}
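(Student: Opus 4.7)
The plan is to establish verdict persistence by showing two simple facts about the operational semantics in \Cref{fig:monit-instr}: first, that a verdict $\vV \in \Verd$ has no outgoing $\tau$-transition, and second, that its only outgoing transitions on a visible action $\acta \in \Act$ lead back to $\vV$ itself. With these two facts in hand, a straightforward induction on the length of $\ftV$ delivers the result.

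\textbf{Step 1.} I would inspect each rule of \Cref{fig:monit-instr} and observe that the only rule whose conclusion has a verdict $\vV$ as the source is \rtit{mVer}, whose label is an action $\acta \in \Act$. In particular, the $\tau$-rules \rtit{mTauL}, \rtit{mVrE}, \rtit{mVrC1}, \rtit{mVrC2}, \rtit{mVrD1}, \rtit{mVrD2} all require the source to be a parallel composition, not a bare verdict, and rules \rtit{mAct}, \rtit{mRec}, \rtit{mSelL}, \rtit{mPar} similarly do not apply. Hence $\vV \traS{\tau} \mV$ is impossible for any $\mV$, so $\vV \wreduc \mV$ forces $\mV = \vV$.

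\textbf{Step 2.} Again by case analysis on the rules, the only derivation concluding $\vV \traS{\acta} \mV$ for some action $\acta$ is an instance of \rtit{mVer}, yielding $\mV = \vV$. Combining with Step~1, this means that for any single-action weak transition $\vV \wtraS{\acta} \mV$---which unfolds into $\vV \wreduc \vV_1 \traS{\acta} \vV_2 \wreduc \mV$---each component preserves the verdict, so $\mV = \vV$.

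\textbf{Step 3.} I then conclude by induction on $|\ftV|$. The base case $\ftV = \varepsilon$ is Step~1, since $\vV \wtraS{\varepsilon} \mV$ means $\vV \wreduc \mV$. For the inductive step, writing $\ftV = \acta\ftV'$, any derivation of $\vV \wtraS{\ftV} \mV$ decomposes as $\vV \wtraS{\acta} \vV' \wtraS{\ftV'} \mV$; Step~2 gives $\vV' = \vV$, and the induction hypothesis applied to $\vV \wtraS{\ftV'} \mV$ yields $\mV = \vV$.

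The only substantive obstacle is ensuring the case analysis in Steps~1 and~2 is exhaustive with respect to the full rule set of \Cref{fig:monit-instr} (rather than only the subset shown in the main text), but since verdicts are leaf terms of the monitor grammar and appear on the left of a transition only in \rtit{mVer} (and as operands, not sources, in the $\paralG$-rules), the enumeration is routine.
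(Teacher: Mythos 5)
Your proof is correct: the case analysis in Steps 1 and 2 is exhaustive (every $\tau$-rule and every structural rule requires a composite source term, while a bare verdict $\vV$ matches only \rtit{mVer}, which loops), and the induction on $|\ftV|$ then goes through; you also rightly flag that the omitted symmetric rules must be checked, and they too have only composite sources. Note that this paper does not prove the lemma itself but imports it from the cited prior work, so there is no in-paper proof to compare against; your rule-inspection-plus-induction argument is the standard one and is exactly what is expected here.
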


We will use the following definitions and results in the proofs of \Cref{sec:proofs-appendix}.
We 
%can also 
define determinism for regular monitors.

\begin{definition}[\cite{AceAFI:2017:CIAA,determinization}]\label{def:determinism}
	A closed regular monitor $\mV$ is \emph{
		%		syntactically
		deterministic} iff
	every sum of at least two summands
	that appears in $\mV$
	is of the form
	$\sum_{\act \in A} \act.m_\act$,
	where $A \subseteq \Act$.
	\qedd
\end{definition}

\begin{definition}[Verdict Equivalence]\label{def:verd-eq}
	Monitors $\mV$ and $\mVV$ are called \emph{verdict equivalent} when for every $\fftV \in \fTrc$, $\acc{\mV,\fftV}$ iff  $\acc{\mVV,\fftV}$
	and
	$\rej{\mV,\fftV}$ iff  $\rej{\mVV,\fftV}$.
	\qedd 
\end{definition}

\begin{theorem}[\cite{AceAFI:2017:CIAA,AcetoAFIL19}]\label{thm:determinization}
	Every monitor in $\Mon$ is verdict equivalent to a deterministic regular monitor.
	\qedmaybe 
\end{theorem}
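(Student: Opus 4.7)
\textbf{Proof plan for \Cref{thm:determinization}.}
The plan is to proceed in two stages that mirror the classical translation from (parallel) nondeterministic automata to deterministic ones. \textbf{Stage 1:} eliminate the parallel operators $\paralC$ and $\paralD$, producing a (possibly nondeterministic) regular monitor that is verdict equivalent to the given $\mV \in \Mon$. \textbf{Stage 2:} determinize the resulting regular monitor via a subset-style construction, yielding a deterministic regular monitor in the sense of \Cref{def:determinism}. Throughout, I would view a closed, guarded monitor as a finite-state labelled transition system over $\Act\cup\{\tau\}$, where the reachable states are monitor terms modulo syntactic identification (and unfolding of $\rec{x}{\mV}$), so that only finitely many distinct reachable states need to be enumerated.

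\textbf{Stage 1 (removing $\paralC, \paralD$).} Given $\mV = m_1 \paralG m_2$ with $\paralG \in \{\paralC,\paralD\}$, I would perform a product construction: the states of the new monitor are pairs $[n_1,n_2]$ of reachable states of $m_1$ and $m_2$, together with the sink verdicts induced by the $\tau$-rules \rtit{mVrC1}, \rtit{mVrC2}, \rtit{mVrD1}, \rtit{mVrD2}, \rtit{mVrE}. Concretely, the pair $[\yes,n]$ collapses to $n$ for $\paralC$ and to $\yes$ for $\paralD$ (dually for $\no$), and $[\stp,\stp]$ collapses to $\stp$. For each pair state, action transitions are given by synchronisation ($[n_1,n_2] \traS{\act} [n_1',n_2']$ iff both $n_i \wtraS{\act} n_i'$), which is the $\paralG$-rule of \Cref{fig:monit-instr} up to the $\tau$-closure. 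From this product automaton one recovers a regular monitor term by introducing a recursion variable $X_{[n_1,n_2]}$ for each reachable pair and defining a mutually recursive system $X_{[n_1,n_2]} \defeq \sum_{\act} \act.X_{[n_1',n_2']} + \cdots$, unwound into $\rec{X}{\cdots}$ form in the usual way. Verdict equivalence is proved by exhibiting a straightforward simulation between the LTS of $\mV$ and that of the constructed monitor and invoking \Cref{lem:ver-persistence} to handle the irrevocability of verdicts. Repeating (or performing in one inductive pass) this construction over the syntax of $\mV$ removes all parallel operators.

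\textbf{Stage 2 (determinization).} After Stage 1, I have a regular monitor $\mV'$ that may still contain nondeterministic sums. The key observation is that $\mV'$ induces a finite-state LTS whose only nondeterminism is over $\Act$-labels, so the subset construction applies. Take states to be nonempty sets $S$ of reachable states of $\mV'$; for each $\act\in\Act$, set $\delta(S,\act)=\{n'\mid n\in S,\; n\wtraS{\act} n'\}$; and assign $S$ the verdict $\yes$ if $\yes \in S$, $\no$ if $\no \in S$ (prioritising either consistently with soundness, noting that if both appear then the original monitor is unsound and cannot arise from a sound construction), and $\stp$ otherwise. From this deterministic skeleton I synthesise a deterministic regular monitor by introducing a recursion variable $Y_S$ for each reachable subset $S$ and setting $Y_S \defeq \sum_{\act \in A_S} \act.Y_{\delta(S,\act)}$, where $A_S = \{\act \mid \delta(S,\act)\neq\emptyset\}$, replacing verdict subsets by the corresponding $\vV$. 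This yields exactly the shape required by \Cref{def:determinism}.

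\textbf{Main obstacle.} The delicate point I anticipate is managing the interaction between recursion, $\tau$-transitions, and verdicts in a syntactically clean way: one must argue that after unfolding recursions the set of reachable states stays finite (guardedness of the original monitor suffices), and that the bookkeeping for the sink verdict rules of $\paralC,\paralD$ does not introduce new behaviours that break verdict equivalence. Here I would lean on \Cref{lem:ver-persistence} (verdicts are absorbing and can be read off reachability in the LTS) so that the correctness of each construction reduces to checking that $\mV \wtraS{\ftV} \yes$ iff the corresponding state of the constructed monitor reaches $\yes$ on the same $\ftV$, and symmetrically for $\no$; this can be proved by induction on the length of $\ftV$ using the simulation established in each stage.
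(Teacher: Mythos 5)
This theorem is imported from \cite{AceAFI:2017:CIAA,AcetoAFIL19} and is not proved in the paper, so there is no in-paper argument to compare against; your two-stage plan --- eliminating $\paralC,\paralD$ by a synchronous product over the (finitely many, by guardedness) reachable states with the verdict-collapse rules built in, then determinizing the resulting regular monitor by a subset construction and folding the equation system back into nested $\rec{x}{\cdot}$ terms, with \Cref{lem:ver-persistence} doing the bookkeeping --- is exactly the route taken in the cited works, and at the level of a plan it is the right one.

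The one genuine gap is your handling of verdict conflicts in Stage~2. The theorem quantifies over \emph{all} monitors in $\Mon$, and soundness is a relation between a monitor and a property, so ``if both verdicts appear then the original monitor is unsound'' is not a hypothesis you can invoke: $\ch{\prf{\acta}{\yes}}{\prf{\acta}{\no}}$ is a perfectly good element of $\Mon$ that both accepts and rejects every trace extending $\acta$, whereas a deterministic regular monitor in the sense of \Cref{def:determinism} has a unique $\acta$-derivative (up to the deterministic $\tau$-unfolding of recursion) and therefore can never both accept and reject the same finite trace. Hence ``prioritising either verdict'' in a subset containing both $\yes$ and $\no$ destroys verdict equivalence on such inputs. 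To close this you must either restrict the statement to \emph{consistent} monitors (no trace both accepted and rejected) --- which is harmless for this paper, since \Cref{thm:determinization} is only ever applied to monitors already known to be sound for some formula --- or prove separately that the monitors reaching Stage~2 are consistent. A smaller, fixable imprecision: turning the mutually recursive system $Y_S \defeq \sum_{\act\in A_S}\act.Y_{\delta(S,\act)}$ into a single closed term needs a Beki\'c-style elimination of the equation system, and you should check that this substitution preserves the syntactic shape required by \Cref{def:determinism}; this is routine but deserves a sentence.
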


%We say that a
%%(finfinite)
%property \propV is \emph{upward-closed} whenever $\ftV \in \propV$ implies $\forall \fftV \in \fTrc \cdot \ftV\fftV \in \propV$.
In the following, we
%We 
say that 
%a property 
$\propV$ is 
%finitely 
suffix-closed when for all $\ftV,\ftVV \in \Act^*$, $\ftV \in \propV$ implies $\ftV\ftVV \in \propV$ --- notice that we only quantify over finite traces.
The suffix-closure of $\propV$ is $\{ \ftV\fftV \in \fTrc \mid \ftV \in \propV \}$.

\begin{theorem}[\cite{AceAFI:2017:CIAA,determinization}]\label{thm:regular-to-regular}
	If $L,L' \subseteq \Act^*$ are regular and 
%	finitely 
	suffix-closed, 
	and $L \cap L' = \emptyset$, 	
	%not strictly necessary
	then there is a regular monitor	$\mV$, such that $\acc{\mV,\ftV}$ iff $\ftV \in L$ and 
	$\rej{\mV,\ftV}$ iff $\ftV \in L'$.
	\qedmaybe 
\end{theorem}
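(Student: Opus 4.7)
The plan is to reduce the statement to a standard automata-to-process-calculus translation, exploiting suffix-closure to make the absorbing states of the automaton correspond to the irrevocable verdicts of the monitor. First, I take DFAs $A_L = (Q_L,\Act,\delta_L,q_L^0,F_L)$ and $A_{L'} = (Q_{L'},\Act,\delta_{L'},q_{L'}^0,F_{L'})$ for the regular languages $L$ and $L'$, and form their product automaton $A = (Q_L \times Q_{L'},\Act,\delta,(q_L^0,q_{L'}^0))$ with two distinguished, disjoint state sets $G_L = F_L \times Q_{L'}$ and $G_{L'} = (Q_L \setminus F_L) \times F_{L'}$. Disjointness follows from $L \cap L' = \emptyset$, and together these sets mark exactly those states whose word reaches $L$ or $L'$ respectively.

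Next, I observe that suffix-closure of $L$ implies that whenever $\ftV \in L$ and $a \in \Act$, then $\ftV a \in L$; at the DFA level this means $\delta(G_L \times \Act) \subseteq G_L$, so $G_L$ is absorbing. The symmetric statement holds for $G_{L'}$. This is the property that will permit me to "short-circuit" these sets with the irrevocable verdicts $\yes$ and $\no$.

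Now I translate $A$ into a regular monitor. Enumerate the states $q_0,q_1,\dots,q_n$, introduce a recursion variable $x_i$ per state, and define a system of equations
\[
  x_i \;=\; \begin{cases}
    \yes & \text{if } q_i \in G_L,\\
    \no & \text{if } q_i \in G_{L'},\\
    \sum_{a \in \Act} a.\,x_{j(i,a)} & \text{otherwise,}
  \end{cases}
\]
where $\delta(q_i,a) = q_{j(i,a)}$ and the sum ranges over the finite set $\Act$. I then fold this mutually recursive specification into a single closed term $\mV$ of the monitor calculus using only the unary operator $\rec{x}{-}$, via the standard process-algebraic elimination procedure (repeatedly replace $x_i$ by $\rec{x_i}{E_i}$ after inlining, in a topological order of the underlying dependency graph). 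The resulting $\mV$ uses no parallel operator, so it is a regular monitor.

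Finally, correctness is a routine induction on the length of $\ftV$: I show that $\mV \wtraS{\ftV} \mV_{q}$ where $q = \delta^*(q_0,\ftV)$ and $\mV_q$ is the subterm of $\mV$ corresponding to state $q$ modulo a bounded number of unfoldings. Combined with the fact that $\mV_q = \yes$ exactly when $q \in G_L$, that $\mV_q = \no$ exactly when $q \in G_{L'}$, and with \Cref{lem:ver-persistence} (verdict irrevocability), this gives $\acc{\mV,\ftV}$ iff $\ftV \in L$ and $\rej{\mV,\ftV}$ iff $\ftV \in L'$, as required. The main obstacle is not the simulation argument but the clean folding of the mutually recursive equational system $\{x_i = E_i\}_{i \le n}$ into one closed term using only the unary $\rec{x}{-}$ binder; this is a well-known but somewhat delicate syntactic manipulation, analogous to Bekic's principle or to compiling a finite-state process specification into a CCS term, and is the reason the statement is attributed to prior work on the same monitor calculus.
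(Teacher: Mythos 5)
This theorem is stated in the paper as an imported result (attributed to prior work on the same monitor calculus) and is given no proof in the text, so there is nothing to compare your argument against line by line; judged on its own, your reconstruction is correct and is essentially the standard construction underlying the citation: determinise, use suffix-closure to identify verdict states as absorbing, and fold the resulting finite-state system into a regular monitor term. Two small points deserve attention. First, the claim that $G_L$ (and symmetrically $G_{L'}$) is absorbing under $\delta$ holds only on the \emph{reachable} part of the product automaton: an unreachable state of $F_L \times Q_{L'}$ may well have successors outside $F_L \times Q_{L'}$, since suffix-closure is a statement about words, not about arbitrary states. You should either restrict the product to its reachable states or work with minimal DFAs before asserting $\delta(G_L \times \Act) \subseteq G_L$; with that restriction the argument goes through, and the same reachability caveat is needed to see that $G_{L'} = (Q_L\setminus F_L)\times F_{L'}$ marks exactly the words of $L'$. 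Second, your correctness induction implicitly uses that the synthesised monitor is deterministic (each non-verdict state is a sum with exactly one summand per action), which is what rules out spurious paths to $\yes$ or $\no$ and gives the ``only if'' directions of both equivalences together with verdict persistence (\Cref{lem:ver-persistence}); it is worth saying so explicitly. The folding of the mutually recursive equations into a single closed term using only the unary $\rec{x}{\mV}$ binder is indeed the standard Beki\'{c}-style elimination and is unproblematic in this calculus.
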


In this appendix, we use the formula synthesis function from regular monitors to \SHML formulae, from \cite{FraAI:17:FMSD,AcetoAFIL19}:
\begin{align*}
%\!\!  
\mSyn{\no} &= \hFls %\\
&\mSyn{\stp} &= \mSyn{\yes} = \hTru
& \mSyn{x} &= X
%  \\
\\
%\!\!
\mSyn{\esel{\mV}{\mVV}} &= \hAnd{f(\mV)}{\mSyn{\mVV}} %\\
%&\mSyn{m\paralC n} &= \hAnd{\mSyn{\mV}}{\mSyn{\mVV}} %\\
%&\mSyn{m\paralD n} &= \hOr{\mSyn{\mV}}{\mSyn{\mVV}} %\\
&\mSyn{\prf{\acta}\mV} &= \hNec{\acta}{\mSyn{\mV}}  &\mSyn{\recX{\mV}} &= \hMaxX{\mSyn{\mV}}
%  \mSyn{x} &= X
%\tag*{\qedd}
\end{align*}

\begin{theorem}[\cite{AcetoAFIL19}]
	For every regular monitor $\mV$, 
	$\mSyn{\mV} \in \SHML$, and 
	$\mV$ is sound and violation-complete for $\mSyn{\mV}$.
%	and $\fftV \in \fTrc$, $\fftV \notin \hSemF{\mSyn{\mV}}$ if and only if $\rej{\mV,\fftV}$.
\qedmaybe
\end{theorem}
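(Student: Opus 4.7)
The first claim, $\mSyn{\mV} \in \SHML$, follows by a routine structural induction on the regular monitor $\mV$: each syntactic form admissible in a regular monitor (the verdicts $\no$, $\yes$, and $\stp$; prefix $\prf{\acta}{\mV'}$; sum $\ch{\mV_1}{\mV_2}$; recursion $\recX{\mV'}$; and variables) is mapped by $\mSyn{\cdot}$ to the corresponding constructor of $\SHML$ ($\hFls$; $\hTru$; $\hNec{\acta}{\mSyn{\mV'}}$; $\hAnd{\mSyn{\mV_1}}{\mSyn{\mV_2}}$; $\hMaxX{\mSyn{\mV'}}$; $X$). Since regular monitors exclude the parallel operators, the induction closes.

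For the remaining claims, the plan is to prove the stronger biconditional
\[
\rej{\mV,\fftV} \;\text{ iff }\; \fftV \notin \hSemF{\mSyn{\mV}},
\]
from which violation-completeness is immediate, and soundness follows once one observes separately that whenever $\mV \wtraS{\ftV} \yes$ (or $\stp$), the continuation $\mV$ reduces to a verdict whose synthesis is $\hTru$, whose denotation is all of $\fTrc$, so every extension trivially satisfies $\mSyn{\mV}$ by \Cref{lem:ver-persistence}. The biconditional is proved by structural induction on $\mV$, generalised to open monitors and open formulas carrying a compatible valuation $\sigma$, so that the recursion case can be handled. The base cases use verdict persistence: $\no$ rejects every trace and $\mSyn{\no}=\hFls$ denotes $\emptyset$; verdicts $\yes,\stp$ never reach $\no$ and $\hTru$ denotes $\fTrc$; variables handle directly via $\sigma$.

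The inductive step for $\prf{\acta}{\mV'}$ uses that any weak rejecting run of $\prf{\acta}{\mV'}$ must begin with the action $\acta$ (and that traces not starting with $\acta$ are neither rejected by $\prf{\acta}{\mV'}$ nor excluded by $\hNec{\acta}{\cdot}$), after which the IH on $\mV'$ and the semantics of $\hNec{\acta}{\cdot}$ align on both sides. For $\ch{\mV_1}{\mV_2}$, rejection must commit to one summand on the first action, matching the conjunctive semantics of $\hAnd{\mSyn{\mV_1}}{\mSyn{\mV_2}}$ via the IH applied to each side.

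The main obstacle is the recursion case. One direction---$\rej{\recX{\mV'},\fftV}$ implies $\fftV \notin \hSemF{\hMaxX{\mSyn{\mV'}}}$---proceeds by a finite-unfolding argument: a rejecting run of $\recX{\mV'}$ performs only finitely many unfoldings before reaching $\no$, so by induction on that number one shows that $\fftV$ fails every post-fixpoint of the functional induced by $\mSyn{\mV'}$, and hence fails their union, the greatest fixpoint. The converse direction is the delicate one, and is where the safety character of $\SHML$ is exploited: given $\fftV \notin \hSemF{\hMaxX{\mSyn{\mV'}}}$, one uses the standard finite-approximation characterisation of greatest fixpoints of $\SHML$ formulas (every non-satisfier admits a finite witness of exclusion from some finite-depth approximant) to extract a finite prefix $\ftV \preceq \fftV$ that falls outside a finite unfolding, and then lifts this into a concrete rejecting run of $\recX{\mV'}$ by performing exactly that many $\tau$-unfoldings followed by the corresponding transitions. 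Establishing this finite-approximation lemma is the crux of the proof; once it is in place, all other cases compose routinely.
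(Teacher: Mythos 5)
The paper does not actually prove this statement: it is imported verbatim from \cite{AcetoAFIL19} and used as a black box, so there is no in-paper proof to compare yours against. Measured against the standard argument in that line of work, your decomposition is the right one. The membership claim $\mSyn{\mV}\in\SHML$ is indeed a one-line structural induction on the grammar of regular monitors, and the rejection biconditional $\rej{\mV,\fftV}$ iff $\fftV\notin\hSemF{\mSyn{\mV}}$ --- proved by induction generalised to open terms, with the greatest fixpoint handled by finite unfolding of the rejecting run in one direction and by finite approximants of the $\SHML$ functional (which is where the safety character of the fragment enters) in the other --- is exactly the crux, and you have identified and sketched it correctly.

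The genuine gap is the acceptance half of soundness. From $\mV\wtraS{\ftV}\yes$ you conclude that every extension of $\ftV$ ``trivially satisfies $\mSyn{\mV}$'' because the \emph{residual} monitor synthesises to $\hTru$; but membership in $\hSemF{\mSyn{\mV}}$ is a statement about the original formula, not about the synthesis of the residual, and no induction is offered to bridge the two. Worse, for an unrestricted regular monitor this direction is simply false: take $\mV=\ch{\prf{\acta}{\yes}}{\prf{\acta}{\no}}$, which both accepts and rejects the trace $\acta$, while $\mSyn{\mV}=\hAnd{\hNec{\acta}{\hTru}}{\hNec{\acta}{\hFls}}$ denotes exactly the traces that do not begin with $\acta$; hence $\acc{\mV,\acta}$ holds yet $\acta\notin\hSemF{\mSyn{\mV}}$. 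The statement is only safe because the paper applies it to explicit deterministic monitors built without $\yes$ (cf.\ \Cref{cor:explicit-deterministic} and its use in \Cref{thm:to-supuseful-regular-to-pihml}), where the monitor accepts no trace and acceptance soundness holds vacuously. You should either restrict the claim to $\yes$-free regular monitors and observe the vacuity explicitly, or prove a separate lemma relating accepting runs to the semantics of the synthesised formula; the argument as written cannot be repaired for arbitrary regular monitors, in particular nondeterministic ones mixing $\yes$ and $\no$ after the same action.
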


\section{Proofs Omitted from the Main Document}
\label{sec:proofs-appendix}

Here we present the proofs of results  that were omitted from the main text.

We use the following classical result (see \cite{arnold2001rudiments} for more on the $\mu$-calculus and \UHML):

\begin{lemma}\label{lem:regularMu}
	If $\hV \in \UHML$, then $\hSemF{\hV}\cap\Act^*$ is regular.
	\qedmaybe 
\end{lemma}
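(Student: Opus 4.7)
My plan is to prove the statement by structural induction on $\hV$, strengthening the hypothesis to: for every $\hV \in \UHML$ and every valuation $\sigma$ such that $\sigma(X) \cap \Act^*$ is regular for each free variable $X$ of $\hV$, the set $\hSem{\hV,\sigma} \cap \Act^*$ is regular. The base cases $\hTru$, $\hFls$ and $X$ are immediate, and conjunction and disjunction follow from closure of regular languages under intersection and union. For $\hSuf{\acta}{\hV}$, the intersection with $\Act^*$ yields $\acta \cdot (\hSem{\hV,\sigma} \cap \Act^*)$; for $\hNec{\acta}{\hV}$, it yields $\{\varepsilon\} \cup (\Act{\setminus}\{\acta\})\Act^* \cup \acta\cdot(\hSem{\hV,\sigma}\cap\Act^*)$. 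Both are regular by the induction hypothesis.

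The main obstacle is the fixpoint case. Here I plan to invoke the classical correspondence between the modal $\mu$-calculus, of which \UHML is a reformulation, and automata on linear structures. Concretely, following the standard construction (see Arnold and Niwi\'nski \cite{arnold2001rudiments}), one compiles $\hV$ into an alternating parity automaton $\mathcal{A}_\hV$ over $\Act$-words whose states are the subformulae of $\hV$, whose transitions decompose boolean connectives, modalities and fixpoint unfoldings, and whose parity condition tracks the alternation depth of $\hMinX{\cdot}$ and $\hMaxX{\cdot}$. Applying the standard de-alternation and determinization procedures for parity automata on words, $\mathcal{A}_\hV$ is equivalent to a deterministic finite-state device whose accepted language of finite words is, by the semantic correctness of the compilation, exactly $\hSem{\hV,\sigma} \cap \Act^*$, hence regular.

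The delicate point I expect to require most care is the finite-trace acceptance condition: since the finfinite semantics of Fig.~\ref{fig:recHML} treats finite and infinite traces uniformly, one must verify that the "the run on $\ftV$ ends in a state whose residual formula is semantically $\hTru$" criterion matches the finite-trace fragment of $\hSemF{\hV}$ rather than some over- or under-approximation. This amounts to the correctness of unfolding fixpoint approximants along a finite trace, and is the main technical bookkeeping step. With this in place, regularity follows immediately from the standard closure properties of regular languages under the finite-state operations used in the compilation.
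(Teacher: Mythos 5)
Your argument is essentially the one the paper has in mind: the paper states this lemma without proof, citing it as a classical result from Arnold and Niwi\'nski \cite{arnold2001rudiments}, and that classical argument is precisely the compilation of a linear-time $\mu$-calculus formula into an alternating parity automaton followed by de-alternation and determinization, with finite-trace acceptance handled by checking whether the residual configuration is satisfied by the empty continuation. Your explicit inductive cases for the boolean and modal operators are correct but become redundant once the whole (closed) formula is compiled to an automaton, and note that the structural induction alone could not discharge the fixpoint case anyway (least fixpoints of regularity-preserving operators need not be regular in general), so the appeal to the automaton construction is not optional bookkeeping but the actual content of the proof.
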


Due to \Cref{thm:determinization}, we can assume that every monitor in $\Mon$ is a regular, or deterministic regular monitor. We often do so in the following proofs.

\begin{definition}\label{def:dplus-dminus}
	Let $\propV \subseteq \fTrc$.
	%	be a \upz -monitorable $\UHML$ property (over finfinite traces).
	We define the following two sets of finite traces:

	$D^-_{\propV} = \{\ftV \in \Act^* \mid \propV \text{ is negatively determined by } \ftV\}$;

	$D^+_\propV = \{\ftV \in \Act^* \mid \propV \text{ is positively determined by } \ftV\}$.
	\qedd
\end{definition}

\begin{lemma}\label{lem:part-mon-gives-goodbad}
	If $\propV \subseteq \fTrc$ is monitorable for satisfaction (\resp for violation) by any monitoring system, then every $\fftV \in \propV$ (\resp $\fftV \in \fTrc \setminus \propV$) has a finite prefix that positively (\resp negatively) determines $\propV$.
\end{lemma}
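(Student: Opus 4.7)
The plan is to unpack the monitorability hypothesis through the two postulates of \Cref{def:monitoring-system} together with \Cref{lem:verd-determines}. The argument is short and symmetric in the two cases, so I would present the satisfaction case in detail and indicate that violation is dual.

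First, I would fix an arbitrary monitoring system $(M,\accrel,\rejrel)$ witnessing that $\propV$ is monitorable for satisfaction, and extract from \Cref{def:oper-monitorability-popl} a monitor $\mV \in M$ that is both sound and satisfaction-complete for $\propV$. Pick any $\fftV \in \propV$. By satisfaction-completeness, $\acc{\mV,\fftV}$. Applying \Cref{def:monitoring-system}.1 then yields a finite prefix $\ftV \preceq \fftV$ with $\acc{\mV,\ftV}$.

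The second step is to turn the acceptance of $\ftV$ into the required property of $\ftV$. Since $\mV$ is sound for $\propV$, \Cref{lem:verd-determines} gives immediately that $\ftV$ positively determines $\propV$. This finishes the satisfaction case, exhibiting the required finite prefix of $\fftV$.

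For the violation case I would proceed dually: a sound and violation-complete monitor $\mV$ for $\propV$, together with any $\fftV \in \fTrc \setminus \propV$, gives $\rej{\mV,\fftV}$, whence $\rej{\mV,\ftV}$ for some $\ftV \preceq \fftV$ by \Cref{def:monitoring-system}.1, and then \Cref{lem:verd-determines} yields that $\ftV$ negatively determines $\propV$. There is no genuine obstacle here: both postulates of \Cref{def:monitoring-system} are used exactly once each (together with the already established \Cref{lem:verd-determines}, whose proof itself relies on irrevocability), and no appeal to the specific operational monitoring system of \Cref{sec:instantiation-regular} is needed, which is why the statement applies to \emph{any} monitoring system.
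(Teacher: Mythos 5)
Your proof is correct and follows essentially the same route as the paper's: completeness gives acceptance of the whole trace, postulate 1 of \Cref{def:monitoring-system} yields an accepted finite prefix, and irrevocability plus soundness show that prefix determines the property. The only cosmetic difference is that you factor the last step through \Cref{lem:verd-determines}, whereas the paper re-derives that step inline from postulate 2 and soundness.
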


\begin{proof}
	We treat the case for satisfaction, as the case for violation is dual.
	Let $\fftV \in \propV$ and $\mV$ be a monitor that is sound and satisfaction-complete for $\propV$.
	Then, due to satisfaction-completeness, $\acc{\mV,\fftV}$, and by the requirements of \Cref{def:monitoring-system}, there is a finite prefix $\ftV$ of $\fftV$, such that $\acc{\mV,\ftV}$. Therefore, by the same requirements, for every $\fftVV\in \fTrc$, $\acc{\mV,\ftV \fftVV}$. As we know that $\mV$ is sound for $\propV$, this yields that $\ftV$ positively determines $\propV$.
%	\qed
\end{proof}

For a set of finite traces $S \subseteq \Act^*$, we define
$$\min S = \{ \ftV \in S \mid \forall \ftVV \in S.~ (\ftVV \text{ is a prefix of }\ftV \Rightarrow \ftV = \ftVV)  \}.$$

\begin{lemma}\label{lem:AAL1}
	Assume
%	that $\hV \in \UHML$ and
	that every finfinite trace that satisfies (\resp violates) $\propV$ has a prefix that positively (\resp negatively) determines \propV.
	Then $\propV$ (\resp $\fTrc \setminus \propV$) is the suffix-closure of $\min (\propV \cap \Act^*)$ (\resp of $\min ((\fTrc \setminus \propV) \cap \Act^*)$).
\end{lemma}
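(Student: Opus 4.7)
The statement is symmetric in the two cases (replace $\propV$ by $\fTrc \setminus \propV$ and ``positively'' by ``negatively''), so it suffices to prove the satisfaction case: under the assumption that every $\fftV \in \propV$ has a prefix that positively determines $\propV$, we have $\propV = \{\ftV \fftVV \in \fTrc \mid \ftV \in \min(\propV \cap \Act^*)\}$.

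The key observation, which I would establish first, is that every $\ftV \in \min(\propV \cap \Act^*)$ actually positively determines $\propV$. Indeed, since $\ftV \in \propV$, the assumption yields some prefix $\ftVV \preceq \ftV$ that positively determines $\propV$; in particular $\ftVV \in \propV \cap \Act^*$. By minimality of $\ftV$, no proper prefix of $\ftV$ lies in $\propV$, so $\ftVV = \ftV$, proving that $\ftV$ itself positively determines $\propV$.

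Given this, the inclusion ``$\supseteq$'' is immediate: if $\fftV = \ftV\fftVV$ with $\ftV \in \min(\propV \cap \Act^*)$, then $\ftV$ positively determines $\propV$, and hence $\ftV\fftVV \in \propV$. For the inclusion ``$\subseteq$'', take any $\fftV \in \propV$. By the assumption, $\fftV$ has a finite prefix $\ftV$ that positively determines $\propV$; in particular $\ftV \in \propV \cap \Act^*$. Let $\ftVV$ be the shortest prefix of $\ftV$ (equivalently, of $\fftV$) that lies in $\propV$. Any proper prefix of $\ftVV$ is a strictly shorter prefix of $\fftV$, so by choice of $\ftVV$ it is not in $\propV$; therefore $\ftVV \in \min(\propV \cap \Act^*)$. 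Writing $\fftV = \ftVV\fftVV'$ for the appropriate continuation $\fftVV' \in \fTrc$ exhibits $\fftV$ as an element of the suffix-closure, as required.

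The argument is essentially a careful unpacking of definitions, so I do not expect any serious obstacle. The only point requiring a moment of thought is the preliminary observation that minimality in $\propV \cap \Act^*$, combined with the hypothesis, forces positive determination; once that is in place, both inclusions are one-line consequences.
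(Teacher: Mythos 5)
Your proof is correct and follows essentially the same route as the paper's: both inclusions hinge on applying the hypothesis to \emph{finite} traces in $\propV$ to conclude that every element of $\min(\propV\cap\Act^*)$ positively determines $\propV$. If anything, your organisation is slightly more careful at one point: the paper's forward direction takes the shortest positively-determining prefix and asserts without argument that it lies in $\min(\propV\cap\Act^*)$, whereas you take the shortest prefix lying in $\propV$, for which membership in $\min(\propV\cap\Act^*)$ is immediate, and defer the determination claim to your preliminary observation.
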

\begin{proof}
	Again, we only treat the case for satisfaction.
	Let $\fftV \in \propV$.
	By our assumptions, there is at least one finite prefix
%	$\ftV$
	of $\fftV$ that positively determines $\propV$.
	These prefixes of $\fftV$ are well-ordered by the prefix relation, and therefore there is a smallest prefix $\ftV$ of $\fftV$ that positively determines $\propV$.
	Therefore, $\ftV \in \min (\propV \cap \Act^*)$, and we see that $\fftV$ is in the suffix-closure of $\min (\propV \cap \Act^*)$.
	Conversely, let $\ftV \in \min (\propV \cap \Act^*)$ and $\fftV$ be an extension of $\ftV$.
	Then, $\ftV \in \hSem{\propV}$, so by the proviso of the lemma, there is a prefix $\ftVV$ of $\ftV$ (and of $\fftV$) that positively determines $\propV$,\footnote{The reader may also notice that $\ftV = \ftVV$, due to the minimality of $\ftV$.} and therefore $\fftV \in \propV$.
\end{proof}

%\subsection{Proofs Omitted from \Cref{sec:mon-hierarchy}}

\subsubsection{Proofs Omitted from \Cref{sec:instantiation-regular}: Regularity and Monitors}

\begin{lemma}\label{lem:regularMuD}
	If $\hV \in \UHML$, then $D^+_{\hV}$ and $D^-_{\hV}$ are regular.
\end{lemma}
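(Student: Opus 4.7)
The plan is to obtain both $D^+_\hV$ and $D^-_\hV$ as the acceptance and rejection languages of a single finite-state monitor, and then invoke the standard fact that such languages are regular.

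First I would apply \Cref{thm:determines-gives-verdict} to obtain a monitor $\mV \in \Mon$ that is sound for $\hV$, accepts every finite trace that positively determines $\hV$, and rejects every finite trace that negatively determines $\hV$. The monitor produced by that theorem need not itself be a regular monitor (it may use parallel composition), so to get a finite-state object I would then invoke \Cref{thm:determinization} and replace $\mV$ by a verdict-equivalent deterministic regular monitor, which by abuse of notation I still call $\mV$.

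Second, I would combine \Cref{thm:determines-gives-verdict} with \Cref{lem:verd-determines}: the former gives $\ftV \in D^+_\hV \Rightarrow \acc{\mV,\ftV}$ (resp.\ for $D^-_\hV$ and $\rej{\mV,\ftV}$), while soundness plus the latter gives the converse directions. Unfolding \Cref{def:acc-n-rej}, these biconditionals read
\[
 D^+_\hV = \{\ftV \in \Act^* \mid \mV \wtraS{\ftV} \yes\}, \qquad D^-_\hV = \{\ftV \in \Act^* \mid \mV \wtraS{\ftV} \no\}.
\]

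The final step is routine: because $\mV$ is a deterministic regular monitor, its reachable portion of the LTS in \Cref{fig:monit-instr} is finite, and the weak transition relation $\wtraS{\ftV}$ is nothing more than $\varepsilon$-closure (treating $\tau$ as $\varepsilon$) composed with the action-labelled edges. Viewing this finite LTS as an NFA over $\Act$ and designating $\yes$ (\resp $\no$) as the unique accepting state yields a finite-state recognizer for $D^+_\hV$ (\resp $D^-_\hV$), so both sets are regular. The hard work has already been done by \Cref{thm:determines-gives-verdict} and \Cref{thm:determinization}; the only point that deserves a line of justification is the determinisation step, which is precisely what guarantees that the ambient LTS used as an automaton has finitely many reachable states.
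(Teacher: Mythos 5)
Your argument is circular within the paper's deductive structure. The first and crucial step of your proof invokes \Cref{thm:determines-gives-verdict} to obtain a monitor that accepts exactly the positively determining finite traces and rejects exactly the negatively determining ones. But the paper proves \Cref{thm:determines-gives-verdict} \emph{from} \Cref{lem:regularMuD}: its proof reads ``By \Cref{lem:regularMuD}, $D^+_{\hV}$ and $D^-_{\hV}$ \ldots are regular. It is also not hard to see that they are suffix-closed. Therefore the theorem follows from \Cref{thm:regular-to-regular}.'' The only route the paper offers for constructing such a monitor is \Cref{thm:regular-to-regular}, which requires as a \emph{hypothesis} that the two trace sets are regular --- precisely the statement you are being asked to prove. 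So you are using a downstream consequence of the lemma to establish the lemma itself, and nothing in your write-up supplies an independent proof of \Cref{thm:determines-gives-verdict} that would break the cycle.

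The intended argument works directly from the semantics of $\hV$, not from monitors. By \Cref{lem:regularMu}, $\hSemF{\hV}\cap\Act^*$ is a regular language, and $\hSemF{\hV}\cap\Act^\omega$ is $\omega$-regular; take a DFA for the former and a deterministic $\omega$-automaton for the latter. A finite trace positively determines $\hV$ iff all of its finite extensions lie in $\hSemF{\hV}\cap\Act^*$ \emph{and} all of its infinite extensions lie in $\hSemF{\hV}\cap\Act^\omega$. Each of these two conditions depends only on the state reached after reading the trace in the respective deterministic automaton, so each defines a regular set of finite traces ($A_F$ and $A_I$ in the paper's notation), and $D^+_{\hV}=A_F\cap A_I$ is regular; the case of $D^-_{\hV}$ is symmetric. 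Your closing observation --- that the acceptance and rejection sets of a finite-state monitor are regular --- is fine as far as it goes, but it cannot be the engine of this proof, because obtaining that monitor is what the lemma is for.
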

\begin{proof}
	We know that $\hSemF{\hV}\cap \Act^*$ is regular (\Cref{lem:regularMu}) and $\hSemF{\hV}\cap \Act^\omega$, the infinite-trace interpretation of $\hV$, is $\omega$-regular.
	Therefore, there are a DFA $D_F$ that recognizes
	$\hSemF{\hV}\cap \Act^*$
	and a deterministic $\omega$-automaton $D_I$ that
	recognizes
	$\hSemF{\hV}\cap \Act^\omega$.
	Let $A_F = \{ \ftV \in \Act^* \mid \forall \ftVV \in \Act^*.~ \ftV\ftVV \in \hSemF{\hV} \}$
	and
	$A_I = \{ \ftV \in \Act^* \mid \forall \tV\in \Act^\omega.~ \ftV\tV \in \hSemF{\hV} \}$.
%	
%	It is not hard to see that there is a set of states in $D_F$ (\resp in $D_I$), such that a 
%	
	Let $Q_F$ (\resp $Q_I$) be the set of states 
	in $D_F$ (\resp in $D_I$) 
	that can be reached reading some trace $\ftV \in \Act^*$.
	By construction, for each $\ftV \in \Act^*$, we have that
%	
%	finite trace 
	$\ftV \in A_F$ (\resp $\ftV \in A_I$) if and only if $\ftV$ ends in the respective set in the automaton.
	Therefore, there are DFAs $D'_F$ and $D'_I$ for $A_F$ and $A_I$, respectively, and thus $D^+_{\hV} = A_F \cap A_I$ is regular.
	The case for $D^-_{\hV}$ is similar.
\end{proof}

\begin{thmbiss}{\ref{thm:determines-gives-verdict}}
%	Let $\hV \in \UHML$. There is a regular monitor $\mV$, that is sound for $\hV$ and accepts all the finite traces that positively determine $\hV$ and rejects all the finite traces that negatively determine $\hV$.
%	For every $\hV \in \UHML$, there is a  $\mV\in\Mon$ that is sound for $\hV$ and accepts (\resp rejects) all finite traces $\ftV \in \Act^*$ that positively (\resp negatively) determine $\hV$.
For all $\hV {\in} \UHML$, there is a monitor  $\mV{\in}\Mon$ that is sound for $\hV$ and accepts 
%	(\resp rejects) 
all finite traces 
%	\ftV\ 
that positively determine $\hV$  
and  rejects
all finite traces 
%	\ftV\ 
that negatively determine $\hV$.  
\end{thmbiss}

\begin{proof}
	By \Cref{lem:regularMuD}, $D^+_{\hV}$ and $D^-_{\hV}$, the  sets of finite traces that (respectively) positively or
%	, respectively, 
	negatively determine $\hV$ are regular.
	It is also not hard to see that they are 
%	finitely 
	suffix-closed.
	Therefore the theorem follows from \Cref{thm:regular-to-regular}.
\end{proof}

\begin{corollary}\label{thm:anything-goes}
	If $\mV$ is a sound monitor for $\hV \in \UHML$, then there is a regular monitor $\mVV$ that is sound for $\hV$, and such that for every $\ftV \in \Act^*$, $\acc{\mV,\ftV}$ implies $\acc{\mVV,\ftV}$, and $\rej{\mV,\ftV}$ implies $\rej{\mVV,\ftV}$.
%	\qedmaybe 
\end{corollary}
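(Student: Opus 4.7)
The plan is to obtain the desired monitor $\mVV$ as a direct consequence of \Cref{thm:determines-gives-verdict} and to verify the implications via \Cref{lem:verd-determines}. Concretely, given the sound monitor $\mV$ for $\hV \in \UHML$, I would invoke \Cref{thm:determines-gives-verdict} to produce a monitor $\mVV \in \Mon$ that is sound for $\hV$ and that accepts every finite trace positively determining $\hV$ and rejects every finite trace negatively determining $\hV$. Because the construction underlying \Cref{thm:determines-gives-verdict} is based on \Cref{thm:regular-to-regular} applied to the regular languages $D^+_{\hV}$ and $D^-_{\hV}$ (cf.\ \Cref{lem:regularMuD}), the resulting monitor $\mVV$ is already regular, which gives us exactly the $\mVV$ required by the statement.

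It then remains to verify the two implications concerning finite traces. Suppose $\ftV \in \Act^*$ with $\acc{\mV,\ftV}$. Since $\mV$ is sound for $\hV$, \Cref{lem:verd-determines} yields that $\ftV$ positively determines $\hV$, and by the defining property of $\mVV$ from \Cref{thm:determines-gives-verdict} we conclude $\acc{\mVV,\ftV}$. The symmetric argument, applied to $\rej{\mV,\ftV}$ and the set $D^-_{\hV}$, yields $\rej{\mVV,\ftV}$.

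There is no real obstacle here: the statement is essentially a repackaging of the remark immediately preceding it in the paper, which already informally observes that \Cref{lem:verd-determines} and \Cref{thm:determines-gives-verdict} together subsume arbitrary sound monitors by regular ones from $\Mon$. All nontrivial work—namely the regularity of $D^+_{\hV}$ and $D^-_{\hV}$, and the existence of a regular monitor recognising any pair of disjoint suffix-closed regular languages—has been done upstream in \Cref{lem:regularMuD} and \Cref{thm:regular-to-regular}. The only mildly delicate point to flag is that the statement mentions only finite traces $\ftV \in \Act^*$, so no additional reasoning about infinite traces is needed, and the soundness of $\mVV$ for $\hV$ is inherited directly from \Cref{thm:determines-gives-verdict} rather than derived from the soundness of $\mV$.
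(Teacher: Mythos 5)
Your proposal is correct and follows essentially the same route as the paper's own proof: instantiate \Cref{thm:determines-gives-verdict} (whose construction via \Cref{thm:regular-to-regular} indeed yields a regular monitor), then use soundness of $\mV$ together with \Cref{lem:verd-determines} to conclude that any finite trace accepted (rejected) by $\mV$ positively (negatively) determines $\hV$ and is therefore accepted (rejected) by $\mVV$. No gaps.
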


\begin{proof}
	By \Cref{thm:determines-gives-verdict}, there is a regular monitor $\mVV$ that is sound for $\hV$, and accepts all finite traces that positively determine $\hV$, and rejects all the finite traces that negatively determine $\hV$.
	If $\acc{\mV,\ftV}$ (\resp $\rej{\mV,\ftV}$) for some finite trace $\ftV$, then, due to the soundness of $\mV$, $\ftV \in \hSemF{\hV}$ (\resp $\ftV \notin \hSemF{\hV}$), and therefore, from \Cref{lem:verd-determines}, $\ftV$ positively (\resp negatively) determines $\hV$. By the properties of $\mVV$, we have that $\acc{\mVV,\ftV}$ (\resp $\rej{\mVV,\ftV}$). 
\end{proof}

\subsubsection{Proofs Omitted from \Cref{sec:syntactic}: Syntactic Characterizations}
% for \Useful and \SuperUseful Monitorability}

\begin{definition}\label{def:refute-monitors}
	Let $\mV$ be a closed regular monitor, 
	%	$\mV$, we   define for a submonitor 
	and let $\mVV$ be one of its submonitors.
	We say that
	$\mVV$ 
	%	that it 
	can refute (\resp verify) in 0 unfoldings, when $\no$ (\resp \yes) appears in $\mVV$, and that it can refute (\resp verify) in $k+1$ unfoldings, when it can refute (\resp verify) in $k$ unfoldings, or $x$ appears in $\mVV$ and $\mVV$ is in the scope of a submonitor $\rec x \mVV'$ of $\mV$ that can refute (\resp verify) in $k$ unfoldings.
	We simply say that $\mVV$ can refute (\resp verify) in $\mV$ when it can refute (\resp verify) in $k$ unfoldings, for some $k \geq 0$.
	\qedd
\end{definition}

We define the depth of $\hFls$ in an \SHML formula in a recursive way: $d_{\hFls}(\hFls) = 0$; $d_{\hFls}(\hTru) = d_{\hFls}(X) = \infty$; $d_{\hFls}(\hVV_1 \land \hVV_2) = \min\{d_{\hFls}(\hVV_1), d_{\hFls}(\hVV_2)\}+1$; $d_{\hFls}(\hNec{\act}{\hVV}) = d_{\hFls}(\hVV) + 1$; and $d_{\hFls}(\max X.\hVV) = d_{\hFls}(\hVV) + 1$.

\begin{lemma}\label{lem:substit-lowers-depth}
	For all possibly open $\hV, \hVV \in \SHML$ 
	$d_{\hFls}(\hV[\hVV/X]) \leq d_{\hFls}(\hV)$.
\end{lemma}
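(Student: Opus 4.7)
The plan is to proceed by straightforward structural induction on the SHML formula $\hV$, exploiting the fact that the only clause in the definition of $d_{\hFls}$ that can drop to $0$ is the base case for $\hFls$, and that the variable clause assigns the maximal value $\infty$.

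For the base cases, if $\hV = \hFls$ or $\hV = \hTru$, the substitution leaves $\hV$ unchanged, so equality is immediate. If $\hV = Y$ with $Y \neq X$, again $\hV[\hVV/X] = Y$. The only interesting base case is $\hV = X$: here $d_{\hFls}(X) = \infty$ by definition, so $d_{\hFls}(\hVV) \leq \infty = d_{\hFls}(X)$ holds trivially for any $\hVV$; this is precisely why the variable case of $d_{\hFls}$ is defined to be $\infty$.

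For the inductive cases $\hV = \hV_1 \land \hV_2$ and $\hV = \hNec{\act}{\hV'}$, substitution commutes with the connective, and the recursive clause of $d_{\hFls}$ is monotone in the depths of the immediate subformulae, so applying the induction hypothesis to each subformula and then adding $1$ (and taking the $\min$, in the conjunction case) yields the desired inequality. For $\hV = \max Y.\hV'$ we may assume, up to $\alpha$-equivalence, that $Y \neq X$ and $Y$ is not free in $\hVV$, so that $(\max Y.\hV')[\hVV/X] = \max Y.(\hV'[\hVV/X])$; then $d_{\hFls}(\hV[\hVV/X]) = d_{\hFls}(\hV'[\hVV/X]) + 1 \leq d_{\hFls}(\hV') + 1 = d_{\hFls}(\hV)$ by the induction hypothesis.

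There is no real obstacle here; the only point that deserves attention is the variable case, which is precisely what motivates setting $d_{\hFls}(X) = \infty$: the depth bound must tolerate substituting a variable (which contributes no $\hFls$) by an arbitrary formula that may itself contain an $\hFls$ at some finite depth. Once that is noted, the monotonicity of the remaining clauses makes the induction routine.
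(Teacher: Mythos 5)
Your proof is correct and takes exactly the approach the paper intends: the paper's own proof is just the one-line remark ``straightforward induction on $\hV$'', and your case analysis fills in precisely that induction, correctly identifying the variable case $d_{\hFls}(X) = \infty$ as the reason the inequality holds. Nothing to add.
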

\begin{proof}
	Straightforward induction on $\hV$.
\end{proof}

\begin{lemma}\label{thm:ihml-to-useful-regular}
	If $\hV \in \IHML$, then there is a regular monitor that is sound and \useful for \hV.
\end{lemma}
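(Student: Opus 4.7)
The plan is to split by the two disjuncts in the definition of $\IHML$ and handle the $\SIHML$ case (where $\hV = \hV_1 \wedge \hV_2$ with $\hV_1 \in \SHML$ and $\hFls$ occurring syntactically in $\hV_1$); the $\CIHML$ case is entirely dual, using positive determination of a $\CHML$ conjunct in place of negative determination and the mVrD rules in place of the mVrC rules. The key semantic observation is that $\hSemF{\hV} \subseteq \hSemF{\hV_1}$, so any finite trace that negatively determines $\hV_1$ also negatively determines $\hV$. This reduces the goal to two steps: first, produce at least one finite trace that negatively determines $\hV_1$; second, use this to extract a sound regular monitor for $\hV$ that rejects that trace.

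For the first step, I induct on the depth measure $d_{\hFls}(\hV_1)$ of \Cref{lem:substit-lowers-depth}. The base case $\hV_1 = \hFls$ is immediate, since the empty trace negatively determines $\hFls$. If $\hV_1 = \hV_1' \wedge \hV_1''$ with $\hFls$ occurring in $\hV_1'$ (WLOG), the induction hypothesis supplies a trace $\ftV$ negatively determining $\hV_1'$, and by monotonicity of conjunction $\ftV$ also negatively determines $\hV_1$. If $\hV_1 = \hNec{\acta}{\hV_1'}$ with $\hFls$ in $\hV_1'$, the IH yields $\ftV$ negatively determining $\hV_1'$, and $\acta\ftV$ then negatively determines $\hV_1$, since every extension of $\acta\ftV$ begins with $\acta$ and hence satisfies $\hNec{\acta}{\hV_1'}$ iff the corresponding extension of $\ftV$ satisfies $\hV_1'$, which fails. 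The delicate case is $\hV_1 = \hMaxX{\hV_1'}$: I unfold once to the semantically equivalent closed formula $\hV_1'[\hMaxX{\hV_1'}/X]$, observing that $d_{\hFls}(\hV_1'[\hMaxX{\hV_1'}/X]) \leq d_{\hFls}(\hV_1') = d_{\hFls}(\hMaxX{\hV_1'}) - 1$ by \Cref{lem:substit-lowers-depth}, so the IH applies and furnishes a trace $\ftV$ negatively determining $\hV_1'[\hMaxX{\hV_1'}/X]$, which by the fixpoint equivalence negatively determines $\hMaxX{\hV_1'}$ as well.

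For the second step, I apply \Cref{thm:determines-gives-verdict} to $\hV \in \UHML$, which yields a regular monitor $\mV \in \Mon$ that is sound for $\hV$ and rejects every finite trace that negatively determines $\hV$. Since the first step provides such a trace, $\mV$ rejects it and is therefore \useful and sound for $\hV$, completing the $\SIHML$ case. The main obstacle is the max-recursion case of the induction: the whole argument rests on unfolding strictly decreasing $d_{\hFls}$, which holds because $\hMaxX{\cdot}$ contributes the extra $+1$ in the depth measure while substitution into $\hV_1'$ does not increase depth. This is exactly what \Cref{lem:substit-lowers-depth} guarantees, so once that lemma is invoked the remaining cases and the invocation of \Cref{thm:determines-gives-verdict} are mechanical.
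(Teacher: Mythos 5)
Your proof is correct and follows essentially the same route as the paper's: the same induction on the depth measure $d_{\hFls}$, the same case analysis with \Cref{lem:substit-lowers-depth} handling the greatest-fixpoint unfolding, and the same final appeal to \Cref{thm:determines-gives-verdict} to turn the negatively determining trace into a sound, \useful regular monitor.
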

\begin{proof}
	We assume that $\hV \in \SIHML$, as the case for $\hV \in \CIHML$ is similar.
	Let $\hV = \hV_1 \land \hV_2$, where $\hV_1 \in \SHML$ and $\hFls$ appears in $\hV_1$.
	%	Then, we can define the depth of $\hFls$ in $\hV_1$ in a recursive way: $d_{\hFls}(\hFls) = 0$; $d_{\hFls}(X) = \infty$; $d_{\hFls}(\hVV_1 \land \hVV_2) = \min\{d_{\hFls}(\hVV_1), d_{\hFls}(\hVV_2)\}+1$; $d_{\hFls}(\hNec{\act}{\hVV}) = d_{\hFls}(\hVV) + 1$; and $d_{\hFls}(\max X.\hVV) = d_{\hFls}(\hVV) + 1$.
	We prove by induction on
	$d_{\hFls}(\hVV)$ that for all $\hVV \in \SHML$, if $d_{\hFls}(\hVV)<\infty$, then 
	there is a finite trace that negatively determines $\hVV$. If $d_{\hFls}(\hVV) = 0$, then
	$\hVV = \hFls$, and we are done, as $\varepsilon$ negatively determines $\hFls$.
	Otherwise, $d_{\hFls}(\hVV) = k+1$ and we consider the following cases:
	\begin{description}
		\item[$\hVV = \hVV_1 \land \hVV_2$] 
		In this case, either $d_{\hFls}(\hVV_1) = k$ or $d_{\hFls}(\hVV_2) = k$, so by the inductive hypothesis, there is a finite trace that negatively determines one of the two conjuncts, and therefore also $\hVV$.
		\item[$\hVV = \hNec{\act} \hVV'$]
		In this case, $d_{\hFls}(\hVV') = k$, so, by the inductive hypothesis, there is a finite trace $\ftV$ that negatively determines $\hVV'$, so $\act\ftV$ negatively determines $\hVV$.
		\item[$\hVV = \max X. \hVV'$]
		In this case, $d_{\hFls}(\hVV') = k$.
		Therefore, from \Cref{lem:substit-lowers-depth}, $d_{\hFls}(\hVV'[\hVV/X]) \leq d_{\hFls}(\hVV') = k$,
		so, by the inductive hypothesis, there is a finite trace $\ftV$ that negatively determines $\hVV'[\hVV/X]$, so it also negatively determines $\hVV$, because $\hSemF{\hVV'[\hVV/X]} = \hSemF{\hVV}$.
	\end{description}
	As $\hFls$ appears in $\hV_1$, $d_{\hFls}(\hV)<\infty$, so 
	there is a finite trace that negatively determines $\hV_1$, and therefore also $\hV$.
	The theorem follows from \Cref{thm:determines-gives-verdict}.
\end{proof}

\begin{lemma}\label{thm:useful-to-ihml}
	If $\hV \in \UHML$ and there is a monitor that is sound and \useful for \hV, then there is some $\hVV \in \IHML$ such that $\hSemF{\hVV} = \hSemF{\hV}$.
\end{lemma}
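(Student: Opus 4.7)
The plan is to extract, from the usefulness assumption, a single finite trace that determines $\hV$, and then use it to build the witness $\IHML$ formula directly. First I would unpack usefulness: since the hypothesised monitor $\mV$ is sound and useful for $\hV$, \Cref{def:informative-monitors} yields some $\fftV$ with $\acc{\mV,\fftV}$ or $\rej{\mV,\fftV}$. By the first clause of \Cref{def:monitoring-system}, the same verdict is already reached on some finite prefix $\ftV \preceq \fftV$, and by \Cref{lem:verd-determines} applied to the sound monitor $\mV$, this $\ftV$ positively (respectively, negatively) determines $\hV$.

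Next I would build the witness directly from $\ftV$. Writing $\ftV = \acta_1 \cdots \acta_n$, in the acceptance case I take
\[
\hV_1 \;\defeq\; \hSuf{\acta_1}\hSuf{\acta_2}\cdots\hSuf{\acta_n}\hTru,
\]
which is syntactically in $\CHML$ and contains $\hTru$, and whose denotation is $\{\fftVV \in \fTrc \mid \ftV \preceq \fftVV\}$. Since $\ftV$ positively determines $\hV$ we have $\hSemF{\hV_1} \subseteq \hSemF{\hV}$, hence $\hSemF{\hV_1 \lor \hV} = \hSemF{\hV}$ and $\hV_1 \lor \hV \in \CIHML$ is an equivalent formula. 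Dually, in the rejection case I set
\[
\hV_1 \;\defeq\; \hNec{\acta_1}\hNec{\acta_2}\cdots\hNec{\acta_n}\hFls,
\]
which is in $\SHML$, contains $\hFls$, and denotes $\{\fftVV \in \fTrc \mid \ftV \not\preceq \fftVV\}$. Negative determination by $\ftV$ gives $\hSemF{\hV} \subseteq \hSemF{\hV_1}$, so that $\hV_1 \land \hV \in \SIHML$ is the required witness. The degenerate case $n = 0$ (i.e.\ $\ftV = \varepsilon$) forces $\hSemF{\hV}$ to be $\fTrc$ or $\emptyset$; taking $\hV_1 = \hTru$, respectively $\hV_1 = \hFls$, keeps $\hV_1 \lor \hV$ in $\CIHML$ and $\hV_1 \land \hV$ in $\SIHML$.

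I do not anticipate any substantial obstacle. The real content of the lemma is the passage from a single verdict of a sound useful monitor to a finite determining prefix, which is essentially immediate from \Cref{lem:verd-determines} combined with the finite-prefix postulate of \Cref{def:monitoring-system}. The remaining work is the routine check that the prefix formulae $\hSuf{\acta_1}\cdots\hSuf{\acta_n}\hTru$ and $\hNec{\acta_1}\cdots\hNec{\acta_n}\hFls$ sit in $\CHML$ and $\SHML$ respectively and, by construction, carry the occurrence of $\hTru$ or $\hFls$ mandated by \Cref{def:informative-fragments}. It is worth emphasising that, unlike the more delicate \superuseful case of \Cref{thm:supuseful-regular-is-pihml}, here one single determining prefix of $\hV$ suffices, so no unfolding argument on $\mV$ or $\hV$ is required.
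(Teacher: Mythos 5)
Your proof is correct and follows essentially the same route as the paper's: extract a finite determining prefix $\ftV$ from the sound \useful monitor via \Cref{def:monitoring-system} and \Cref{lem:verd-determines}, then disjoin (resp.\ conjoin) $\hV$ with the prefix formula $\hSuf{\acta_1}\cdots\hSuf{\acta_n}\hTru$ (resp.\ $\hNec{\acta_1}\cdots\hNec{\acta_n}\hFls$) to land in $\CIHML$ (resp.\ $\SIHML$). The paper treats only the positive case ``without loss of generality''; you merely spell out the dual case and the $\ftV=\varepsilon$ degeneracy explicitly.
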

\begin{proof}
	If $\mV$ is sound and \useful for $\hV$, then by \Cref{lem:verd-determines}, there is a finite trace $\ftV$ that positively or negatively determines $\hV$. 
	Without loss of generality, we assume that $\ftV$ positively determines $\hV$.
	We can then easily construct a formula $\hVV_1(\ftV)$ that is satisfied
	%	, or, respectively, violated 
	%	
	exactly by $\ftV$ and all its extensions, recursively on $\ftV$: let $\hVV_1(\varepsilon) = \hTru$, and  let $\hVV_1(\act \ftV) = \hSuf{\act}{\hVV_1(\ftV)}$.
	Then, 
	%	respectively 
	let $\hVV = \hVV_1(\ftV) \lor \hV$. 
	%	or $\hVV = \hVV_1 \land \hV$. 
	Thus,
	$\hVV \in \IHML$ and $\hSemF{\hVV} = \hSemF{\hV}$.
\end{proof}

\begin{thmbiss}{\ref{thm:useful-is-ihml}}
	For $\hV \in \UHML$, 
	$\hV$ is \usefully monitorable if and only if 
	there is some $\hVV \in \IHML$, such that $\hSemF{\hVV} = \hSemF{\hV}$.
\end{thmbiss}
\begin{proof}
	A consequence of \Cref{thm:ihml-to-useful-regular,thm:useful-to-ihml}.
\end{proof}

\begin{lemma}\label{lem:can-refute-unfold}
	If $\max X.\hV \in \SHML$ can refute in $\max X.\hV \in \SHML$, then it is also the case that $\hV[\max X.\hV/X]$ can refute in $\hV[\max X.\hV/X]$.
	\qedmaybe 
\end{lemma}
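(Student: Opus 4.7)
The plan is to observe that because $\max X.\hV$ sits at the very top of its own formula, the recursive clause of \Cref{def:refute-formulae} can never witness $\max X.\hV$ refuting in itself, so the hypothesis collapses to the base case.

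First I would pin down this scope observation. The recursive clause of ``can refute in $k{+}1$ unfoldings'' requires the subformula under consideration to lie in the scope of some binder $\max Y.\hVV'$, i.e., to be a subformula of its body $\hVV'$. If the top-level $\max X.\hV$ were inside such an $\hVV'$, then $\max X.\hV$ would be a subformula of $\hVV'$, while $\hVV'$ is already a proper subformula of $\max X.\hV$ --- contradicting the well-foundedness of the subformula relation. Consequently, a short induction on the witnessing unfolding depth shows that if $\max X.\hV$ can refute in itself, then it does so via the base case: $\hFls$ occurs syntactically in $\max X.\hV$, and hence in $\hV$.

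Next I would argue that this occurrence of $\hFls$ survives the unfolding substitution. Capture-avoiding substitution $\hV[\max X.\hV/X]$ only rewrites $X$-leaves, and $\hFls$ is a distinct constant of the grammar, so every occurrence of $\hFls$ in $\hV$ persists verbatim in $\hV[\max X.\hV/X]$. Applying the base case of \Cref{def:refute-formulae} once more then yields that $\hV[\max X.\hV/X]$ can refute in itself (in $0$ unfoldings), which is the conclusion of the lemma.

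The only step requiring any care is the scope observation above; a naive attempt might try to chase unfolding depth indefinitely without noticing that the top-level formula can never be pushed inside one of its own binders. Once that point is made precise, the rest of the argument is purely syntactic bookkeeping about substitution and presents no further obstacle.
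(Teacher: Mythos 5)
Your proof is correct. The paper itself leaves \Cref{lem:can-refute-unfold} unproved (it is stated as immediate), and your argument supplies exactly the right justification: since the outermost formula $\max X.\hV$ cannot lie in the scope of any of its own binders, the recursive clause of \Cref{def:refute-formulae} can never be the (non-circular) witness for it, so by induction on the unfolding depth the hypothesis collapses to ``$\hFls$ occurs syntactically in $\hV$'', and that occurrence is preserved verbatim by the substitution $\hV[\max X.\hV/X]$, giving refutability of the unfolding in $0$ unfoldings. Your explicit handling of the scope point is the only place where care is needed, and you have identified and resolved it correctly.
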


\begin{lemma}\label{lem:all-can-refute}
	If all subformulae of $\hNec \act \hV$ or $\hV \land \hVV$ or $\hVV \land \hV$ can refute, then all subformulae of $\hV$ can refute.
	If all subformulae of $\max X.\hV$  can refute, then all subformulae of $\hV[\max X.\hV/X]$ can refute.
	\qedmaybe 
\end{lemma}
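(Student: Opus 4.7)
The lemma bundles four closure claims. The first three (passing from $\hNec \act \hV$, $\hV \land \hVV$, or $\hVV \land \hV$ down to $\hV$) are structural; the fourth (unfolding $\max X.\hV$) requires more work and is the main obstacle.

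For the first three claims, let $\chi$ denote the larger formula and let $\varphi$ be any subformula of $\hV$. I would observe that $\varphi$ is also a subformula of $\chi$, and that the set of max-binders $\max Y.\hVV'$ of $\chi$ in whose scope $\varphi$ lies is identical to the corresponding set computed inside $\hV$, because neither $\hNec \act$ nor a conjunction with an unrelated $\hVV$ introduces or removes any such binder enclosing $\varphi$. A straightforward induction on the unfolding count $k$ from \Cref{def:refute-formulae} then shows that any witness to ``$\varphi$ refutes in $k$ unfoldings in $\chi$'' is also a witness inside $\hV$: the base case uses that $\hFls$ occurs in $\varphi$ intrinsically, and the step case invokes the inductive hypothesis on the same enclosing $\max Y.\hVV'$ (which has fewer unfoldings and is a subformula of $\hV$).

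For the fourth claim, let $\varphi$ be an arbitrary subformula of $\hV[\max X.\hV/X]$. I would case-split on the origin of $\varphi$: either (a) $\varphi$ lies inside one of the inserted copies of $\max X.\hV$ (so $\varphi$ is itself a subformula of $\max X.\hV$), or (b) $\varphi = \psi[\max X.\hV/X]$ for some subformula $\psi$ of $\hV$. In case (a) the hypothesis gives a refutation witness for $\varphi$ inside $\max X.\hV$, and this witness transfers to $\hV[\max X.\hV/X]$ because the inserted copy of $\max X.\hV$ still supplies all the enclosing max-binders appealed to; outer context from $\hV$ only adds further binders. In case (b) I would induct on the unfolding count $k$ witnessing that $\psi$ refutes in $\max X.\hV$. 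The base case ($\hFls$ appearing in $\psi$) is preserved by substitution. The step case splits according to which binder $\max Y.\hVV'$ encloses $\psi$: if $Y \neq X$, then $\max Y.\hVV'$ lies entirely within $\hV$, and its substituted image $\max Y.(\hVV'[\max X.\hV/X])$ still encloses $\psi[\max X.\hV/X]$ in $\hV[\max X.\hV/X]$, so the inductive hypothesis gives the required witness; if $Y = X$ and the binder in question is the outer $\max X.\hV$ itself, then the occurrences of $X$ in $\psi$ become inserted copies of $\max X.\hV$ in $\psi[\max X.\hV/X]$, and applying \Cref{lem:can-refute-unfold} (together with case (a)) shows these inserted copies carry refutation witnesses in $\hV[\max X.\hV/X]$ that I transport back up to $\varphi$.

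The hard part is the $Y = X$ subcase: the substitution eliminates the ``$X$ appears in $\psi$'' clause of \Cref{def:refute-formulae} because $X$ no longer occurs freely, so I cannot simply mirror the original witness. The trick is that each occurrence now carries a fully-fledged $\max X.\hV$, which by hypothesis can refute in $\max X.\hV$, hence—by case (a)—refutes in $\hV[\max X.\hV/X]$; this refutability is sufficient to carry along $\psi[\max X.\hV/X]$ itself, matching the unfolding count given by \Cref{lem:can-refute-unfold}.
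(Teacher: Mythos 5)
The paper itself offers no proof of this lemma---it is stated with an empty proof marker and treated as routine---so there is nothing to compare your argument against line by line; judged on its own terms, your plan is essentially sound. The first three claims and case~(a) of the fourth are handled exactly as one would expect: the witness chain of enclosing fixpoint binders from \Cref{def:refute-formulae} lives entirely inside $\hV$ (respectively, inside the inserted copy of $\max X.\hV$, since the outer context contributes no binders that the chain could need), so it transfers verbatim, and your induction on the unfolding count makes this precise.

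The one step you should make explicit is the final move in the $Y=X$ subcase: from ``each inserted copy of $\max X.\hV$ inside $\psi[\max X.\hV/X]$ can refute'' you conclude that $\psi[\max X.\hV/X]$ itself can refute. \Cref{def:refute-formulae} offers only two ways to establish refutability---$\hFls$ occurring in the subformula, or a variable occurring in it while it sits under a refuting binder for that variable---and neither clause literally says that refutability of a subformula lifts to its superformulae. That lifting is true, but it needs its own small induction on the unfolding count: if a subformula $\chi$ of $\hVV$ refutes because $\hFls$ occurs in $\chi$, then $\hFls$ occurs in $\hVV$; if $\chi$ refutes via a binder that also encloses $\hVV$, the same clause applies directly to $\hVV$ (the relevant variable occurs in $\chi$, hence in $\hVV$); and if the binder lies strictly between $\hVV$ and $\chi$, it is itself a subformula of $\hVV$ refuting in fewer unfoldings, and one recurses. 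Once you state and prove this monotonicity fact, your argument closes. A minor point: \Cref{lem:can-refute-unfold} is not quite the right tool in that subcase---it concerns refutability of the whole formula $\hV[\max X.\hV/X]$, which does not pass \emph{down} to the subformula $\psi[\max X.\hV/X]$ (refutability propagates upward, not downward); the work is really done by case~(a) together with the monotonicity fact, not by that lemma.
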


We
define the box-depth of a formula from 
$\eHML \cap \SHML$  recursively: 
\begin{align*}
d_{B}\left(\bigwedge_{\actttt \in \Act}\hNec{\actttt}\hV_\actttt\right) &~=~ d_{B}(\hFls) ~=~ 0
%	&
\\
d_{B}(X) &~=~ 	d_{B}(\hTru) ~=~ \infty \\
d_{B}(\hV_1 \land \hV_2) &~=~ \min\{d_{B}(\hV_1), d_{B}(\hV_2)\} + 1; ~~~~~\text{ and }\\
d_{B}(\max X.\hV') &~=~ d_{B}(\hV') + 1 .
\end{align*}
%$d_{B}(\bigwedge_{\actttt \in \Act}\hNec{\actttt}\hV_\actttt) = d_{B}(\hFls) = 0$;
% $d_{B}(X) = \infty$;  
% $d_{B}(\hV_1 \land \hV_2) = \min\{d_{B}(\hV_1), d_{B}(\hV_2)\} + 1$;  and 
% $d_{B}(\max X.\hV') = d_{B}(\hV') + 1$.

\begin{lemma}\label{lem:substit-lowers-depth2}
	For all possibly open $\hV, \hVV \in \eHML\cap\SHML$ 
	$d_{B}(\hV[\hVV/X]) \leq d_{B}(\hV)$.
\end{lemma}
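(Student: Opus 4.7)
The plan is to prove this by straightforward structural induction on $\hV \in \eHML \cap \SHML$, exactly paralleling the proof strategy of \Cref{lem:substit-lowers-depth}. The key observation is that the box-depth $d_B$ only becomes strictly smaller under substitution when a variable position is filled by a subformula of finite box-depth; in all other cases $d_B$ is either preserved or already $\infty$, making the inequality automatic.

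I would dispatch the base cases first. For $\hV \in \{\hTru, \hFls\}$, and for $\hV = Y$ with $Y \neq X$, the substitution acts as the identity so the two sides are equal. For $\hV = X$, we have $\hV[\hVV/X] = \hVV$ and $d_B(X) = \infty$, so the inequality $d_B(\hVV) \leq \infty$ is trivial. For the explicit conjunctive modality $\hV = \bigwedge_{\actttt \in \Act}\hNec{\actttt}\hV_\actttt$, both sides equal $0$ by definition, regardless of what the $\hV_\actttt$ become after substitution.

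For the inductive cases I would treat conjunction and greatest fixpoint. When $\hV = \hV_1 \land \hV_2$, applying the inductive hypothesis to $\hV_1$ and $\hV_2$ separately and using monotonicity of $\min$ gives
\[
d_B(\hV[\hVV/X]) = \min\{d_B(\hV_1[\hVV/X]), d_B(\hV_2[\hVV/X])\} + 1 \leq \min\{d_B(\hV_1), d_B(\hV_2)\} + 1 = d_B(\hV).
\]
When $\hV = \max Y.\hV'$, one must distinguish $Y = X$ (in which case substitution is vacuous by capture avoidance and the two sides are literally equal) from $Y \neq X$ (in which case $d_B(\hV[\hVV/X]) = d_B(\hV'[\hVV/X]) + 1 \leq d_B(\hV') + 1 = d_B(\hV)$ by the inductive hypothesis applied to $\hV'$).

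There is no genuine obstacle here: the box-depth definition is designed precisely so that the only way a substitution could reduce depth is via the $\hV = X$ clause where $d_B$ is already $\infty$, and the recursive clauses for $\land$ and $\max$ are monotone in the depths of their immediate subformulae. Consequently the argument reduces to a routine case analysis mirroring \Cref{lem:substit-lowers-depth}.
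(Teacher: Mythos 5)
Your proof is correct and follows exactly the same route as the paper, which simply states ``straightforward induction on $\hV$''; your case analysis covers all the constructors of $\eHML\cap\SHML$ and each step checks out. No issues.
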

\begin{proof}
	Straightforward induction on $\hV$.
\end{proof}

\begin{lemma}\label{lem:move-violation-forward}
	Let $\act \in \Act$ and $\hV \in \eHML \cap \SHML$ (\resp $\hV \in \eHML \cap \CHML$), where all subformulae of $\hV$ can refute (\resp verify).
	There is some $\hVV \in \eHML \cap \SHML$ (\resp $\hVV \in \eHML \cap \CHML$), such that
	all subformulae of $\hVV$ can refute (\resp verify), and for every $\fftV \in \fTrc$,
	$\act\fftV \in \hSemF{\hV}$ implies that
	$\fftV \in \hSemF{\hVV}$
	(\resp
	$\fftV \in \hSemF{\hVV}$ implies that
	$\act\fftV \in \hSemF{\hV}$).
\end{lemma}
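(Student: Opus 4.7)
The plan is to define a syntactic Brzozowski-style derivative operation $\partial_\act$ acting on explicit \SHML formulae, and then set $\hVV = \partial_\act(\hV)$. The derivative is defined by structural recursion: $\partial_\act(\hFls) = \hFls$; $\partial_\act(\hV_1 \land \hV_2) = \partial_\act(\hV_1) \land \partial_\act(\hV_2)$; $\partial_\act\bigl(\bigwedge_{\actttt \in \Act}\hNec{\actttt}{\hV_\actttt}\bigr) = \hV_\act$; and $\partial_\act(\hMaxX{\hV}) = \partial_\act(\hV[\hMaxX{\hV}/X])$, i.e.\ unfold once and then recurse. The cases $\hTru$ and bare $X$ need not be handled on inputs of interest, since $\hTru$ never refutes (so it cannot appear as a subformula of $\hV$) and free variable occurrences are consumed by the $\hMaxX{}$ case under the standing closedness and guardedness assumptions on \UHML formulae.

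To verify the lemma, I would prove three properties by induction on the recursive calls of $\partial_\act$: (i) $\partial_\act(\hV) \in \eHML \cap \SHML$, which follows directly from inspecting the cases; (ii) if every subformula of $\hV$ can refute, then every subformula of $\partial_\act(\hV)$ can refute, which in the conjunction and modality cases is immediate since the subformulae reached are subformulae of $\hV$, and in the $\hMaxX{}$ case follows by applying \Cref{lem:all-can-refute} to move from $\hMaxX{\hV}$ to $\hV[\hMaxX{\hV}/X]$ before invoking the inductive hypothesis; and (iii) the semantic implication $\act\fftV \in \hSemF{\hV} \Rightarrow \fftV \in \hSemF{\partial_\act(\hV)}$, which is vacuous for $\hFls$, immediate from the semantics of $\hNec{\act}{}$ in the explicit-box case, straightforward in the conjunction case, and in the fixpoint case uses $\hSemF{\hMaxX{\hV}} = \hSemF{\hV[\hMaxX{\hV}/X]}$ to reduce to the inductive hypothesis.

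The main obstacle is justifying well-definedness (termination) of $\partial_\act$ on $\hMaxX{\hV}$ formulae, since the unfolded body $\hV[\hMaxX{\hV}/X]$ is syntactically larger than $\hMaxX{\hV}$. Here I would invoke the standing guardedness assumption: every free occurrence of $X$ in $\hV$ sits under a modality, so the substituted copies of $\hMaxX{\hV}$ appearing in $\hV[\hMaxX{\hV}/X]$ are guarded, and the recursion necessarily hits a modality (where it terminates by picking out a single branch) before encountering another unfolding of this same fixpoint. Formally, one argues that all intermediate formulae reached by $\partial_\act$ lie within the (finite) Fischer--Ladner-style closure of $\hV$ under subformulae and fixpoint unfoldings, so the recursion terminates and the induction of the previous paragraph is well-founded.

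For the \CHML case, a dual derivative is defined by replacing $\hFls, \land, \hNec{\actttt}{}, \hMaxX{}$ with $\hTru, \lor, \hSuf{\actttt}{}, \hMinX{}$ respectively; the only differences are that $\partial_\act(\bigvee_{\actttt}\hSuf{\actttt}{\hV_\actttt}) = \hV_\act$ now uses the diamond semantics, and the semantic direction reverses to $\fftV \in \hSemF{\partial_\act(\hV)} \Rightarrow \act\fftV \in \hSemF{\hV}$. The analogue of \Cref{lem:all-can-refute} for verification and of \Cref{lem:can-refute-unfold} for $\hMinX{\hV}$ (both immediate by symmetry with the refute-side definitions) lets the argument go through unchanged.
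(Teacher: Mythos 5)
Your proposal is correct and rests on the same core idea as the paper's proof: descend through $\land$ and $\hMaxX{}$ (unfolding fixpoints, which preserves semantics and, via \Cref{lem:all-can-refute}, the ``can refute'' property of all subformulae) until an explicit box conjunction is reached, then select its $\act$-component. The differences are organisational. First, you recurse into \emph{both} conjuncts, computing an exact derivative, whereas the paper only descends into one conjunct of finite box-depth --- enough because only the one-directional implication $\act\fftV \in \hSemF{\hV} \Rightarrow \fftV \in \hSemF{\hVV}$ is required; both choices work, and yours yields a slightly stronger $\hVV$. Second, and more substantively, the paper makes termination immediate by inducting on the explicit measure $d_B$ (finite precisely because ``all subformulae can refute'' forces every $\land$/$\hMaxX{}$-leaf of a closed guarded formula to be $\hFls$ or a full box, with \Cref{lem:substit-lowers-depth2} handling the unfolding step), whereas your argument appeals to guardedness plus finiteness of the Fischer--Ladner closure. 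That is a standard and salvageable route, but as stated it has a small soft spot: finiteness of the set of reachable formulae does not by itself preclude the recursion cycling through that set forever; one still needs to argue (from guardedness) that the Boolean/unfolding descent cannot revisit a formula before hitting a modality. The paper's measure-based induction buys you exactly that for free, at the cost of defining $d_B$; your derivative formulation buys a cleaner, reusable operation but leaves the well-foundedness to a folklore argument that deserves one more line.
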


\begin{proof}
	We assume that $\hV \in \eHML \cap \SHML$, as 
	%	proving 
	the case for $\hV \in \eHML \cap \CHML$ is similar.
	%	We
	%	define the box-depth of a formula from 
	%	$\eHML \cap \SHML$ in a recursive way: $d_{B}(\bigwedge_{\actttt \in \Act}\hNec{\actttt}\hV_\actttt) = d_{B}(\hFls) = 0$; $d_{B}(X) = \infty$;  $d_{B}(\hV_1 \land \hV_2) = \min\{d_{B}(\hV_1), d_{B}(\hV_2)\} + 1$;  and $d_{B}(\max X.\hV') = d_{B}(\hV') + 1$.
	Since $\hV$ is a closed formula and can refute, $\hFls$ appears in $\hV$, and therefore $d_{B}(\hV) < \infty$.
	We proceed to prove the lemma by induction on
	$d_{B}(\hV)$, similarly to the proof of \Cref{thm:ihml-to-useful-regular}.
	\begin{description}
		\item[If $\hV = \hFls$,] then we are done immediately by taking $\hVV = \hFls$.
		\item[If $\hV = \bigwedge_{\actttt \in \Act}\hNec{\actttt}\hV_\actttt$,] then we can set $\hVV = \hV_\act$.
		\item[If $\hV = \hV_1 \land \hV_2$,] then either  $d_{\acta}(\hV_1) < \infty$ or $d_{\acta}(\hV_2) < \infty$, and we are done by the inductive hypothesis on one of the two subformulae.
		%		, there is some $\ftVV$, such that $\ftV\ftVV$ negatively determines $\hV_i$ for some $i \in \{1,2\}$, and therefore also $\hV$.
		\item[If  $\hV = \max X.\hV'$,] then
		%		, as
		%		$\hVV$ is reactive, so is
		%
		$\hV'[\hV/X] \in \eHML \cap \SHML$ and
		%		it is not hard to see that
		all subformulae of $\hV'[\hV/X]$ can refute, by \Cref{lem:all-can-refute}.
		Furthermore,
		$\hSemF{\hV} = \hSemF{\hV'[\hV/X]}$, and we are done by the inductive hypothesis.
		\qedhere
		%		 $d_{\acta}(\hVV'[\hVV/X])=d_{\acta}(\hVV') < d_{\acta}(\hVV)$, and therefore, by the  inductive hypothesis on $\hVV'[\hVV/X]$, there is some $\ftVV$, such that $\ftV\ftVV$ negatively determines $\hVV'[\hVV/X]$, and therefore also $\hVV$.
	\end{description}
\end{proof}

\begin{lemma}\label{thm:pihml-to-supuseful-regular}
	If $\hV \in \SPIHML$ or $\hV \in \CPIHML$, then there is a regular monitor that is sound for $\hV$ and
	\superusefulviol, or, respectively, \superusefulsat.
	%	violation-, or, respectively, satisfaction-\superuseful. % for \hV.
\end{lemma}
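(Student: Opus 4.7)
The plan is to take our regular monitor to be the one supplied by \Cref{thm:determines-gives-verdict}: this monitor $\mV$ is regular, is sound for $\hV$, and rejects (\resp accepts) every finite trace that negatively (\resp positively) determines $\hV$. Given soundness, showing \superusefulness for violation (\resp satisfaction) reduces to the purely semantic claim that for every finite trace $\ftV$ there is an extension $\ftV\ftVV$ that negatively (\resp positively) determines $\hV$. I will treat the \SPIHML case; the \CPIHML case is dual, using the verify-analogue of \Cref{lem:move-violation-forward}.

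So fix $\hV = \hV_1 \land \hV_2 \in \SPIHML$, where $\hV_1 \in \SHML \cap \eHML$ and every subformula of $\hV_1$ can refute. Given a finite trace $\ftV = \act_1 \cdots \act_k$, I will iterate \Cref{lem:move-violation-forward} along the actions of $\ftV$. Concretely, setting $\hV_1^{(0)} = \hV_1$ and, for $1 \leq i \leq k$, choosing $\hV_1^{(i)} \in \eHML \cap \SHML$ (with all subformulae refutable) to be the formula supplied by \Cref{lem:move-violation-forward} applied to $\act_i$ and $\hV_1^{(i-1)}$, I obtain by induction on $i$ the implication
\[
\fftV \notin \hSemF{\hV_1^{(i)}} \ \text{implies}\ \act_{k-i+1} \cdots \act_k\,\fftV \notin \hSemF{\hV_1^{(0)}} = \hSemF{\hV_1},
\]
so any finite trace $\ftVV$ that negatively determines $\hV_1^{(k)}$ yields $\ftV\ftVV$ that negatively determines $\hV_1$, hence also $\hV$, and thus $\rej{\mV,\ftV\ftVV}$ as required.

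It therefore remains to produce, for any $\hVV \in \eHML \cap \SHML$ all of whose subformulae can refute, a finite trace that negatively determines $\hVV$. Because $\hVV$ can refute, there is a finite number $k$ of unfoldings of $\max$-subformulae after which $\hFls$ appears syntactically; using \Cref{lem:can-refute-unfold,lem:all-can-refute} these unfoldings produce a formula that is semantically equivalent to $\hVV$, still lies in $\eHML \cap \SHML$ with refutable subformulae, and now has finite $d_{\hFls}$. The induction on $d_{\hFls}$ in the proof of \Cref{thm:ihml-to-useful-regular} then directly yields the desired finite negatively-determining trace.

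The main obstacle is the technical bookkeeping in bridging the \emph{semantic} notion of ``can refute'' (which quantifies over unfoldings) with the \emph{syntactic} depth $d_{\hFls}$ needed to drive the induction; this is where \Cref{lem:can-refute-unfold,lem:all-can-refute} and \Cref{lem:substit-lowers-depth2} must be woven together to show that unfolding preserves both membership in $\eHML \cap \SHML$ and universal refutability of subformulae while making $\hFls$ appear. Once this step is isolated, the iterated application of \Cref{lem:move-violation-forward} and the appeal to \Cref{thm:determines-gives-verdict} are routine.
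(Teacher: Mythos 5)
Your proposal is correct and follows essentially the same route as the paper's proof: reduce via \Cref{thm:determines-gives-verdict} to the semantic claim that every finite trace extends to one negatively (\resp positively) determining $\hV$, peel the actions of $\ftV$ off one at a time with \Cref{lem:move-violation-forward}, and settle the resulting base case by the $d_{\hFls}$ induction from the proof of \Cref{thm:ihml-to-useful-regular}. Two minor points: your intermediate invariant should involve the prefix $\act_1\cdots\act_i$ rather than the suffix $\act_{k-i+1}\cdots\act_k$ (the conclusion at $i=k$ is unaffected), and the ``bridging'' step you flag as the main obstacle is immediate for the formula at the top level, since a \emph{closed} formula can refute only if $\hFls$ already appears in it syntactically, so $d_{\hFls}(\hV_1)<\infty$ without any unfolding.
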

\begin{proof}
	We assume that $\hV \in \SPIHML$, as the case for $\hV \in \CPIHML$ is similar.
	Let $\hV = \hVV \land \hVV_*$, where $\hVV \in \eHML \cap \SHML$
	and all of its subformulae can refute.
	%	, is reactive,  and $\hFls$ appears in $\hVV$ and in the scope of every recursion operator in $\hVV$.
	%	We can assume that $\hFls$ appears at least once in $\hVV$
	%
	%It is not hard to see from the synthesis function that there is a regular monitor $\mV$ that is sound and violation complete for $\hVV$, and $\no$ appears in $\mV$ and in the scope of every recursion operator in $\mV$.
	By \Cref{thm:determines-gives-verdict}, i%
	%I
	t suffices to prove that for every $\ftV \in \Act^*$, there is some $\ftVV \in \Act^*$, such that
	$\ftV\ftVV$ negatively determines $\hV$.
	% $\mV \wtraS{\ftV\ftVV} \vV$ for some verdict \vV.
	%
	%Specifically, we prove that for every
	We prove this by induction on $\ftV$.
	If $\ftV = \varepsilon$, then as in the proof of \Cref{thm:ihml-to-useful-regular}, we can show that there is a finite trace that negatively determines $\hVV$.
	If $\ftV = \acta\ftV'$,
	then
	by \Cref{lem:move-violation-forward}. there is some $\hVV' \in \eHML \cap \SHML$, such that all subformulae of $\hVV'$ can refute, and
	for every $\fftV \in \fTrc$, $\acta\fftV \in \hSemF{\hVV}$ implies that $\fftV \in \hSemF{\hVV'}$.
	By the inductive hypothesis, there is some $\ftVV$, such that $\ftV'\ftVV$ negatively determines $\hVV'$, and therefore, $\ftV\ftVV$  negatively determines $\hVV$.
\end{proof}

We define the depth of a variable $x$ in a regular monitor $\mV$ recursively: $d_x(x) = 0$ and $d_x(y) = d(\no) = d(\yes) = \infty$, where $y \neq x$; 
$d_{x}(\mV_1 + \mV_2) = \min\{d_{x}(\mV_1), d_{x}(\mV_2)\}+1$; $d_{x}(\act.\mV) = d_{x}(\mV) + 1$; and $d_{x}(\rec x ~\mV) = d_{x}(\rec y ~\mV) = d_{x}(\mV) + 1$.

%% temp place:
\newcommand{\actc}{c}

\begin{lemma}\label{lem:explicit-deterministic}
	Let $\mV$ be a \superusefulviol, deterministic regular monitor.
	If $A \subsetneq \Act$, then 
	$\sum_{\act\in A}\act.\mV_\act$ can only appear in $\mV$ as a submonitor of a larger sum.
\end{lemma}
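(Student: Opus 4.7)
The plan is to argue by contraposition: assume $\mV^* = \sum_{\act\in A}\act.\mV_\act$ with $A \subsetneq \Act$ occurs as a subterm of $\mV$ without being a summand of any larger sum, and derive that $\mV$ is not \superusefulviol. Fix $b \in \Act \setminus A$; the goal is to exhibit a trace $\ftV$ such that no extension of $\ftV b$ is rejected by $\mV$. To build $\ftV$, descend the syntactic path from the root of $\mV$ to the chosen occurrence of $\mV^*$: each prefix operator on the path contributes its action to $\ftV$ via rule $\rtit{mAct}$; each recursion operator contributes a $\tau$-unfolding via rule $\rtit{mRec}$ that closes the bound variable by substitution; and each step into a summand of an intermediate $+$ fires that summand's head action via $\rtit{mSelL}/\rtit{mSelR}$. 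Because the immediate syntactic parent of $\mV^*$ is not $+$ (this is exactly what ``not a submonitor of a larger sum'' rules out), the state reached is a closed instance $\mV^{**} = \sum_{\act\in A}\act.\mV^{**}_\act$, rather than $\mV^*$ merged with extra summands; hence $\mV \wtraS{\ftV} \mV^{**}$, the state $\mV^{**}$ is not a verdict, $\mV^{**} \traSN{b}$, and $\mV^{**}$ admits no $\tau$-transition.

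Next, show that $\mV$ rejects no $\ftV b \fftV$. Suppose for contradiction $\mV \wtraS{\ftV_0} \no$ for some $\ftV_0 \preceq \ftV b \fftV$. Either $\ftV_0 \preceq \ftV$ and \Cref{lem:ver-persistence} yields $\mV \wtraS{\ftV} \no$, or $\ftV_0 = \ftV b \ftV_1$ and $\mV \wtraS{\ftV} \mV'$ with $\mV' \wtraS{b\ftV_1} \no$. Using the determinism of $\mV$ (\Cref{def:determinism})---under which action transitions and $\tau$-unfoldings are both deterministic in regular monitors---every state reachable from $\mV$ via $\ftV$ lies in a common $\tau$-chain that can only end at $\mV^{**}$; since $\mV^{**}$ has no $\tau$-successor, every such state $\tau$-reduces to $\mV^{**}$ and therefore enables exactly the actions in $A$. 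This is incompatible both with reaching $\no$ (which would force $\mV^{**} = \no$, absurd since $\mV^{**}$ is a sum) and with enabling a $b$-transition from any state reachable via $\ftV$, closing both cases. Hence no $\fftV$ satisfies $\rej{\mV,\ftV b \fftV}$, contradicting the \superusefulviol property of $\mV$ at $\ftV' = \ftV b$ by \Cref{def:informative-monitors}.

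The main obstacle is making the syntactic-to-operational descent of the first paragraph rigorous: the induction on the path from the root of $\mV$ to the chosen occurrence of $\mV^*$ must carefully track the substitutions accumulated through nested recursion binders so that the reached closed state genuinely has the shape $\sum_{\act\in A}\act.\mV^{**}_\act$, and the hypothesis on $\mV^*$ must be used to ensure that no outer $+$ absorbs it into a strictly larger sum anywhere along the way. The supporting uniqueness-of-$\tau$-chain fact used in the second paragraph follows routinely from the shape constraint in \Cref{def:determinism} together with the observation that, for parallel-free (\emph{i.e.}, regular) monitors, the only source of $\tau$-transitions is recursion unfolding.
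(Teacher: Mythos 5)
Your argument is correct and takes essentially the same route as the paper's proof: both proceed by contraposition, inducting on the position of the occurrence of $\sum_{\act\in A}\act.\mV_\act$ (the paper writes $\mV = \mV'[\sum_{\act\in A}\act.\mV_\act/x]$ and inducts on the depth of $x$ in $\mV'$) to exhibit a finite trace $\ftV$ after which no $\acta\in\Act\setminus A$ is enabled, so that no extension of $\ftV\acta$ can be rejected. One inaccuracy to fix: in this paper's semantics rule \rtit{mRec} folds recursion unfolding into the action transition itself, so regular monitors have \emph{no} $\tau$-transitions at all; your ``common $\tau$-chain'' discussion is therefore vacuous, and the uniqueness of the state reached on $\ftV$ follows directly from \Cref{def:determinism} and a routine induction.
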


\begin{proof}
	Let $\acta \in \Act \setminus A$ and let $\mV'$ be an open monitor and $x$ a variable that does not appear in $\mV$, such that $\mV = \mV'[\sum_{\act\in A}\act.\mV_\act/x]$. 
	It is clear that $\sum_{\act\in A}\act.\mV_\act \centernot{\wtraS{\acta}}$.
	Therefore, it suffices to
	%	We 
	prove that for every deterministic $\mVV$ with free variable $x$, if $\mVV' \centernot{\wtraS{\acta}}$, then 
	there is a finite trace $\ftV$, such that there is no regular monitor $\mVVV$ for which $\mVV[\mVV'/x] \wtraS{\ftV\acta} \mVVV$.
	We proceed to prove this claim
	by induction on $d_{x}(\mVV)$, and the case for $\mVV = x$ is immediate.
	If $\mVV = \mVV_1 + \mVV_2$, then, as $\mVV$ is deterministic, $\mVV = \actb.\mVV_1' + \actc.\mVV_2'$, where $\actb \neq \actc$, and we are done by the inductive hypothesis on either $\mVV_1'$ or $\mVV_2'$, and $\mVV'$.
	If $\mVV = \actb.\mVV_1$, then if
	the inductive hypothesis on  $\mVV_1'$  and $\mVV'$ gives trace $\ftVV$, then we can set $\ftV =\actb \ftVV$.
	If $\mVV = \rec y \mVV_1$, then we re done by the inductive hypothesis on $\mVV_1[\mVV/y]$ and $\mVV'[\mVV/y]$.
\end{proof}

Here we call a regular monitor explicit when it is generated by the grammar:
\begin{align*}
\mV \bnfdef \mend ~~\bnfsep~~ \no ~~\bnfsep~~ x ~~\bnfsep~~ \sum_{\act \in Act}\act.\mV_\act ~~\bnfsep~~ \rec x \mV.
\end{align*}

\begin{corollary}\label{cor:explicit-deterministic}
	Every \superusefulviol, deterministic regular monitor is explicit.
\end{corollary}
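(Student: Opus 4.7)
The plan is to combine \Cref{lem:explicit-deterministic} with the observation that a violation-\superuseful deterministic regular monitor cannot reach the \yes{} verdict. Fix such a monitor $\mV$. Since $\mV$ is regular, it contains no parallel composition, so its syntax is built from \mend, \no, \yes, variables, action-prefix, $+$, and recursion; I have to show that (i) \yes{} does not occur in the reachable part of $\mV$, and (ii) every maximal $+$-sum appearing in $\mV$ ranges over all of $\Act$.

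For (i), suppose towards a contradiction that some reachable subterm is \yes, \ie{} $\mV \wtraS{\ftV} \yes$ for some $\ftV \in \Act^*$. Then by verdict persistence (\Cref{lem:ver-persistence}) we have $\mV \wtraS{\ftV\fftV} \yes$ for every continuation $\fftV$, so \emph{no} extension of $\ftV$ can be rejected. This contradicts the assumption that $\mV$ is violation-\superuseful. Hence every reachable subterm of $\mV$ is \yes-free, and unreachable occurrences of \yes{} can be pruned without affecting the accept/reject relation, so I may assume $\mV$ itself does not syntactically contain \yes.

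For (ii), take any maximal $+$-sum $\sum_{\act \in A}\act.\mV_\act$ appearing in $\mV$, meaning one that is not itself a summand of a larger sum; such a maximal sum always exists by finiteness of the syntax. By the contrapositive of \Cref{lem:explicit-deterministic}, since this sum is not a summand of a larger sum, we cannot have $A \subsetneq \Act$, and so $A = \Act$. Combining (i) and (ii), every node of the abstract syntax tree of $\mV$ matches a production of the explicit grammar, so $\mV$ is explicit. The one delicate point, which is more a bookkeeping matter than a genuine obstacle, is the implicit convention of pruning unreachable subterms; this is harmless because the corollary is used only to reason about verdicts, which the pruning preserves by \Cref{lem:ver-persistence}.
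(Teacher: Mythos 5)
Your proof is correct and follows the paper's route: the paper's own proof of this corollary is simply an appeal to \Cref{lem:explicit-deterministic}, which is exactly your step (ii). Your step (i), ruling out \yes{} via verdict persistence (note that the inference ``no extension of $\ftV$ can be rejected'' also uses determinism, which guarantees a unique residual after each trace), is a detail the paper leaves implicit, and your pruning caveat is moot since every syntactic occurrence of \yes{} in a regular monitor is reachable by following its guarding prefixes.
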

\begin{proof}
	From \Cref{lem:explicit-deterministic}.
\end{proof}

\begin{lemma}\label{lem:ehml-conditions-from-m-to-f}
	Let $\mV$ be an explicit deterministic regular monitor, 
	%	which only has $+$ in the context of sums of the form $\sum_{\act\in \Act}\act.\mV_\act$, and
	such that all of its submonitors can refute.
	Then, $\mSyn{\mV} \in \eHML$ and all of its subformulae can refute.
\end{lemma}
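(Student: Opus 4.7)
The plan is to prove the two conclusions---that $\mSyn{\mV} \in \eHML$ and that every subformula of $\mSyn{\mV}$ refutes in $\mSyn{\mV}$---together, leveraging the premise that every submonitor of $\mV$ refutes in $\mV$. Membership in $\eHML$ follows by a straightforward structural induction on the explicit $\mV$: the only case deserving comment is the sum $\sum_{\act \in \Act}\act.\mV_\act$, whose synthesis is $\bigwedge_{\act \in \Act}\hNec{\act}{\mSyn{\mV_\act}}$, exactly one of the big-junction shapes permitted in $\eHML$---this is the payoff of the explicit grammar insisting that sums range over all of $\Act$. The remaining cases $\no$, $\mend$, $x$, and $\rec{x}{\mV'}$ translate directly into $\hFls$, $\hTru$, $X$, and $\max X.\mSyn{\mV'}$, each of which is in $\eHML$.

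The heart of the argument is a correspondence lemma, which I would prove by induction on $k$: \emph{for every submonitor $\mVV$ of $\mV$, if $\mVV$ refutes in $\mV$ in $k$ unfoldings, then $\mSyn{\mVV}$ refutes in $\mSyn{\mV}$ in $k$ unfoldings}. The base case rests on the equivalence ``$\no$ occurs in $\mVV$ iff $\hFls$ occurs in $\mSyn{\mVV}$'', which holds because $\mSyn$ sends $\no$ to $\hFls$ and otherwise preserves syntactic positions. For the inductive step, assume $x$ occurs in $\mVV$ and $\mVV$ lies in the scope of a submonitor $\rec{x}{\mVV'}$ of $\mV$ that refutes in $k$ unfoldings. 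Then $X$ occurs in $\mSyn{\mVV}$ and $\mSyn{\mVV}$ lies in the scope of the corresponding $\max X.\mSyn{\mVV'} = \mSyn{\rec{x}{\mVV'}}$, which refutes in $k$ unfoldings by the inductive hypothesis, so $\mSyn{\mVV}$ refutes in $k{+}1$ unfoldings.

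To finish, I classify every subformula of $\mSyn{\mV}$ as either $\mSyn{\mVV}$ for some submonitor $\mVV$ of $\mV$, or a box $\hNec{\act}{\mSyn{\mV_\act}}$ extracted from a sum $\sum_{\act \in \Act}\act.\mV_\act$. The first kind refutes immediately by the correspondence lemma combined with the hypothesis that every submonitor of $\mV$ refutes. The second kind reduces to the first: $\mV_\act$ is itself a submonitor of $\mV$, so $\mSyn{\mV_\act}$ refutes in $\mSyn{\mV}$; wrapping in $\hNec{\act}{\cdot}$ preserves refutation, since whether triggered by an occurrence of $\hFls$ or by a variable lying in the scope of a refuting $\max X$ higher up, moving one syntactic step outward disturbs neither condition. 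The main obstacle is the scope bookkeeping in the inductive step of the correspondence lemma: one must verify that a $\rec x$ ancestor in $\mV$ transports to a $\max X$ ancestor of the corresponding position in $\mSyn{\mV}$, which is routine given that $\mSyn$ is a position-preserving map but does need to be spelled out carefully.
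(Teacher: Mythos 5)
Your proof is correct and takes the same route as the paper, whose entire proof is the single line ``by induction on the construction of $\mV$''; your structural induction for $\eHML$-membership, the auxiliary induction on the number of unfoldings $k$ to transfer ``can refute'' across the position-preserving synthesis map $\mSyn{-}$ (with $\no \leftrightarrow \hFls$, $x \leftrightarrow X$, $\rec{x}{} \leftrightarrow \max X.$), and the classification of subformulae of $\mSyn{\mV}$ as images of submonitors or boxes peeled from full sums are exactly the details that one-liner elides.
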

\begin{proof}
	By induction on the construction of \mV.
\end{proof}

\begin{lemma}\label{thm:to-supuseful-regular-to-pihml}
	If $\hV \in \UHML$ and there is a monitor that is sound for $\hV$ and
	\superusefulviol or \superusefulsat,
	%	partially \superuseful for \hV,
	then there is some $\hVV \in \SPIHML$, or, respectively, $\hVV \in \CPIHML$, such that $\hSemF{\hVV} = \hSemF{\hV}$.
\end{lemma}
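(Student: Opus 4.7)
The plan is to construct the required \SPIHML\ witness by normalising the monitor to a canonical shape and then applying the formula synthesis function $\mSyn$ from \Cref{sec:regular-defs-appendix}. I treat the violation case; the satisfaction case is entirely analogous, with \CHML\ replacing \SHML\ and the verify-variant of \Cref{def:refute-monitors} replacing refutation.

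The first step is to obtain a well-behaved representative of the monitoring hypothesis. By \Cref{thm:anything-goes}, the sound \superuseful\ violation monitor can be replaced by a regular monitor $\mV \in \Mon$ that is sound for $\hV$ and that inherits \superuseful\ violation, since the latter guarantee depends only on the set of traces on which the monitor reaches \no. By \Cref{thm:determinization}, a further replacement by a verdict-equivalent deterministic regular monitor preserves soundness and \superuseful\ violation. \Cref{cor:explicit-deterministic} then yields that $\mV$ is explicit. It remains to argue that every syntactic submonitor of $\mV$ can refute in the sense of \Cref{def:refute-monitors}: for every reachable submonitor $\mVV$ (\ie $\mV \wreduc \mVV$), \superuseful\ violation supplies a finite continuation from $\mVV$ that reaches \no, and a routine induction on the corresponding reduction---following only \rtit{mAct}, \rtit{mSelL} and the unfoldings introduced by \rtit{mRec}---exhibits a syntactic path to \no\ through finitely many unfoldings. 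Submonitors that are structurally unreachable from the root can be pruned beforehand without changing verdict behaviour, so we may indeed assume that \emph{every} submonitor can refute.

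With these preparations, \Cref{lem:ehml-conditions-from-m-to-f} gives $\mSyn{\mV} \in \eHML \cap \SHML$ with every subformula able to refute. Set $\hVV \deftxt \mSyn{\mV} \land \hV$. Then $\hVV \in \SPIHML$ by \Cref{def:p-informative-fragments}. To establish $\hSemF{\hVV} = \hSemF{\hV}$, I invoke the synthesis theorem of \Cref{sec:regular-defs-appendix}, which states that $\mV$ is sound and violation-complete for $\mSyn{\mV}$; hence $\hSemF{\mSyn{\mV}} = \{ \fftV \in \fTrc \mid \neg\rej{\mV,\fftV} \}$. Combining this with the soundness of $\mV$ for $\hV$ gives $\hSemF{\hV} \subseteq \hSemF{\mSyn{\mV}}$, whence the conjunction collapses to $\hSemF{\hV}$, as required.

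The main obstacle is the second part of the normalisation step: lifting the semantic \superuseful\ violation hypothesis to the syntactic ``can refute'' property at every submonitor of $\mV$. Once this bridge is in place, the remainder of the argument is a chaining of results already established in \Cref{sec:regular-defs-appendix} and \Cref{sec:syntactic}, and the novelty of the lemma lies precisely in coupling the operational guarantee of \superuseful\ violation with the structural shape demanded by \SPIHML.
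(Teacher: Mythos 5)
Your proof is correct and matches the paper's own argument essentially step for step: normalise the monitor via \Cref{thm:anything-goes}, \Cref{thm:determinization} and \Cref{cor:explicit-deterministic}, show that every submonitor can refute using the persistent-informativeness hypothesis, apply \Cref{lem:ehml-conditions-from-m-to-f}, and take $\hVV = \mSyn{\mV} \land \hV$, concluding via soundness and violation-completeness of $\mV$ for $\mSyn{\mV}$. The only cosmetic difference is that the paper establishes the ``every submonitor can refute'' step by contradiction while you argue it directly (and explicitly note the pruning of unreachable submonitors); both rest on the same bridge between the operational guarantee and the syntactic shape.
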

\begin{proof}
	We treat the case where the monitor is \superusefulviol, as the case for a \superusefulsat monitor is similar.
	From \Cref{thm:anything-goes}, there is a regular monitor, $\mV$, that is sound for $\hV$ and \superusefulviol.
	%	We can assume that it is violation-\superuseful for \hV, as the case for satisfaction is similar.
	%
	By \Cref{thm:determinization},
	%	From the results in \cite{AceAFI:2017:CIAA},
	we can assume that $\mV$ is deterministic (\Cref{def:determinism}).
	%	, that is, every sum of at least two summands in $\mV$ is of the form $\sum_{a\in A}\mV_a$.
	From \Cref{cor:explicit-deterministic},
	%	By induction on the depth of $\sum_{a\in A}\mV_a$ in $\mV$, and because $\mV$ is  \superusefulviol, we can see that 
	$\mV$ 
	is explicit.
	%	only allows its maximal sums of at least two summands to be the form 
	%	if 
	%	$\sum_{\act\in \Act}\mV_\act$. 
	%	is a maximal sum, then $A = \Act$.
	If there is a submonitor of $\mV$ that cannot refute, then we can inductively on $\mV$ prove that there is a finite trace $\ftV$, for which there is no finite trace $\ftVV$, such that $\mV \wtraS{\ftV\ftVV}\no$, which is a contradiction.
	Therefore, from \Cref{lem:ehml-conditions-from-m-to-f},
	the \SHML formula $\mSyn{\mV}$ that we can synthesize from $\mV$ is in $\eHML$, and all of its subformulae can refute.
	Since $\mV$ is sound for $\hV$ and sound and violation complete for $\mSyn{\mV}$, it is the case that $\fTrc\setminus \hSemF{\mSyn{\mV}} \subseteq \fTrc\setminus \hSemF{\hV}$, and therefore $\mSyn{\mV} \land \hV \in \SPIHML$ and $\hSemF{\mSyn{\mV} \land \hV} = \hSemF{\hV}$.
\end{proof}

\begin{thmbiss}{\ref{thm:supuseful-regular-is-pihml}}
	For $\hV \in \UHML$, 
	$\hV$ is \superusefully monitorable for violation (\resp for satisfaction) if and only if there is some $\hVV \in \SPIHML$ (\resp $\hVV \in \CPIHML$), such that $\hSemF{\hVV} = \hSemF{\hV}$.
%	\qed
\end{thmbiss}

\begin{proof}
	A consequence of \Cref{thm:pihml-to-supuseful-regular,thm:to-supuseful-regular-to-pihml}.
\end{proof}

\end{document}